\def \VersionLong {}
\def\@doi#1{\href{https://doi.org/#1}
      {\ttfamily https://doi.org/#1}\egroup}}
\def\@doi#1{\ttfamily https://doi.org/#1\egroup}}
  \def\doi{\bgroup\catcode`\_=12\relax\@doi}}
\def\@biblabel#1{[#1]}
\newenvironment{ienumerate}
	{\ifdefined\VersionLong\begin{enumerate}\else\begin{inparaenum}[\itshape i\upshape)]\fi}
	{\ifdefined\VersionLong\end{enumerate}\else\end{inparaenum}\fi}
	\newcommand{\LongVersion}[1]{#1}
	\newcommand{\ShortVersion}[1]{}
	\newcommand{\LongVersion}[1]{}
	\newcommand{\ShortVersion}[1]{#1}
\footnotesize\printfield{doi}}
\definecolor{darkblue}{rgb}{0, 0, 0.7}
\crefname{line}{\text{line}}{\text{lines}} %
\tikzstyle{PTA}=[auto, ->, >=stealth']
\tikzstyle{every node}=[initial text=]
\tikzstyle{location}=[rectangle, rounded corners, minimum size=12pt, draw=black, fill=blue!10, inner sep=2pt]
\tikzstyle{invariant}=[draw=black, dotted, inner sep=1pt] %
\tikzstyle{symbstate}=[state, draw,rectangle,inner sep=3pt]
\tikzstyle{infinitesymbstate}=[symbstate, fill=blue!10]
\tikzstyle{urgent}=[fill=yellow, thick, dotted] %
\tikzstyle{private}=[fill=red,thick]
\definecolor{coloract}{rgb}{0.50, 0.70, 0.30}
\definecolor{colorclock}{rgb}{0.4, 0.4, 1}
\definecolor{colordisc}{rgb}{1, 0, 1}
\definecolor{colorloc}{rgb}{0.4, 0.4, 0.65}
\definecolor{colorparam}{rgb}{1, 0.6, 0.0}
\definecolor{loccolor1}{rgb}{1, 0.3, 0.3}
\definecolor{loccolor2}{rgb}{0.3, 1, 0.3}
\definecolor{loccolor3}{rgb}{0.3, 0.3, 1}
\definecolor{loccolor4}{rgb}{1, 0.3, 1}
\newcommand{\styleact}[1]{\ensuremath{\textcolor{coloract}{{#1}}}}
\newcommand{\styleclock}[1]{\ensuremath{\textcolor{colorclock}{{#1}}}}
\newcommand{\styleparam}[1]{\ensuremath{\textcolor{colorparam}{{#1}}}}
\newcommand{\stylebench}[1]{\texttt{#1}}
\newcommand{\rowHeader}{\rowcolor{blue!20}}
\newcommand{\TO}{T.O.}
\newcommand{\cellTO}{\cellcolor{red!40!black}\textcolor{white}{\TO{}}}
	\newcommand{\gennote}[3]{\todo[size=\scriptsize,linecolor=#2,backgroundcolor=#2!25,bordercolor=#2]{#3: #1}\xspace}
	\newcommand{\gennote}[3]{}
	\newcommand{\todoinline}[1]{\mbox{}{\color{red}{\textbf{TODO}\ifx#1\\\else:\ \fi #1}}} %
	\newcommand{\todoinline}[1]{}
 \newcommand{\drop}[1]{}
\newcommand{\init}{_0}
\newcommand{\A}{\ensuremath{\mathcal{A}}}
\newcommand{\Actions}{\Sigma}
\newcommand{\action}{\ensuremath{a}}
\newcommand{\assign}{\leftarrow}
\newcommand{\C}{C}
\newcommand{\Clock}{\mathbb{X}} %
\newcommand{\ClockCard}{H} %
\newcommand{\clock}{x} %
\newcommand{\clockx}{\styleclock{x}} %
\newcommand{\clocky}{\styleclock{y}} %
\newcommand{\clockval}{w} %
\newcommand{\ClocksZero}{\vec{0}}
\newcommand{\compOp}{\bowtie}
\newcommand{\constantBoundLUplus}{{\ensuremath{\overline{\LargestP}}}}
\newcommand{\edge}{e}
\newcommand{\Edges}{E}
\newcommand{\longuefleche}[1]{\stackrel{#1}{\longrightarrow}}
\newcommand{\longueflecheRel}[1]{\stackrel{#1}{\mapsto}}
\newcommand{\flecheRel}{{\rightarrow}}
\newcommand{\guard}{g}
\newcommand{\invariant}{I}
\newcommand{\K}{K}
\newcommand{\KFalse}{\bot}
\newcommand{\LargestC}{{\ensuremath{\textcolor{colorok}{M}}}} %
\newcommand{\loc}{\ensuremath{\ell}} %
\newcommand{\locinit}{\loc\init}
\newcommand{\Loc}{L} %
\newcommand{\LocFinal}{\ensuremath{\Loc_{F}}}
\newcommand{\lterm}{\mathit{lt}}
\newcommand{\maxCg}{\ensuremath{C_\mathit{maxg}}} %
\newcommand{\maxC}{\ensuremath{C_\mathit{max}}} %
\newcommand{\overLargestP}{{\ensuremath{\widehat{\LargestP}}}}
\newcommand{\Param}{\mathbb{P}} %
\newcommand{\param}{p} %
\newcommand{\paramp}{\styleparam{p}} %
\newcommand{\ParamCard}{K} %
\newcommand{\PDomain}{\ensuremath{\mathbb{D}}}
\newcommand{\pval}{v} %
\newcommand{\PZG}{\ensuremath{\mathcal{PZG}}} %
\newcommand{\Runs}{\ensuremath{\mathsf{Runs}}}
\newcommand{\sinit}{s\init} %
\newcommand{\somelocs}{T} %
\newcommand{\state}{\ensuremath{s}} %
\newcommand{\States}{S} %
\newcommand{\Succ}{\mathsf{Succ}}
\newcommand{\symtree}{\ensuremath{T^\infty}}
\newcommand{\timelapse}[1]{#1^\nearrow}
\newcommand{\treeprefix}{\ensuremath{\mathit{Tree}}}
\newcommand{\varproblem}{\ensuremath{\varphi}}
\newcommand{\Problem}{\ensuremath{\mathcal{P}}}
\newcommand{\varrun}{\rho} %
\newcommand{\SCGuards}{\mathbb{G}} %
\newcommand{\LargestP}{{\ensuremath{\textcolor{colorok}{N}}}} %
\newcommand{\LargestCL}{{\ensuremath{\textcolor{colorok}{L}}}} %
\newcommand{\LargestCU}{{\ensuremath{\textcolor{colorok}{U}}}} %
\newcommand{\cylinder}[1]{\ensuremath{\textsf{Cyl}_{#1}}}
\newcommand{\Ext}[2]{\ensuremath{\textsf{Ext}^{#1}_{#2}}}
\newcommand{\valuateLU}[1]{\ensuremath{\overline{#1}}}
\newcommand{\setN}{\ensuremath{\mathbb N}}
\newcommand{\setQ}{\ensuremath{{\mathbb Q}}}
\newcommand{\setQplus}{\ensuremath{\setQ_{+}}} %
\newcommand{\setR}{\ensuremath{\mathbb R}}
\newcommand{\setRgeqzero}{\ensuremath{\setR_{\geq 0}}}
\newcommand{\setRplus}{\ensuremath{\setR_{+}}} %
\newcommand{\setZ}{\ensuremath{\mathbb Z}}
\newcommand{\styleSymbStatesSet}[1]{\ensuremath{\mathbf{#1}}}
\newcommand{\Passed}{\styleSymbStatesSet{P}}
\newcommand{\symbstate}{\ensuremath{\styleSymbStatesSet{s}}} %
\newcommand{\SymbState}{\ensuremath{\styleSymbStatesSet{S}}} %
\newcommand{\symbstateinit}{\symbstate\init} %
\newcommand{\symbtrans}{{\Rightarrow}} %
\newcommand{\resets}{R}
\newcommand{\projectP}[1]{\ensuremath{#1{\downarrow_{\Param}}}}
\newcommand{\reset}[2]{\ensuremath{[#1]_{#2}}}
\newcommand{\valuate}[2]{\ensuremath{#2(#1)}}
\newcommand{\wv}[2]{#1|#2} %
\newcommand{\stylealgo}[1]{\ensuremath{\mathsf{#1}}}
\newcommand{\EEF}{\stylealgo{EEF}}
\newcommand{\EEFbar}{\ensuremath{\overline{\mathsf{E}}\stylealgo{EF}}}
\newcommand{\EEFhat}{\ensuremath{\widehat{\mathsf{E}}\stylealgo{EF}}}
\newcommand{\EEFp}{\ensuremath{\stylealgo{pEEF}}}
\newcommand{\EEFvect}{\ensuremath{\vec{\mathsf{E}}\stylealgo{EF}}}
\newcommand{\EF}{\stylealgo{EF}} %
\newcommand{\propEF}{\stylebench{reach}}
\newcommand{\propCycle}{\stylebench{liveness}}
	\newcommand{\propAGnot}{\stylebench{safety}}
	\newcommand{\propAccCycle}{\stylebench{acc liveness}}
	\newcommand{\propAGnot}{\propEF{}}
	\newcommand{\propAccCycle}{\propCycle{}}
\newcommand{\imitator}{\textsf{IMITATOR}}
\newcommand{\uppaal}{\textsc{Uppaal}}
 	\definecolor{colorok}{RGB}{80,80,150}
	\definecolor{colorok}{RGB}{0,0,0}
\newcommand{\eg}{\textcolor{colorok}{e.g.,}\xspace}
\newcommand{\etal}{\textcolor{colorok}{\emph{et~al.}}\xspace}
\newcommand{\ie}{\textcolor{colorok}{i.e.,}\xspace}
\newcommand{\suchthat}{\textcolor{colorok}{s.t.}\xspace}
\def\orcidID#1{\smash{\href{https://orcid.org/#1}{\protect\raisebox{-1.25pt}{\protect\includegraphics{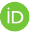}}}}}
\title{Zone extrapolations in parametric timed automata\todo{This is the version with comments. To disable comments, comment out line~3 in the \LaTeX{} source.}\thanks{%
	\LongVersion{%
		This is the author (and extended) version of the manuscript of the same name published in the proceedings of the 14th {NASA} Formal Methods Symposium (\href{https://nfm2022.caltech.edu/}{NFM 2022}).
		The final authenticated version is available at %
			\href{https://www.springer.com}{\nolinkurl{springer.com}}.
	}%
	This work is partially supported by the ANR-NRF French-Singaporean research program \href{https://www.loria.science/ProMiS/}{ProMiS} (ANR-19-CE25-0015).
}
}
\author{%
	Johan Arcile%
${}^{\text{\Letter}}$
\and
	\'Etienne Andr\'e%
	\orcidID{0000-0001-8473-9555}
}
\institute{Université de Lorraine, CNRS, Inria, LORIA, F-54000 Nancy, France}
\begin{document}
\sloppy

\pagestyle{plain}

\maketitle

\thispagestyle{plain}

\begin{abstract}
	Timed automata (TAs) are an efficient formalism to model and verify systems with hard timing constraints, and concurrency.
	While TAs assume exact timing constants with infinite precision, parametric TAs (PTAs) leverage this limitation and increase their expressiveness, at the cost of undecidability.
	A practical explanation for the efficiency of TAs is zone extrapolation, where clock valuations beyond a given constant are considered equivalent.
	This concept cannot be easily extended to PTAs, due to the fact that parameters can be unbounded. %
	In this work, we propose several definitions of extrapolation for PTAs based on the $\LargestC$-extrapolation, and we study their correctness.
	Our experiments show an overall decrease of the computation time and, most importantly, allow termination of some previously unsolvable benchmarks.
	
	\keywords{timed automata \and abstraction \and parameter synthesis \and reachability \and liveness \and \imitator{}}
\end{abstract}
\section{Introduction}\label{section:introduction}

Timed automata (TAs)~\cite{AD94} represent an efficient and expressive formalism to model and verify systems mixing hard timing constraints with concurrency, being one of the most expressive decidable formalisms with timing constraints.
However, TAs assume exact timing constants with infinite precision, which may not be realistic in practice; in addition, they assume full knowledge of the model, preventing verification at an early development phase.
Parametric timed automata (PTAs) leverage these limitations, by allowing unknown timing constants in the model---at the cost of undecidability: the mere emptiness of the parameter valuations set for which a given (discrete) location is reachable (called \emph{reachability emptiness}) is undecidable~\cite{AHV93}.

A practical explanation for the efficiency of TAs for reachability properties is \emph{(zone) extrapolation}, where clock valuations beyond a given constant are considered to be equivalent.
Since the seminal work~\cite{AD94}, several works improved the quality and efficiency of zone extrapolation, by considering different constants per clock~\cite{BBFL03,BBLP06} or extending extrapolation to liveness properties~\cite{Tripakis09,Li09}.
This concept cannot be easily extended to PTAs, due to the fact that parameters can be unbounded, 
\LongVersion{or that one of their bound may converge towards a constant (for example $\frac{1}{n} \leq p$, with $n$ growing without bound.).}%
\ShortVersion{or converge toward infinitely small values.}

\LongVersion{
\subsection{Related works}
}

\paragraph{Extrapolation in TAs}
Daw and Tripakis first introduced the \emph{extrapolation} abstraction in \cite{DT98} as a mean to obtain a finite simulation of the state space of TAs.
The extrapolation abstraction preserves reachability properties and is based on the largest constant appearing in any state of the model, which can be computed syntactically from the constants present in its guards and invariants.
In~\cite{BBFL03} Behrmann \etal{} redefine this abstraction with individual clock bounds (\ie{} the largest constant is computed for each clock) and will later refer to it in~\cite{BBLP06} as the $\LargestC$-extrapolation.
\LongVersion{%
	In this latter work~\cite{BBLP06}, the $\LargestC$-extrapolation is extended to a coarser abstraction based on two constants for each clock: its greater lower bound and its greater upper bound.
This new form of extrapolation is referred to as the $\LargestCL\LargestCU$-extrapolation and still preserves reachability properties.
}%
Experiments are performed using \uppaal{}~\cite{LPY97}.
\LongVersion{In 2009, }Tripakis~\cite{Tripakis09} showed that the $\LargestC$-extrapolation is correct for checking emptiness of timed Büchi automata, \ie{} checking for accepting cycles in~TAs.
\LongVersion{The same year, Li~\cite{Li09} proves that this result holds true for the $\LargestCL\LargestCU$-extrapolation on TAs.}

\paragraph{Parameter synthesis for PTAs}
Most non-trivial decision problems are undecidable for PTAs (see \cite{Andre19STTT} for a survey).
As a consequence exact synthesis is usually out of reach, except for small numbers of clocks or of parameters (see, \eg{} \cite{AHV93,BBLS15,BO17}).
For general subclasses (without bound on the number of variables), exact synthesis results are very scarce.
Some fit in the subclasses of L/U-PTAs\footnote{While ``L/U'' means in both cases ``lower-upper (bound)'', L/U-PTAs are a completely different concept from LU-extrapolation for (P)TAs.}~\cite{HRSV02}, and notably in U-PTAs (resp.\ L-PTAs)~\cite{BlT09}, where each timing parameter is constrained to be always compared to a clock as an upper (resp.\ lower) bound, \ie{} of the form $\clock \leq \param$ (resp.\ $\param \leq \clock$).
The only known situations when exact reachability-synthesis (\ie{} synthesis of all parameter valuations for which a given location is reachable) can be achieved for subclasses of PTAs are
\begin{ienumerate}%
	\item \LongVersion{reachability-synthesis }for U-PTAs (resp.\ L-PTAs) over \emph{integer-valued} timing parameters~\cite{BlT09};
	\item \LongVersion{reachability-synthesis }for the whole PTA class, over \emph{bounded and integer-valued} parameters (which reduces to TAs)~\cite{JLR15};
		and
	\item \LongVersion{reachability-synthesis }for reset-update-to-parameters-PTAs (``R-U2P-PTAs''), in which all clocks must be updated (possibly to a parameter) whenever a clock is compared to a parameter in a guard~\cite{ALR21}.
\end{ienumerate}%
On the negative side, even L/U-PTAs show negative results for synthesis: while reachability-emptiness is decidable for L/U-PTAs~\cite{HRSV02}, reachability-synthesis is intractable (its result cannot be represented using a finite union of polyhedra)~\cite{JLR15}; and even in the very restricted subclass of U-PTAs without invariant, TCTL-emptiness (\ie{} \LongVersion{the }emptiness of the parameter valuations set for which a TCTL formula is valid) is undecidable~\cite{ALR18FORMATS}.

We performed a first attempt to define an extrapolation for PTAs in~\cite{ALR15}: we adapted the $\LargestC$-extrapolation to the context of PTAs, although restricted to \emph{bounded} parameter domains only.
No implementation was provided.
In~\cite{BBBC16}, the authors also define an extrapolation very similar to~\cite{ALR15}.
Compared to~\cite{ALR15}, we reuse here some of the definitions of~\cite{ALR15}, and we significantly extend the definition of extrapolations; we also consider several subclasses of models, as well as liveness properties; we also perform an experimental evaluation.

\LongVersion{%
\subsection{Contributions}
}\ShortVersion{\paragraph{Contributions}}

We propose several definitions of extrapolation for PTAs, and study their correctness.
In the context of bounded parameter domains, we extend the parametric $\LargestC$-extrapolation from~\cite{ALR15} to individual clock bounds.
Those extrapolations are combined with results from~\cite{BlT09} to cope with the issue raised by unbounded parameters.
We notably consider variants of the U-PTAs and L-PTAs.
We show that, on the subclass of (unbounded) PTAs on which they apply, those abstractions preserve not only reachability-synthesis but also cycle-synthesis (``liveness'').
We perform experiments using \LongVersion{the parametric timed model checker }\imitator{}~\cite{Andre21}, including on the most general class (rational-valued, possibly unbounded parameters).
With the aforementioned negative theoretical results in mind, our evaluation focuses on evaluating the speed enhancement, and the increase of termination chances for our case studies.
We show that, overall, extrapolation decreases the verification time and, most importantly, can effectively solve previously unsolvable benchmarks.

\paragraph{Outline}
We introduce the necessary preliminaries in \cref{section:preliminaries}.
The $\LargestC$-extrapolation in the bounded context (partially reusing results from~\cite{ALR15}) is studied in \cref{section:M-Ex_bounded}.
\cref{section:M-Ex_unbounded} adapts the $\LargestC$-extrapolation to the unbouded context for reachability properties.
Liveness %
properties are discussed in \cref{section:beyond}.
Finally, \cref{section:experiments} benchmarks the abstractions, and \cref{section:conclusion} concludes the paper.

\section{Preliminaries}\label{section:preliminaries}

\LongVersion{%
\subsection{Clocks, parameters and guards}
}

Throughout this paper, we assume a set~$\Clock = \{ \clock_1, \dots, \clock_\ClockCard \} $ of \emph{clocks}, \ie{} real-valued variables that evolve at the same rate.
A clock valuation is a function $\clockval : \Clock \rightarrow \setRplus$.
\LongVersion{%
	We identify a clock valuation $\clockval$ with the \emph{point} $(\clockval(\clock_1), \dots, \clockval(\clock_\ClockCard))$.
}%
We write $\ClocksZero$ for the clock valuation assigning $0$ to all clocks.
Given $d \in \setRplus$, $\clockval + d$ denotes the valuation \suchthat{} $(\clockval + d)(\clock) = \clockval(\clock) + d$, for all $\clock \in \Clock$.
Given $\resets \subseteq \Clock$, we define the \emph{reset} of a valuation~$\clockval$, denoted by $\reset{\clockval}{\resets}$, as follows: $\reset{\clockval}{\resets}(\clock) = 0$ if $\clock \in \resets$, and $\reset{\clockval}{\resets}(\clock)=\clockval(\clock)$ otherwise.

We assume a set~$\Param = \{ \param_1, \dots, \param_\ParamCard \} $ of \emph{parameters}, \ie{} unknown constants.
A parameter {\em valuation} $\pval$ is a function $\pval : \Param \rightarrow \setQ$.
\LongVersion{%
	We identify a valuation $\pval$ with the \emph{point} $(\pval(\param_1), \dots, \pval(\param_\ParamCard))$.
}%
Given two valuations $\pval_1, \pval_2$, we write $\pval_1 \geq \pval_2$ whenever $\forall \param \in \Param$, $\pval_1(\param) \geq \pval_2(\param)$.

In the following, we assume ${\compOp} \in \{<, \leq, =, \geq, >\}$.
A \emph{constraint}~$\C$ over $\Clock \cup \Param$ is a conjunction of inequalities of the form $\lterm \compOp 0$, where $\lterm$ is a linear term over $\Clock \cup \Param$ of the form $\sum_{1 \leq i \leq \ClockCard} \alpha_i \clock_i + \sum_{1 \leq j \leq \ParamCard} \beta_j \param_j + d$, with $\clock_i \in \Clock$, $\param_j \in \Param$, and $\alpha_i, \beta_j, d \in \setZ$.
We also refer to constraints as their geometrical representation, \ie{} of \emph{convex polyhedron}.
\LongVersion{

}%
We denote by~$\KFalse$ the constraint over~$\Param$ corresponding to the empty set of parameter valuations.

Given a parameter valuation~$\pval$, $\valuate{\C}{\pval}$ denotes the constraint over~$\Clock$ obtained by replacing each parameter~$\param$ in~$\C$ with~$\pval(\param)$.
Likewise, given a clock valuation~$\clockval$, $\valuate{\valuate{\C}{\pval}}{\clockval}$ denotes the expression obtained by replacing each clock~$\clock$ in~$\valuate{\C}{\pval}$ with~$\clockval(\clock)$.
We say that $\pval$ \emph{satisfies}~$\C$, denoted by $\pval \models \C$, if the set of clock valuations satisfying~$\valuate{\C}{\pval}$ is nonempty.
Given a parameter valuation $\pval$ and a clock valuation $\clockval$, 
we denote by $\wv{\clockval}{\pval}$ the valuation over $\Clock\cup\Param$ such that for all clocks $\clock$, 
$\valuate{\clock}{\wv{\clockval}{\pval}}=\valuate{\clock}{\clockval}$ and for all parameters $\param$, 
$\valuate{\param}{\wv{\clockval}{\pval}}=\valuate{\param}{\pval}$.
We use the notation $\wv{\clockval}{\pval} \models \C$ to indicate that $\valuate{\valuate{\C}{\pval}}{\clockval}$ evaluates to true.
We say that $\C$ is \emph{satisfiable} if $\exists \clockval, \pval \text{ s.t.\ } \wv{\clockval}{\pval} \models \C$.

We define the \emph{time elapsing} of~$\C$, denoted by $\timelapse{\C}$, as the constraint over $\Clock$ and $\Param$ obtained from~$\C$ by delaying all clocks by an arbitrary amount of time.
That is, \(\wv{\clockval'}{\pval} \models \timelapse{\C} \text{ iff } \exists \clockval : \Clock \to \setRplus, \exists d \in \setRplus \text { s.t. } \wv{\clockval}{\pval} \models \C \land \clockval' = \clockval + d \text{.}\)

Given $\resets \subseteq \Clock$, we define the \emph{reset} of~$\C$, denoted by $\reset{\C}{\resets}$, as the constraint obtained from~$\C$ by resetting the clocks in~$\resets$, and keeping the other clocks unchanged.
We denote by $\projectP{\C}$ the projection of~$\C$ onto~$\Param$, \ie{} obtained by eliminating the variables not in~$\Param$ (\eg{} using Fourier-Motzkin~\cite{Schrijver86}).

A \emph{simple clock guard} is an inequality of the form $\clock \compOp \sum_{1 \leq i \leq \ParamCard} \alpha_i \param_i + z$, with $\param_i \in \Param$, and $\alpha_i, z \in \setZ$.
A \emph{clock guard} is a constraint over $\Clock \cup \Param$ defined by a conjunction of simple clock guards.
Given a clock guard~$\guard$, we write~$\clockval\models\pval(\guard)$ if the expression obtained by replacing each~$\clock$ with~$\clockval(\clock)$ and each~$\param$ with~$\pval(\param)$ in~$\guard$ evaluates to true.
\LongVersion{%
	We do not consider diagonal constraints (\ie{} simple clock guards of the form $\clock - \clock' \compOp …$)\ in this work.
}

\LongVersion{%
\subsection{Parametric timed automata}
}%
\ShortVersion{\paragraph{PTAs}}
Parametric timed automata (PTAs) extend timed automata with parameters within guards and invariants in place of integer constants~\cite{AHV93}.

\begin{definition}[PTA]\label{def:PTA}
	A PTA $\A$ is a tuple \mbox{$\A = (\Actions, \Loc, \locinit, \LocFinal, \Clock, \Param, \PDomain, \invariant, \Edges)$}, where:
	\begin{ienumerate}
		\item $\Actions$ is a finite set of actions,
		\item $\Loc$ is a finite set of locations,
		\item $\locinit \in \Loc$ is the initial location,
		\item $\LocFinal \subseteq \Loc$ is a set of accepting locations,
		\item $\Clock$ is a finite set of clocks,
		\item $\Param$ is a finite set of parameters,
		\item $\PDomain : \Param \rightarrow (\setQ \cup \{ - \infty \}) \times (\setQ \cup \{ + \infty \}) $ is the parameter domain,
		\item $\invariant$ is the invariant, assigning to every $\loc\in \Loc$ a clock guard $\invariant(\loc)$,
		\item $\Edges$ is a finite set of edges  $\edge = (\loc,\guard,\action,\resets,\loc')$
		where~$\loc,\loc'\in \Loc$ are the source and target locations, $\action \in \Actions$, $\resets\subseteq \Clock$ is a set of clocks to be reset, and $\guard$ is a clock guard.
	\end{ienumerate}
\end{definition}

Let $\SCGuards(\A)$ denote the set of all simple clock guards of the PTA~$\A$, \ie{} all simple clock guards being a conjunct within a guard or an invariant of~$\A$.
Given a clock $\clock \in \Clock$, we denote by $\SCGuards^\clock(\A) \subseteq \SCGuards(\A)$ the set of simple clock guards where $\clock$ appears, \ie{} is bound by a non-0 coefficient.
A clock~$\clock$ of~$\A$ is said to be a \emph{parametric clock} if it is compared to at least one parameter (with a non-0 coefficient) in at least one guard of $\SCGuards^\clock(\A)$.

The parameter domain of a PTA is the admissible range of the parameters.
Given~$\param$, given $\PDomain(\param) = (b^-, b^+)$, $\PDomain^-(\param)$ denotes $b^-$ while $\PDomain^+(\param)$ denotes $b^+$.
The admissible valuations for~$\param$ are therefore $[\PDomain^-(\param) , \PDomain^+(\param)]$ (the domain is \emph{closed} unless on the side of an infinite bound).
A \emph{bounded} parameter domain assigns to each parameter a minimum \LongVersion{rational bound }and a maximum rational bound.
In that case,
$\PDomain^-(\param_i) > -\infty$ and $\PDomain^+(\param_i) < +\infty$.
A bounded parameter domain can be seen as a hyperrectangle in $\ParamCard$ dimensions.
Any parameter that is not bounded is \LongVersion{called an }\emph{unbounded}\LongVersion{ parameter}.
Note that an unbounded parameter can still have a lower bound or an upper bound $\in \setQ$.
\LongVersion{

\begin{definition}[bounded PTA]\label{def:boundedPTA}
	A \emph{bounded PTA} is a PTA the parameter domain of which is bounded.
	Otherwise, it is \emph{unbounded}.
\end{definition}
}%
\ShortVersion{A PTA is \emph{bounded} if its parameter domain is bounded; otherwise, it is \emph{unbounded}.}

Given a parameter valuation~$\pval$, we denote by $\valuate{\A}{\pval}$ the non-parametric structure where all occurrences of a parameter~$\param_i$ have been replaced by~$\pval(\param_i)$.
We denote as a \emph{timed automaton} any structure $\valuate{\A}{\pval}$, by assuming a rescaling of the constants: by multiplying all constants in $\valuate{\A}{\pval}$ by the least common multiple of their denominators, we obtain an equivalent (integer-valued) TA\LongVersion{, as defined in}~\cite{AD94}.

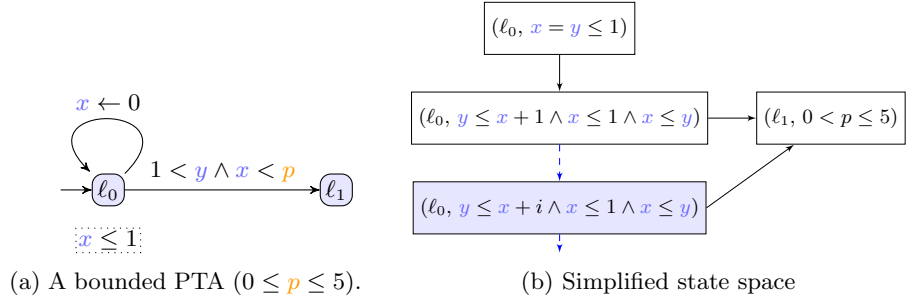
\begin{figure} [tb]
	\centering
	\small
	\begin{subfigure}[b]{0.38\textwidth}
		\centering
		\begin{tikzpicture}[PTA, thin]

			\node[location, initial] at(0,0) (l0) {$\loc_0$};
			\node[location] at(3,0) (l1) {$\loc_1$};
			\node [invariant, below] at (0,-0.5) {$\clockx \leq 1$};

			\path
				(l0) edge[loop] node [above] {$\clockx \assign 0$} (l0)
				(l0) edge[] node {$1< \clocky \land \clockx < \paramp$}  (l1)
			;
		\end{tikzpicture}
		\caption{A bounded PTA ($0 \leq \paramp \leq 5$).}
		\label{figure:bounded_pta:pta}
	\end{subfigure}
		\hfill
	\begin{subfigure}[b]{0.6\textwidth}
		\centering
		\scalebox{.85}{
		\begin{tikzpicture}[>=latex', xscale=1.2, yscale=.7,every node/.style={scale=1}]
		\node[symbstate] at (0,0) (c0) {($\loc_0$, $\clockx = \clocky \leq 1$)};
		\node[symbstate] at (0,-2) (c1) {($\loc_0$, $\clocky \leq \clockx + 1 \land \clockx \leq 1 \land \clockx \leq \clocky$)};
		\node[symbstate] at (3.5,-2) (c1') {($\loc_1$, $0 < \param \leq 5$)};
		\node[infinitesymbstate] at (0,-4) (ci) {($\loc_0$, $\clocky \leq \clockx + i \land \clockx \leq 1 \land \clockx \leq \clocky$)};
		\draw[->] (c0) -- (c1);
		\draw[->] (c1) -- (c1');
		\draw[->, blue,dashed] (c1) -- (ci);
		\draw[->] (ci.east) -- (c1');
		\draw[->, blue,dashed] (ci) -- (0,-5);
		\end{tikzpicture}
		}
		\caption{Simplified state space}
		\label{figure:bounded_pta:ss}
	\end{subfigure}
	\caption{
		\ShortVersion{A bounded PTA and its infinite state space.}
		\LongVersion{Example of a bounded PTA generating an infinite state space. Blue states are a representation of an infinite sequence of states where variable $i$ corresponds to the number of times the looping transition on $\loc_0$ was taken.}
		}
	\label{figure:bounded_pta}
\end{figure}

\begin{example}
	\cref{figure:bounded_pta:pta} displays \LongVersion{the graphical representation of }a bounded PTA.
	We have
		$\SCGuards(\A) = \{ \clockx \leq 1, 1< \clocky, \clockx < \paramp \}$,
		$\SCGuards^{\clockx}(\A) = \{ \clockx \leq 1, \clockx < \paramp \}$,
		and
		$\SCGuards^{\clocky}(\A) = \{ 1< \clocky \}$.
	The valuation of\LongVersion{ parameter}~$\param$ can be any rational value in $[0,5]$\LongVersion{, hence an infinite number of possible parameter valuations}.
	Therefore, this PTA can be seen as the abstract representation for an infinite number of TAs.
\end{example}

\LongVersion{%
\subsubsection{Concrete semantics of TAs}

Let us now recall the concrete semantics of TAs.
}
\begin{definition}[Semantics of a TA]
	Given a PTA $\A = (\Actions, \Loc, \locinit, \LocFinal, \Clock, \Param, \PDomain, \invariant, \Edges)$ and a parameter valuation~\(\pval\), 
	the concrete semantics of $\valuate{\A}{\pval}$ is given by the timed transition system $(\States, \sinit, \flecheRel)$, with
	\begin{itemize}
		\item $\States = \{ (\loc, \clockval) \in \Loc \times \setRgeqzero^\ClockCard \mid \clockval \models \valuate{\invariant(\loc)}{\pval} \}$,
		\LongVersion{\item }$\sinit = (\locinit, \ClocksZero) $,
		\item  $\flecheRel$ consists of the (continuous) delay and discrete transition relations:
		\begin{itemize}
			\item delay transitions: $(\loc, \clockval) \longueflecheRel{d} (\loc, \clockval+d)$, with $d \in \setRgeqzero$, if $\forall d' \in [0, d], (\loc, \clockval+d') \in \States$;
			\item discrete transitions: $(\loc, \clockval) \longueflecheRel{\edge} (\loc',\clockval')$, if $(\loc, \clockval) , (\loc',\clockval') \in \States$, and there exists $\edge = (\loc,\guard,\action,\resets,\loc') \in \Edges$, such that $\clockval'= \reset{\clockval}{\resets}$, and $\clockval\models\pval(\guard$).
		\end{itemize}
	\end{itemize}
\end{definition}

Moreover, we write $(\loc, \clockval)\longuefleche{(d, \edge)} (\loc',\clockval')$ for a combination of a delay and discrete transition if $\exists  \clockval'' :  (\loc, \clockval) \longueflecheRel{d} (\loc, \clockval'') \longueflecheRel{\edge} (\loc',\clockval')$.

Given a TA~$\valuate{\A}{\pval}$ with concrete semantics $(\States, \sinit, \flecheRel)$, we refer to the states of~$\States$ as the \emph{concrete states} of~$\valuate{\A}{\pval}$.
A \emph{run} of~$\valuate{\A}{\pval}$ is an alternating sequence of concrete states of $\valuate{\A}{\pval}$ and pairs of edges and delays starting from the initial state $\sinit$ 
and is of the form $\state_0, (d_0, \edge_0), \state_1, \cdots \state_i, (d_i, \edge_i), \cdots$ with $i = 0, 1, \dots$, $\edge_i \in \Edges$, $d_i \in \setRgeqzero$ and $\state_i \longuefleche{(d_i, \edge_i)} \state_{i+1}$.
The set of all (finite or infinite) runs of a TA~$\valuate{\A}{\pval}$ is $\Runs(\valuate{\A}{\pval})$.
Given a concrete state $\state = (\loc, \clockval)$, we say that $\state$ is reachable in~$\valuate{\A}{\pval}$ (and by extension that $\loc$ is reachable, or that $\valuate{\A}{\pval}$ visits $\loc$) if $\state$ appears in a run of $\valuate{\A}{\pval}$.
An infinite run is \emph{accepting} if it visits infinitely often (at least) one location~$\loc \in \LocFinal$.
\subsubsection{Symbolic semantics of PTAs}\label{ss:symbolic}

Let us now recall the symbolic semantics of PTAs (see \eg{} \cite{HRSV02,ACEF09}).
\LongVersion{

\begin{definition}[Symbolic state]
	}A symbolic state is a pair $(\loc, \C)$ where $\loc \in \Loc$ is a location, and $\C$ is a constraint over $\Clock \cup \Param$ called its associated \emph{parametric zone}.
\LongVersion{\end{definition}}
\begin{definition}[Symbolic semantics]\label{def:PTA:symbolic}
	Given a PTA $\A = (\Actions, \Loc, \locinit, \LocFinal, \Clock, \Param, \PDomain, \invariant, \Edges)$,
	the symbolic semantics of~$\A$ is the labeled transition system called \emph{parametric zone graph}
	$ \PZG = ( \Edges, \SymbState, \symbstateinit, \symbtrans )$, with
	\begin{itemize}
		\item $\SymbState = \{ (\loc, \C) \mid \C \subseteq \invariant(\loc) \}$, 
		\LongVersion{\item }$\symbstateinit = \big(\locinit, \timelapse{(\bigwedge_{1 \leq i\leq\ClockCard}\clock_i=0)} \land \invariant(\loc_0) \land \bigwedge_{1 \leq j\leq \ParamCard} \PDomain^-(\param_j) \leq \param_j \leq \PDomain^+(\param_j) \big)$,
				and
		\item $\big((\loc, \C), \edge, (\loc', \C')\big) \in \symbtrans $ if $\edge = (\loc,\guard,\action,\resets,\loc') \in \Edges$ and
			\(\C' = \timelapse{\big(\reset{(\C \land \guard)}{\resets}\land \invariant(\loc')\big )} \land \invariant(\loc') \text{, with $\C'$ satisfiable.}\) 
	\end{itemize}
\end{definition}

That is, in the parametric zone graph, nodes are symbolic states, and arcs are labeled by \emph{edges} of the original PTA.
Given  $(\symbstate, \edge, \symbstate') \in \symbtrans $, we write $\symbstate' = \Succ(\symbstate, \edge)$.
\LongVersion{

}%
Given a concrete state $\state = (\loc, \clockval)$ and a symbolic state $\symbstate = (\loc', \C)$, we write
$\state \in \symbstate$ whenever $\loc = \loc'$ and $\clockval \models \C$.

\begin{example}
	\cref{figure:bounded_pta:ss} displays the parametric zone graph of the PTA in \cref{figure:bounded_pta:pta}.
	Blue states represent an infinite sequence ($i$ being the number of times the looping transition was taken).
	\LongVersion{(We assume all clocks and parameters to be non-negative and, for sake of brevity, constraints of the form $\clock \geq 0$ may be omitted.)}
\end{example}
\subsubsection{Computation problems}

Given a class of decision problems \Problem{} (reachability, \LongVersion{liveness, }etc.), we consider the problem of synthesizing the set (or part of it) of parameter valuations $\pval$ such that $\valuate{\A}{\pval}$ satisfies $\varproblem$.
Here, we mainly focus on reachability (\ie{} ``does there exist a run that reaches some given location?'') and liveness (\ie{} ``does there exist a run that visits a given location infinitely often?'').
\section{$\LargestC$- and $\vec{\LargestC}$-extrapolation for bounded PTAs} \label{section:M-Ex_bounded}
\subsection{Recalling $\LargestC$-extrapolation}\label{ss:Mextrapolation}

In this subsection, we recall some results from~\cite{BBLP06,ALR15}, where the classical ``$k$-extrapolation'' used for the zone-abstraction of TAs is adapted to PTAs.
While this part is not clearly a contribution of the current manuscript, we redefine some concepts from~\cite{ALR15}, and provide several original examples.

\LongVersion{%
\subsubsection{Maximal constant of a bounded PTA}
}
\LongVersion{First, let us formally define the \emph{maximal constant} of a bounded PTA.}
The maximal constant~$\LargestC$ is the maximum value that can appear in the guards and invariants of the PTA.
When those constraints are parametric expressions, we compute the maximum value that the expression can take over any parameter valuation within the (bounded) parameter domain~$\PDomain$ (this maximal value is unique since expressions are linear).

Given a simple clock guard $\guard$ of the form $\clock \compOp \sum_{1 \leq i \leq \ParamCard} \alpha_i \param_i + z$ %
we define $\maxCg(\guard) = \sum_{1 \leq i \leq \ParamCard} \alpha_i \gamma_i + z$ where
\begin{ienumerate}
	\item $\gamma_i = \PDomain^-(\param_i)$ if $\alpha_i < 0$,
	\item $\gamma_i = \PDomain^+(\param_i)$ if $\alpha_i > 0$, and
	\item $\gamma_i = 0$ otherwise.
\end{ienumerate}

\begin{example}
	Consider the simple clock guard~$\guard : \clock\leq 2\param_1 -\param_2+1$ and $\param_1\in [2,5]$, and $\param_2\in [-3,4]$; then $\maxCg(\guard) = 2 \times 5 - (-3) + 1 = 14$.
\end{example}
\begin{definition}[Maximal constant]\label{def:maximalC}
	Given a bounded PTA~$\A$, for any clock~$\clock \in \Clock$,
	the maximal constant for clock~$\clock$ is
	\(\maxC^{\clock}(\A) = \max_{\guard \in \SCGuards^{\clock}(\A)} \maxCg(\guard)\)
	furthermore, the maximal constant of the PTA is
	\(\maxC(\A) = \max_{\guard \in \SCGuards(\A)} \maxCg(\guard)\text{.}\)
\end{definition}
\begin{example}
	\ShortVersion{Consider again \cref{figure:bounded_pta:pta}.
	Then, $\maxC^{\clockx}(\A)=5$ and $\maxC^{\clocky}(\A)=1$.}
	\LongVersion{Consider again \cref{figure:bounded_pta:pta} (recall that $0 \leq \paramp \leq 5$).
	Then, $\maxC^{\clockx}(\A) = 5$, $\maxC^{\clocky}(\A) = 1$ and $\maxC(\A) = 5$.}
\end{example}

\LongVersion{%
\subsubsection{Bisimulation and largest constant in TAs}
}
Let us recall from~\cite{BBLP06} the notion of bisimulation based on\LongVersion{ the maximal constant}~$\LargestC$:

\begin{lemma}[{\cite[Lemma~1]{BBLP06}}]\label{lemma:bisim}
	Let~$\A$ be a TA.
	Given clock~$\clock$, let $\LargestC(\clock)$ be an integer constant greater than or equal to $\maxC^{\clock}(\A)$.
	Let $\clockval, \clockval'$ be two clock valuations.
	Let $\equiv_\LargestC$ be the relation defined as $\clockval\equiv_\LargestC \clockval'$ iff $\forall \clock\in \Clock$: either $\valuate{\clock}{\clockval}=\valuate{\clock}{\clockval'}$ or ($\valuate{\clock}{\clockval} > \LargestC(\clock)$ and $\valuate{\clock}{\clockval'}>\LargestC(\clock)$).
The relation ${\cal R} = \big\{\big((\loc, \clockval), (\loc, \clockval') \big) \mid \clockval \equiv_\LargestC \clockval' \big\}$ is a bisimulation relation.
\end{lemma}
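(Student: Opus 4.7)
The plan is to verify the bisimulation property directly from the definition: for any two states $(\loc,\clockval)$ and $(\loc,\clockval')$ with $\clockval \equiv_\LargestC \clockval'$, show that every delay transition and every discrete transition from one can be matched by a transition of the same type from the other, leading to states that remain in $\mathcal{R}$. Symmetry of $\mathcal{R}$ is immediate from the definition of $\equiv_\LargestC$, so it suffices to prove the ``forward'' direction.

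The key observation that makes everything work is uniform: every constant appearing in a simple clock guard involving a clock~$\clock$ (whether in a guard of some edge or in an invariant) is bounded by $\maxC^{\clock}(\A) \leq \LargestC(\clock)$. Therefore, whenever $\clockval(\clock) > \LargestC(\clock)$ and $\clockval'(\clock) > \LargestC(\clock)$, both valuations lie strictly above every relevant constant for~$\clock$, so they satisfy exactly the same simple clock guards on~$\clock$ (the only ones they can satisfy being of the form $\clock \geq c$ or $\clock > c$ with $c \leq \LargestC(\clock)$). When instead $\clockval(\clock)=\clockval'(\clock)$, the two valuations trivially agree on every simple clock guard involving~$\clock$. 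This gives a single uniform principle: \emph{$\equiv_\LargestC$-equivalent valuations satisfy the same guards and invariants of~$\A$}.

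For the discrete case, I match the transition $(\loc,\clockval) \longueflecheRel{\edge} (\loc',\reset{\clockval}{\resets})$ with edge $\edge = (\loc,\guard,\action,\resets,\loc')$ by the transition $(\loc,\clockval') \longueflecheRel{\edge} (\loc',\reset{\clockval'}{\resets})$, which is enabled because $\clockval' \models \guard$ follows from $\clockval \models \guard$ by the uniform principle (and similarly for $\invariant(\loc')$). To check that $\reset{\clockval}{\resets} \equiv_\LargestC \reset{\clockval'}{\resets}$, note that reset clocks become~$0$ in both valuations (hence equal), and non-reset clocks preserve their previous relation clock-by-clock. For the delay case, I simply take $d' = d$: equal components stay equal after adding~$d$, large components $(>\LargestC(\clock))$ stay large, and the invariant $\invariant(\loc)$ is preserved throughout $[0,d]$ in~$\clockval'$ because for each intermediate instant the same uniform principle applies to the invariant (which must only impose upper-bound constraints on clocks whose value is $\leq \LargestC(\clock)$, since otherwise $\clockval$ itself would already violate it after some delay $\leq d$).

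The main---but mild---obstacle is handling the invariant along the delay: one must argue that if $\clockval + d''$ satisfies $\invariant(\loc)$ for every $d'' \in [0,d]$, then so does $\clockval' + d''$. This is not entirely symmetric in $\clockval$ and $\clockval'$ because the passage through intermediate values matters, but it reduces to the same uniform principle applied pointwise at each $d''$, using that $\clockval(\clock) + d'' = \clockval'(\clock) + d''$ or both are $> \LargestC(\clock)$. Once this is established, the proof concludes by remarking that $\mathcal{R}$ relates only pairs with a common location, so the matched transition ends in a state of $\mathcal{R}$, completing the bisimulation check.
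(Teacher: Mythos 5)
Your proof is correct, but note that the paper itself does not prove this lemma at all: it is recalled verbatim from \cite[Lemma~1]{BBLP06} as an imported result, so there is no in-paper argument to compare against. Your direct verification is the standard one for this classical fact: the ``uniform principle'' (that $\equiv_\LargestC$-equivalent valuations satisfy exactly the same simple clock guards and invariants, because every constant compared to~$\clock$ is at most $\maxC^{\clock}(\A)\leq\LargestC(\clock)$, and only lower-bound constraints can be satisfied above that threshold) is precisely the right key step, and it silently relies on the absence of diagonal constraints, which the paper does assume. Your handling of the delay case is also sound, and in fact simpler than you suggest: $\equiv_\LargestC$ is preserved under adding the same delay $d''\geq 0$ componentwise (equal stays equal, large stays large), so the invariant check at every intermediate instant is fully symmetric and reduces to the uniform principle pointwise --- the ``mild obstacle'' you flag is not really an asymmetry.
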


\begin{example}
Let us recall the motivation for the use of an extrapolation, through the PTA~$\A$ in \cref{figure:bounded_pta:pta}.
After $i$~times through the loop, we get constraints in $\loc_0$ of the form $y-x \leq i$.
The maximal constant\LongVersion{ of the model} is $\maxC(\A)  =5$. 
After five loops, $y$ can be greater than~$5$.
Therefore, we can apply on~$y$ the classical $k$-extrapolation used for TAs (from~\cite{BBLP06}) of the corresponding zone.
More specifically, we consider that when $y > k$, the bounds on~$y$ can be ignored.
The obtained polyhedron is non-convex, but can be split into two convex ones, one where $y \leq k$ (the part without extrapolation) and one with $y > k$ (the part with extrapolation).
This is depicted in \cref{figure:extrapo} where \cref{figure:extrapo:not_extra} is the original clock zone (\LongVersion{formally }$y \leq x + 5 \land x \leq 1 \land x \leq y$) and \cref{figure:extrapo:extra} is its non-convex extrapolation (\LongVersion{formally }$(x \leq y \leq 5 \land x \leq 1) \lor (y \geq 5 \land 0 < x \leq 1)$).
\end{example}
\begin{figure} [tb]
	\centering
	\small
	\begin{subfigure}[b]{0.4\textwidth}
		\centering
		\begin{tikzpicture}[yscale=.5] %
			\tikzstyle{axe} = [line width=1pt, ->, draw=black!80]
			\tikzstyle{fondgris} = [fill=black!5, draw=none]
			\tikzstyle{zone} = [fill=blue!40!white, draw=blue!60!white,line width=1pt]
			\tikzstyle{openzone1} = [fill=blue!20!white, draw=none]
			\tikzstyle{openzone2} = [fill=blue!60!white, draw=none]

			\draw[openzone1] (2, 1) -- (5, 1) -- (5.6, 2)-- (2.6, 2) -- cycle;

			\path[axe] %
				(2, 1) -- ++ (4.5, 0);
			\path[axe] %
				(2, 1) -- ++ (0, 1.5);

			\node at (6.7,1) {$\clocky$};
			\node at (2,2.7) {$\clockx$};

			\foreach \x in {0, 1}
				\draw [-](2, \x+1) -- (1.8, \x+1);
			\foreach \x in {0, 1}
				\node at (1.5,\x+1) {\x};

			\draw [-](2, 1) -- (2, 0.8); 
			\node at (2,0.6) {0};
			\draw [-](2.6, 1) -- (2.6, 0.8); 
			\node at (2.6,0.6) {1};
			\draw [-](5, 1) -- (5, 0.8);
			\node at (5,0.6) {5};
		\end{tikzpicture}
		\caption{A convex clock zone.}
		\label{figure:extrapo:not_extra}
	\end{subfigure}
		\hfill
	\begin{subfigure}[b]{0.5\textwidth}
		\centering
		\begin{tikzpicture}[yscale=.5]  %
			\tikzstyle{axe} = [line width=1pt, ->, draw=black!80]
			\tikzstyle{fondgris} = [fill=black!5, draw=none]
			\tikzstyle{zone} = [fill=blue!40!white, draw=blue!60!white,line width=1pt]
			\tikzstyle{openzone1} = [fill=blue!20!white, draw=none]
			\tikzstyle{openzone2} = [fill=blue!60!white, draw=none]

			\draw[openzone1] (2, 1) -- (5, 1) -- (5, 2)-- (2.6, 2) -- cycle;
			\draw[openzone2] (5, 1.1) -- (6.4, 1.1) -- (6.4, 2) -- (5, 2) -- cycle;

			\path[axe] %
				(2, 1) -- ++ (4.5, 0);
			\path[axe] %
				(2, 1) -- ++ (0, 1.5);

			\node at (6.7,1) {$\clocky$};
			\node at (2,2.7) {$\clockx$};

			\foreach \x in {0, 1}
				\draw [-](2, \x+1) -- (1.8, \x+1);
			\foreach \x in {0, 1}
				\node at (1.5,\x+1) {\x};

			\draw [-](2, 1) -- (2, 0.8); 
			\node at (2,0.6) {0};
			\draw [-](2.6, 1) -- (2.6, 0.8); 
			\node at (2.6,0.6) {1};
			\draw [-](5, 1) -- (5, 0.8);
			\node at (5,0.6) {5};
		\end{tikzpicture}
		\caption{Its non-convex extrapolation.}
		\label{figure:extrapo:extra}
	\end{subfigure}
	\caption{Example illustrating the non-convex parametric extrapolation.}
	\label{figure:extrapo}
\end{figure}
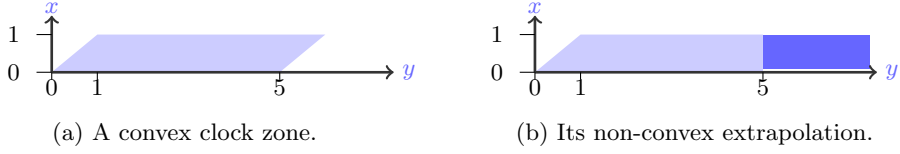

Let us now formally recall from~\cite{ALR15} the concept of $\LargestC$-extrapolation for PTAs.
First, we \LongVersion{need to }recall the \emph{cylindrification} operation, which \LongVersion{is a usual operation that }consists in \emph{unconstraining}\LongVersion{ variable}~$\clock$.

\begin{definition}[Cylindrification \cite{ALR15}]
For a polyhedron~$\C$ and variable~$\clock$, we denote by $\cylinder{\clock}(\C)$ the \emph{cylindrification} of $\C$ along variable $\clock$, \ie{} $\cylinder{\clock}(\C)=\{\clockval \mid \exists \clockval'\in \C, \forall \clock'\neq\clock,\clockval'(\clock')=\clockval(\clock')\text{ and } \clockval(\clock)\geq 0\}$.
\end{definition}

The $(\LargestC,\clock)$-extrapolation is an operation that splits a polyhedron into two polyhedra such that clock~$\clock$ is either less than or equal to~$\LargestC$, or is strictly greater than~$\LargestC$ while being independent from the other variables.

\begin{definition}[$(\LargestC,\clock)$-extrapolation \cite{ALR15}]\label{definition:x-extrapolation}
	Let $\C$ \LongVersion{be }a polyhedron.
	Let $\LargestC \in \setN$\LongVersion{ be a non-negative integer constant} and $\clock$ be a clock.
	The $(\LargestC,\clock)$-extrapolation of~$\C$, denoted by $\Ext{\LargestC}{\clock}(\C)$, is defined as: 
	\[\Ext{\LargestC}{\clock}(\C)= \big(\C\cap (\clock \leq \LargestC)\big) \cup \big( \cylinder{\clock}\big(\C\cap (\clock>\LargestC)\big)\cap (\clock>\LargestC) \big)\text{.}\]
\end{definition}

Given $\symbstate = (\loc, \C)$, we write $\Ext{\LargestC}{\clock}(\symbstate)$ for $\Ext{\LargestC}{\clock}\big(\C\big)$.

We can now consistently define the $\LargestC$-extrapolation operator.

\begin{definition}[$\LargestC$-extrapolation \cite{ALR15}]
    Let $\LargestC \in \setN$\LongVersion{ be a non-negative integer constant} and $\Clock$ be a set of clocks.
    The $(\LargestC,\Clock)$-extrapolation operator $\Ext{\LargestC}{\Clock}$ is defined as the composition (in any order) of all $\Ext{\LargestC}{\clock}$, for all $\clock \in \Clock$.
    When clear from the context we omit $\Clock$ and only write $\LargestC$-extrapolation.%
\end{definition}

\cite[Lemma 1]{ALR15} shows that the order of composition of $(\LargestC,\clock)$-extrapolation does not impact its results, \ie{} $\Ext{\LargestC}{\clock}\big(\Ext{\LargestC}{\clocky}(\C)\big)=\Ext{\LargestC}{\clocky}\big(\Ext{\LargestC}{\clock}(\C)\big)$, 
and \cite[Lemma~5]{ALR15} shows that given a symbolic state $\symbstate$ of a PTA and a non-negative integer $\LargestC$ greater than\LongVersion{ the maximal constant of the PTA}~$\maxC(\A)$, for any clock $\clock$ and parameter valuation $\pval$ such that $(\loc, \clockval)\in \valuate{\Ext{\LargestC}{\clock}(\symbstate)}{\pval}$ is a concrete state, there exists a state $(\loc, \clockval')\in\valuate{\symbstate}{\pval}$ such that $(\loc, \clockval)$ and $(\loc, \clockval')$ are bisimilar.

\subsection{Synthesis with extrapolation}

We now recall the reachability-synthesis algorithm, \LongVersion{that was }formalized in~\cite{JLR15}, and then enhanced with extrapolation (and ``integer hull''---unused here) in~\cite{ALR15}.
We adapt here to our notations a version of reachability-synthesis with the extrapolation\LongVersion{, and write a full proof of correctness (absent from~\cite{ALR15}), also because we will use it and improve it in the remainder of the paper}.

\begin{algorithm}[tb!]
	\SetKwInOut{Input}{input}
	\SetKwInOut{Output}{output}

	\Input{A PTA $\A$, a symbolic state $\symbstate=(\loc, \C)$, a set of target locations $\somelocs$, a set~$\Passed$ of passed states on the current path}
	\Output{Constraint $\K$ over the parameters}

	\LongVersion{\BlankLine}

	\lIf{$\loc \in \somelocs$}{%
		$\K \assign \projectP{\C}$\nllabel{algo:EEF:projection}
	}\Else{%
		$\K \assign \KFalse $\;
        \If{$\symbstate \notin \Passed$}{%
			\For{each outgoing $\edge$ from $\loc$ in $\A$}{
                $\K \assign \K \cup \EEF\big(\A, $\hl{$\Ext{\LargestC}{\Clock}$}$\big(\Succ(\symbstate,\edge)\big), \somelocs, \Passed \cup \{ \symbstate \}\big)$\; \nllabel{algo:EEF:recursion}
			}
		}
	}
	
	\Return{$\K$}
	\caption{$\EEF(\A, \symbstate, \somelocs, \Passed)$}
	\label{algo:EEF}
\end{algorithm}

The goal of \EEF{} given in \cref{algo:EEF} (``\stylealgo{E}'' stands for ``extrapolation'', ``\stylealgo{EF}'' denotes reachability) is to synthesize \LongVersion{parameter }valuation solutions to the reachability-synthesis problem\LongVersion{, \ie{} the valuations for which there exists a run eventually reaching a location in~$\somelocs$}.
\EEF{} proceeds as a post-order traversal of the symbolic reachability tree, and collects all parametric constraints associated with the target locations~$\somelocs$.
In contrast to the classical reachability-synthesis algorithm \EF{} formalized in~\cite{JLR15}, it recursively calls itself (\cref{algo:EEF:recursion}) with the \emph{extrapolation} of the successor of the current symbolic state (this difference is highlighted in yellow in \cref{algo:EEF}).
\newcommand{\commentTheoremEEF}{%
	In order to prove the soundness and completeness of \cref{algo:EEF}, we inductively define, as in~\cite{JLR15}, the \emph{symbolic reachability tree} of~$\A$ as the possibly infinite directed labeled tree $\symtree$ such that:
	\begin{itemize}
		\item the root of $\symtree$ is labeled by the initial symbolic state $\symbstateinit$; 
		\item for every node $n$ of $\symtree$, if $n$ is labeled by some symbolic state $\symbstate$, then for all edges $\edge$ of~$\A$, there exists a child~$n'$ of~$n$ labeled by $\Succ(\symbstate, \edge)$ iff $\Succ(\symbstate, \edge)$ is not empty.
	\end{itemize}

	Algorithm $\EEF$ is a post-order depth-first traversal of some prefix of that tree.
	
	In addition, before we prove \cref{theorem:EEF:soundness}, we need the following lemmas (adapted from~\cite{ALR15}).
	
	We first recall the following lemma (\cite[Lemma~4]{ALR15}):

	\begin{lemma}[{\cite[Lemma~4]{ALR15}}]\label{lemma:ALR15-Lemma4}
		For all parameter valuation~$\pval$, non-negative integer constants $\LargestC$, clock~$\clock$ and valuations set~$\C$, 
		$\valuate{\Ext{\LargestC}{\clock}(\C)}{\pval} = \Ext{\LargestC}{\clock}(\valuate{\C}{\pval})$.
	\end{lemma}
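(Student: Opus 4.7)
The plan is to unfold the definition of $\Ext{\LargestC}{\clock}$ on both sides and observe that parameter valuation distributes over each constituent operation, namely intersection, union, and cylindrification along a clock. Since valuation acts on parameters and cylindrification acts on a clock, these operations commute by virtue of being applied to disjoint sets of variables.

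First, I would establish that for any polyhedra $\C_1, \C_2$ over $\Clock \cup \Param$ we have $\valuate{(\C_1 \cap \C_2)}{\pval} = \valuate{\C_1}{\pval} \cap \valuate{\C_2}{\pval}$ and $\valuate{(\C_1 \cup \C_2)}{\pval} = \valuate{\C_1}{\pval} \cup \valuate{\C_2}{\pval}$. These are immediate from the fact that $\valuate{\C}{\pval}$ is defined as the set of clock valuations~$\clockval$ such that $\wv{\clockval}{\pval} \models \C$, so set-theoretic operations on constraints translate directly to set-theoretic operations on clock valuations. I would also note that since $\clock \leq \LargestC$ and $\clock > \LargestC$ mention no parameter, they are fixed points of valuation: $\valuate{(\clock \leq \LargestC)}{\pval} = (\clock \leq \LargestC)$, and similarly for $\clock > \LargestC$.

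The core step is to show $\valuate{\cylinder{\clock}(\C)}{\pval} = \cylinder{\clock}(\valuate{\C}{\pval})$. I would prove it by double inclusion at the level of clock valuations. For the forward direction, if $\clockval \in \valuate{\cylinder{\clock}(\C)}{\pval}$, then $\wv{\clockval}{\pval} \models \cylinder{\clock}(\C)$, so by definition of cylindrification there exists $\clockval''$ agreeing with $\clockval$ on all clocks distinct from~$\clock$ such that $\wv{\clockval''}{\pval} \models \C$, that is $\clockval'' \in \valuate{\C}{\pval}$; hence $\clockval \in \cylinder{\clock}(\valuate{\C}{\pval})$. The reverse inclusion is symmetric. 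The key point is that cylindrification and valuation act on disjoint variable sets, so the witness $\clockval''$ used in one operation is unaffected by the other.

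Combining these three observations gives
\[
\valuate{\Ext{\LargestC}{\clock}(\C)}{\pval}
= \big(\valuate{\C}{\pval} \cap (\clock \leq \LargestC)\big) \cup \big(\cylinder{\clock}(\valuate{\C}{\pval} \cap (\clock > \LargestC)) \cap (\clock > \LargestC)\big)
= \Ext{\LargestC}{\clock}(\valuate{\C}{\pval}) \text{,}
\]
which is the desired equality. The only nontrivial step is the commutation with cylindrification; the rest is purely distributive bookkeeping.
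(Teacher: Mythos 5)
Your proof is correct. The paper gives no proof of this lemma---it is recalled verbatim from \cite[Lemma~4]{ALR15}---so there is no in-paper argument to compare against, but your decomposition (distributivity of parameter valuation over $\cap$ and $\cup$, invariance of the parameter-free constraints $\clock \leq \LargestC$ and $\clock > \LargestC$ under valuation, and commutation of valuation with $\cylinder{\clock}$ because the two operations act on disjoint variable sets) is exactly the standard argument, and every step checks out.
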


	\begin{lemma}[{\cite[Lemma~5]{ALR15}}]\label{lemma:extbisim:M}
		Let~$\A$ be a PTA and $\symbstate$ be a symbolic state of~$\A$.
		Let $\clock$ be a clock, $\LargestC \in \setN$\LongVersion{ an integer constant} greater than or equal to $\maxC(\A)$, $\pval$ be a parameter valuation and $(\loc, \clockval)\in \valuate{\Ext{\LargestC(\clock)}{\clock}(\symbstate)}{\pval})$ be a concrete state.
		There exists a state $(\loc, \clockval')\in\valuate{\symbstate}{\pval}$ such that $(\loc, \clockval)$ and $(\loc, \clockval')$ are bisimilar.
	\end{lemma}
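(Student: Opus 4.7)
The plan is to combine the commutation of valuation and extrapolation (\cref{lemma:ALR15-Lemma4}) with the bisimulation result on TAs (\cref{lemma:bisim}), the bridge being a simple case split according to the two components of the $(\LargestC,\clock)$-extrapolation defined in \cref{definition:x-extrapolation}.

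First, I would apply \cref{lemma:ALR15-Lemma4} to rewrite the hypothesis as
\(
	(\loc,\clockval) \in (\loc, \Ext{\LargestC(\clock)}{\clock}(\valuate{\C}{\pval})),
\)
where $\symbstate = (\loc,\C)$. By \cref{definition:x-extrapolation}, $\clockval$ then belongs to one of the two convex components of the extrapolation, and I would proceed by case analysis.

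In the first case, $\clockval \in \valuate{\C}{\pval} \cap (\clock \leq \LargestC)$. Here, $\clockval$ already lies in $\valuate{\C}{\pval}$, so taking $\clockval' = \clockval$ trivially yields a concrete state $(\loc,\clockval') \in \valuate{\symbstate}{\pval}$ bisimilar to $(\loc,\clockval)$ (since $\clockval \equiv_\LargestC \clockval$). In the second case, $\clockval \in \cylinder{\clock}\big(\valuate{\C}{\pval} \cap (\clock > \LargestC)\big) \cap (\clock > \LargestC)$. By definition of cylindrification, there exists $\clockval' \in \valuate{\C}{\pval}$ such that $\valuate{\clock'}{\clockval'} = \valuate{\clock'}{\clockval}$ for every clock $\clock' \neq \clock$, and moreover $\valuate{\clock}{\clockval'} > \LargestC$. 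Since also $\valuate{\clock}{\clockval} > \LargestC$, the two valuations coincide on all clocks other than $\clock$, and both exceed $\LargestC$ on clock $\clock$. Thus $\clockval \equiv_\LargestC \clockval'$.

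It then remains to apply \cref{lemma:bisim} to the TA $\valuate{\A}{\pval}$, noting that $\LargestC \geq \maxC(\A)$ bounds $\maxC^\clock(\valuate{\A}{\pval})$ for every clock $\clock$ (since evaluating parametric expressions at any admissible $\pval$ cannot exceed $\maxC(\A)$ by construction). This yields that $\big((\loc,\clockval),(\loc,\clockval')\big)$ lies in the bisimulation relation of $\valuate{\A}{\pval}$, which concludes the proof. The main obstacle I foresee is the second case, specifically the justification that the witness $\clockval'$ provided by cylindrification still satisfies $\valuate{\clock}{\clockval'} > \LargestC$; this follows directly from choosing it inside $\valuate{\C}{\pval} \cap (\clock > \LargestC)$ before projecting away $\clock$, so the argument is straightforward once the case split is properly written.
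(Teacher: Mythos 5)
Your proof is correct and follows essentially the same route as the paper: the paper recalls this statement from~\cite{ALR15} and proves the directly analogous \cref{lemma:extbisim:Mvect} by exactly your case split --- the trivial case where the state already lies in $\valuate{\symbstate}{\pval}$, and the cylindrification case where the witness $\clockval'$ agrees with $\clockval$ on all clocks other than $\clock$ while both exceed the bound on~$\clock$, giving $\clockval \equiv_\LargestC \clockval'$ and concluding via \cref{lemma:bisim}. Your explicit use of \cref{lemma:ALR15-Lemma4} to commute valuation and extrapolation is a minor presentational difference, not a different argument.
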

	
	\begin{remark}
		\cref{lemma:extbisim:Mvect} is the equivalent of \cref{lemma:extbisim:M} for the $\vec{\LargestC}$-extrapolation.
	\end{remark}

	We then prove the following \cref{lemma:extbisimeq}:
	
	\begin{lemma}\label{lemma:extbisimeq}
		Let $\A$ be a PTA.
		For all symbolic states $\symbstate$ and $\symbstate'$, non-negative integer $\LargestC$ greater than\LongVersion{ the maximal constant of the PTA}~$\maxC(\A)$, and parameter valuation $\pval$, such that $\valuate{\Ext{\LargestC}{\Clock}(\symbstate)}{\pval} = \valuate{\Ext{\LargestC}{\Clock}(\symbstate')}{\pval}$, for all states $(\loc, \clockval) \in \valuate{\symbstate}{\pval}$, there exists a state $(\loc, \clockval') \in \valuate{\symbstate'}{\pval}$ such that $(\loc, \clockval)$ and $(\loc, \clockval')$ are bisimilar.
	\end{lemma}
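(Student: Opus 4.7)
The plan is to leverage the fact that $\Ext{\LargestC}{\Clock}$ is an over-approximation, combine it with the hypothesis to transport $(\loc, \clockval)$ into the extrapolation of~$\symbstate'$, and then peel off the extrapolation one clock at a time using \cref{lemma:extbisim:M}.

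First, I would observe that extrapolation is always enlarging: for any polyhedron~$\C$ and clock~$\clock$, the definition of $\Ext{\LargestC}{\clock}$ in \cref{definition:x-extrapolation} yields $\C \subseteq \Ext{\LargestC}{\clock}(\C)$, since a valuation in~$\C$ with $\clock \leq \LargestC$ lies in the first union component, and one with $\clock > \LargestC$ lies in $\cylinder{\clock}\big(\C \cap (\clock > \LargestC)\big) \cap (\clock > \LargestC)$. Iterating over all clocks gives $\C \subseteq \Ext{\LargestC}{\Clock}(\C)$. Combining this with \cref{lemma:ALR15-Lemma4}, which commutes valuation and extrapolation, I obtain $\valuate{\symbstate}{\pval} \subseteq \valuate{\Ext{\LargestC}{\Clock}(\symbstate)}{\pval}$. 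Hence $(\loc, \clockval) \in \valuate{\Ext{\LargestC}{\Clock}(\symbstate)}{\pval}$, and by the hypothesis this equals $\valuate{\Ext{\LargestC}{\Clock}(\symbstate')}{\pval}$.

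Next, I would peel off the per-clock extrapolations one by one. Writing $\Ext{\LargestC}{\Clock} = \Ext{\LargestC}{\clock_1} \circ \cdots \circ \Ext{\LargestC}{\clock_{\ClockCard}}$ and using $\LargestC \geq \maxC(\A) \geq \maxC^{\clock_i}(\A)$, I apply \cref{lemma:extbisim:M} to the outermost operator to obtain a valuation $\clockval_1$ with $(\loc, \clockval_1) \in \valuate{\Ext{\LargestC}{\clock_2} \circ \cdots \circ \Ext{\LargestC}{\clock_{\ClockCard}}(\symbstate')}{\pval}$ bisimilar to $(\loc, \clockval)$. Iterating this step $\ClockCard - 1$ more times yields a state $(\loc, \clockval') \in \valuate{\symbstate'}{\pval}$. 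Since the bisimulation relation of \cref{lemma:bisim} is clearly an equivalence (it is pointwise defined by ``either equal or both above $\LargestC(\clock)$''), composing the $\ClockCard$ intermediate bisimilarities yields $(\loc, \clockval) \equiv_{\LargestC} (\loc, \clockval')$.

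The main obstacle is the iterated application of \cref{lemma:extbisim:M}: as stated, its input is a symbolic state \emph{of~$\A$}, but after each peel the object under consideration is an extrapolation of a symbolic state, which may not be a parametric zone produced by $\Succ$. I expect this to be purely cosmetic, since the proof of \cref{lemma:extbisim:M} in~\cite{ALR15} only relies on the geometric definition of $\Ext{\LargestC}{\clock}$ on an arbitrary valuation set together with \cref{lemma:bisim}, and not on $\symbstate$ being reachable in the parametric zone graph; a single sentence invoking this generality should suffice.
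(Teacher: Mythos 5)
Your proposal is correct and follows essentially the same route as the paper, whose proof of this lemma is simply the observation that it is a direct consequence of \cref{lemma:ALR15-Lemma4} and \cref{lemma:extbisim:M}; you merely spell out the details (monotonicity of $\Ext{\LargestC}{\Clock}$, transfer via the hypothesis, clock-by-clock peeling, and transitivity of bisimilarity) that the paper leaves implicit. The technicality you flag about applying \cref{lemma:extbisim:M} to extrapolated (possibly non-convex) sets rather than symbolic states of~$\A$ is real but harmless, exactly as you argue.
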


	\begin{proof}
		This is a direct consequence of \cref{lemma:ALR15-Lemma4,lemma:extbisim:M}.
	\end{proof}

}

\LongVersion{\commentTheoremEEF{}}

\cref{algo:EEF} is correct (\ie{} sound and complete):\ShortVersion{\footnote{%
	The proofs of all our results are in \LongVersion{the appendix}\ShortVersion{a technical report~\cite{AA22report}}.
}}

\newcommand{\enonceTheoremEEFcorrectness}{
	Let~$\A$ be a PTA with initial symbolic state~$\symbstateinit$, and $\somelocs \subseteq \Loc$ a set of target locations.
	Assume $\EEF(\A, \symbstateinit, \somelocs, \emptyset)$ terminates.
	We have:
	\begin{enumerate}
		\item Soundness: If $\pval \in \EEF(\A, \symbstateinit, \somelocs, \emptyset)$ then $\somelocs$ is reachable in $\valuate{\A}{\pval}$;
		\item Completeness: For all~$\pval$, if $\somelocs$ is reachable in $\valuate{\A}{\pval}$ then $\pval \in \EEF(\A, \symbstateinit, \somelocs, \emptyset)$.
	\end{enumerate}
}

\begin{theorem}\label{theorem:EEF:soundness}
	\enonceTheoremEEFcorrectness{}
\end{theorem}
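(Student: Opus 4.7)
The plan is to prove soundness and completeness by a joint induction on the finite tree actually explored by $\EEF$, relating each node to the symbolic reachability tree $\symtree$ of $\A$ via the bisimulation properties of extrapolation.

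First I would fix some notation. Since $\EEF$ terminates, it explores a finite prefix of an ``extrapolated'' tree $\treeprefix$ whose root is labeled $\symbstateinit$ and where a node labeled $\symbstate$ has, for each edge $\edge$ with $\Succ(\symbstate,\edge)$ nonempty, a child labeled $\Ext{\LargestC}{\Clock}\bigl(\Succ(\symbstate,\edge)\bigr)$. Two facts will be the engine of the proof. (i)~Along any branch $\symbstate_0 \symbtrans \symbstate_1 \symbtrans \cdots \symbtrans \symbstate_n$ of $\treeprefix$, a straightforward induction using \cref{lemma:extbisimeq} (together with the fact that $\Succ$ commutes with concrete transitions) shows that, for every~$\pval$, every concrete state in $\valuate{\symbstate_n}{\pval}$ is bisimilar to some concrete state reachable in $\valuate{\A}{\pval}$ by a run taking the same sequence of edges. (ii)~Conversely, because $\C \subseteq \Ext{\LargestC}{\Clock}(\C)$ for every zone~$\C$, each branch of $\symtree$ is dominated, step by step, by a branch of $\treeprefix$ whose zones contain the corresponding zones of $\symtree$ after valuation.

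For \emph{soundness}, suppose $\pval \in \EEF(\A, \symbstateinit, \somelocs, \emptyset)$. The union-structure of the algorithm (\cref{algo:EEF:projection,algo:EEF:recursion}) shows that $\pval$ is contributed by some node $(\loc, \C)$ of the explored tree with $\loc \in \somelocs$ and $\pval \models \projectP{\C}$. Then $\valuate{\C}{\pval}$ is nonempty, so by~(i) there is a concrete state $(\loc, \clockval)$ reachable in $\valuate{\A}{\pval}$ bisimilar to some state in $\valuate{\C}{\pval}$; in particular, $\loc \in \somelocs$ is reachable.

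For \emph{completeness}, let $\pval$ be such that $\somelocs$ is reachable in $\valuate{\A}{\pval}$ via a concrete run of minimal length~$n$. I would induct on~$n$ along the corresponding symbolic branch $\symbstate_0 \symbtrans \cdots \symbtrans \symbstate_n$ in~$\symtree$. By~(ii) there is a matching branch $\symbstate_0 \symbtrans \symbstate_1' \symbtrans \cdots \symbtrans \symbstate_n'$ in $\treeprefix$ with the final concrete state lying in $\valuate{\symbstate_n'}{\pval}$, whence $\pval \models \projectP{\C_n'}$. The delicate point is the \emph{Passed}-set pruning on \cref{algo:EEF:recursion}: if the algorithm has cut a branch because $\symbstate = \symbstate'$ was already seen on the current path, then any symbolic continuation from $\symbstate'$ can also be appended to the earlier occurrence of $\symbstate$ in the recursion, so the collected constraint at $\symbstate$ already contains~$\pval$; minimality of~$n$ prevents pathological cycling. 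Concluding the induction yields $\pval \in \EEF(\A, \symbstateinit, \somelocs, \emptyset)$.

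The main obstacle is the \emph{Passed}-set bookkeeping: because \emph{Passed} is per-branch rather than global, identifying which earlier node absorbs the pruned subtree requires reasoning on the DFS structure rather than on flat reachability; everything else is a direct consequence of the bisimulation lemmas of \cref{ss:Mextrapolation}.
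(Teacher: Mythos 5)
Your overall skeleton matches the paper's proof: soundness by pulling states of extrapolated zones back to genuinely reachable concrete states via the bisimulation lemma (\cref{lemma:extbisimeq}), and completeness by following the edge sequence of a witnessing run through the explored tree, with the extrapolation discharged by the same lemma. The soundness half and your observations (i)--(ii) are fine.

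The gap is in the \emph{Passed}-set case of completeness. You assert that when a node $\symbstate$ is pruned because it coincides with an earlier node $\symbstate'$ on the branch, ``any symbolic continuation from $\symbstate'$ can also be appended to the earlier occurrence\dots{} so the collected constraint at $\symbstate$ already contains $\pval$.'' This inference does not go through as stated: the subtree actually explored below the earlier occurrence is itself pruned (against its own, smaller \emph{Passed} set), and the hypothetical unpruned continuation you want to ``append'' may be infinite --- termination of the algorithm is guaranteed only \emph{because} of the pruning --- so you cannot conclude by purely symbolic subtree-containment reasoning that a target node containing $\pval$ is ever visited below $\symbstate'$; the argument is circular. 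The paper closes this step by dropping back to concrete runs: from $\valuate{\Ext{\LargestC}{\Clock}(\symbstate)}{\pval} = \valuate{\Ext{\LargestC}{\Clock}(\symbstate')}{\pval}$ and \cref{lemma:extbisimeq}, the concrete state $(\loc,\clockval)$ of the run at the pruning point is bisimilar to some $(\loc,\clockval')\in\valuate{\symbstate'}{\pval}$, which by \cref{cor:next} is reachable by a run with $m<k$ discrete steps; splicing that prefix with the bisimulation-matched suffix of the original run yields a strictly shorter concrete run reaching $\somelocs$, and one concludes by well-founded induction on run length. Your remark that ``minimality of $n$ prevents pathological cycling'' is the right instinct --- with a minimal-length witnessing run the spliced run immediately contradicts minimality --- but you must actually carry out this run-splicing step rather than argue about appending symbolic subtrees.
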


\newcommand{\preuveTheoremEEFcorrectness}{%
\begin{proof}
	We reuse here large parts of the proof of \cite[Theorem~2]{ALR15}, as that theorem proves the correctness of a synthesis algorithm using both extrapolation \emph{and} integer hulls---while we use here only extrapolation.
	We give it in full details though, as our formal result will be modified for our subsequent definitions of extrapolations (\eg{} \cref{prop:EEFvect:correctness,prop:EEFhat:correctness}).
\begin{enumerate}
	\item Soundness: this part of the proof is almost exactly the same as in~\cite{JLR15} so we do not repeat it.
	The only difference is that, with the same proof, we actually have a slightly stronger result that holds for any finite prefix of $\symtree$ instead of exactly the one computed by \EF{}: 
		\begin{lemma}\label{lemma:EFtree}
			Let $\treeprefix$ be a finite prefix of
			$\symtree$, on which we apply algorithm \EEF{}.
			Let $n$ be a node of~$\treeprefix$ labeled by some symbolic state $\symbstate$, and such that the subtree rooted at~$n$ has depth~$N$.
			We have: $\pval \in \EEF(\A, \symbstate, \somelocs, \LargestC)$, where $\LargestC$ contains the symbolic states labeling nodes on the path from the root, iff there exists a state $(\loc, \clockval)$ in $\valuate{\symbstate}{\pval}$ and a run $\varrun$ in $\valuate{\A}{\pval}$, with less than $N$ discrete steps, that starts in $(\loc, \clockval)$ and reaches~$\somelocs$.
		\end{lemma}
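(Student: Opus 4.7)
The plan is to follow closely the structure of the proof of \cite[Theorem~2]{ALR15} and of the analogous \EF{} correctness proof in~\cite{JLR15}, but isolate the role of the extrapolation step by leaning on \cref{lemma:extbisim:M,lemma:extbisimeq}. As suggested by the excerpt itself, rather than proving the theorem directly, I would first establish the stronger auxiliary \cref{lemma:EFtree} applied to every node $n$ of a finite prefix $\treeprefix$ of the symbolic reachability tree $\symtree$, with the induction parameter being the depth~$N$ of the subtree rooted at~$n$. The theorem then follows by specializing this lemma to the root $\symbstateinit$ and to a finite prefix corresponding to the actual (terminating) execution of $\EEF$.

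For the base case $N=0$, the recursion stops either because $\loc\in\somelocs$ or because $\symbstate\in\Passed$; in the first situation the algorithm returns $\projectP{\C}$ and each such $\pval$ trivially admits a concrete state $(\loc,\clockval)\in\valuate{\symbstate}{\pval}$ (the symbolic semantics guarantees non-emptiness), which is a length-0 accepting run. For the inductive step of soundness, if $\pval\in\EEF(\A,\symbstate,\somelocs,\Passed)$ via an edge $\edge$, then $\pval\in\EEF\big(\A,\Ext{\LargestC}{\Clock}(\Succ(\symbstate,\edge)),\somelocs,\Passed\cup\{\symbstate\}\big)$. By the induction hypothesis there exists a concrete state $(\loc',\clockval')$ in $\valuate{\Ext{\LargestC}{\Clock}(\Succ(\symbstate,\edge))}{\pval}$ starting a run of length less than $N-1$ to $\somelocs$. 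Using \cref{lemma:extbisim:M} (iteratively on each clock via the composition definition of $\Ext{\LargestC}{\Clock}$), we obtain a bisimilar state $(\loc',\clockval'')\in\valuate{\Succ(\symbstate,\edge)}{\pval}$, hence a run of equal length reaching $\somelocs$. Prepending the discrete transition labeled $\edge$ from some state in $\valuate{\symbstate}{\pval}$ (which exists since $\Succ(\symbstate,\edge)$ is non-empty under $\pval$) yields the desired run.

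For completeness, assume there is a concrete run in $\valuate{\A}{\pval}$ from some $(\loc,\clockval)\in\valuate{\symbstate}{\pval}$ reaching $\somelocs$ in at most $N$ discrete steps. If $\loc\in\somelocs$ we are done, since $\pval\models\projectP{\C}$. Otherwise the run starts with a discrete step along some edge $\edge$, reaching a concrete state that lies in $\valuate{\Succ(\symbstate,\edge)}{\pval}$. Applying \cref{lemma:extbisimeq} with $\symbstate'=\Succ(\symbstate,\edge)$ (and using that $\valuate{\Ext{\LargestC}{\Clock}(\cdot)}{\pval}$ contains the image of the corresponding unextrapolated state, by \cref{lemma:ALR15-Lemma4} since extrapolation only enlarges the valuation set), we obtain a bisimilar state inside $\valuate{\Ext{\LargestC}{\Clock}(\Succ(\symbstate,\edge))}{\pval}$ from which the remaining run (of length $<N$) can be replayed. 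The induction hypothesis on the subtree rooted at the successor then yields $\pval\in\EEF\big(\A,\Ext{\LargestC}{\Clock}(\Succ(\symbstate,\edge)),\somelocs,\Passed\cup\{\symbstate\}\big)$, hence $\pval$ is collected by the union in \cref{algo:EEF:recursion}. The $\Passed$-check does not jeopardise completeness because a symbolic state that has already been visited on the current path corresponds to a cycle with no progress towards $\somelocs$, so any run can be shortened to avoid it.

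The main obstacle is exactly the one motivating the bisimulation machinery: after extrapolation, a successor state $\Ext{\LargestC}{\Clock}(\Succ(\symbstate,\edge))$ is generally strictly larger than $\Succ(\symbstate,\edge)$, so one must carefully distinguish between (a) the concrete runs that exist in the \emph{extrapolated} successor (for soundness, we must push them back to actual runs of the original TA) and (b) the concrete runs that exist in the \emph{original} successor (for completeness, we must embed them into a bisimilar state of the extrapolated one). \cref{lemma:extbisim:M,lemma:extbisimeq} are precisely tailored for these two directions, provided $\LargestC\geq\maxC(\A)$, which is the standing assumption. Modulo this, the proof is a routine induction almost identical in shape to~\cite{JLR15,ALR15}.
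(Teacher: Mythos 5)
Your induction on the depth $N$ is the right skeleton, and it matches the proof the paper defers to (the paper does not reprove \cref{lemma:EFtree}: it states that the argument of~\cite{JLR15} goes through unchanged on any finite prefix of $\symtree$). The soundness direction is essentially correct: the induction hypothesis at the child gives a run from a state of the \emph{extrapolated} successor, \cref{lemma:extbisim:M} (or \cref{lemma:extbisimeq}) pulls it back to a bisimilar state of $\valuate{\Succ(\symbstate,\edge)}{\pval}$, and \cref{lemma:next} lets you prepend the discrete step. The gap is in the completeness direction, in the single sentence dismissing the $\Passed$-check: ``a symbolic state that has already been visited on the current path corresponds to a cycle with no progress towards $\somelocs$, so any run can be shortened to avoid it.'' This is precisely the hard point, and the argument you give does not close the induction. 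When the branch followed by the run is cut at a node whose (extrapolated) state coincides with that of an \emph{ancestor} of~$n$, the shortened run you obtain is rerouted through that ancestor; it no longer starts in $\valuate{\symbstate}{\pval}$ and its witness lies outside the subtree rooted at~$n$, so the induction hypothesis at~$n$ does not apply to it. Concretely, $\pval$ may then be collected by the recursive call at the ancestor but not by the call at~$n$, so the unrestricted ``iff'' at an arbitrary internal node is in jeopardy. Moreover, ``no progress towards $\somelocs$'' is not free: the two occurrences agree only \emph{after extrapolation}, so the concrete state reached at the second occurrence need not lie in the first occurrence's valuated zone; one only gets a \emph{bisimilar} state there via \cref{lemma:extbisimeq}, and one must then argue that the rerouted run is strictly shorter and iterate.

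The paper avoids this trap by never using the backward direction of \cref{lemma:EFtree} in the situation you are trying to cover: it invokes it only for runs whose entire edge sequence is feasible in $\treeprefix$ (where the induction is a plain walk down the branch and the $\Passed$-check never fires), and it handles the cut-off case separately, in the completeness part of \cref{theorem:EEF:soundness}, by the explicit run-shortening argument based on \cref{lemma:extbisimeq,cor:next} with a strictly decreasing number of discrete steps. You should either restrict the backward direction of your lemma to runs whose edge sequence is feasible in the explored prefix, or move the $\Passed$/bisimulation argument out of the lemma and into the top-level completeness proof, as the paper does. (Minor: your base case takes $N=0$ for a leaf, which makes ``less than $N$ discrete steps'' vacuous and contradicts your own claim that a length-$0$ run witnesses the right-hand side; the convention must be that a leaf has depth~$1$.)
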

		Soundness is a direct consequence of \cref{lemma:EFtree}.

	\item Completeness: 
		The proof of this part follows the same general structure as that of \EF{} in~\cite{JLR15} but with additional complications due to the use of the extrapolation.
		We reuse the proof of the result of~\cite{ALR15}, to only cope with extrapolation (without the integer hull defined and used in~\cite{ALR15}).

		Before we start, let us just recall two more results from~\cite{JLR15}:
		\begin{lemma}[{\cite[Lemma~1]{JLR15}}]\label{lemma:next}
			For all parameter valuation~$\pval$, symbolic state~$\symbstate$ and edge~$\edge$, we have $\Succ\big(\valuate{\symbstate}{\pval},\valuate{\edge}{\pval}\big) = \valuate{(\Succ(\symbstate,\edge))}{\pval}$.
		\end{lemma}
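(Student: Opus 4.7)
The plan is to unfold both sides of the claimed equality using the definition of $\Succ$ from \cref{def:PTA:symbolic} and the definition of parameter valuation, and then verify the equality by checking that parameter valuation commutes with each of the three constraint operations involved: conjunction, clock reset, and time elapsing.

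First I would fix notations. Let $\symbstate = (\loc, \C)$ and $\edge = (\loc, \guard, \action, \resets, \loc')$. By \cref{def:PTA:symbolic},
\[
\Succ(\symbstate, \edge) = \big(\loc', \timelapse{\big(\reset{(\C \land \guard)}{\resets} \land \invariant(\loc')\big)} \land \invariant(\loc')\big)\text{,}
\]
so $\valuate{\Succ(\symbstate,\edge)}{\pval} = \big(\loc', \valuate{\timelapse{(\reset{(\C \land \guard)}{\resets} \land \invariant(\loc'))} \land \invariant(\loc')}{\pval}\big)$. On the other side, $\valuate{\symbstate}{\pval} = (\loc, \valuate{\C}{\pval})$ and $\valuate{\edge}{\pval} = (\loc, \valuate{\guard}{\pval}, \action, \resets, \loc')$, so applying $\Succ$ in the already-valuated TA yields
\[
\Succ\big(\valuate{\symbstate}{\pval},\valuate{\edge}{\pval}\big) = \big(\loc', \timelapse{\big(\reset{(\valuate{\C}{\pval} \land \valuate{\guard}{\pval})}{\resets} \land \valuate{\invariant(\loc')}{\pval}\big)} \land \valuate{\invariant(\loc')}{\pval}\big)\text{.}
\]
Since both sides have the same location component $\loc'$, it suffices to show that the two constraints coincide.

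The heart of the argument is therefore three commutation lemmas, each proved directly from the definition of $\valuate{\cdot}{\pval}$ as syntactic substitution of parameters by rational constants: \emph{(i)} $\valuate{\C_1 \land \C_2}{\pval} = \valuate{\C_1}{\pval} \land \valuate{\C_2}{\pval}$, which is immediate since substitution distributes over conjunctions of inequalities; \emph{(ii)} $\valuate{\reset{\C}{\resets}}{\pval} = \reset{\valuate{\C}{\pval}}{\resets}$, which holds because the reset operation only touches clock variables while $\pval$ only touches parameter variables, so the two operations act on disjoint sets of symbols and thus commute; and \emph{(iii)} $\valuate{\timelapse{\C}}{\pval} = \timelapse{(\valuate{\C}{\pval})}$, which follows from the definition of time elapsing (for every clock valuation $\clockval'$ in the left-hand side there exist $\clockval$ and $d \in \setRplus$ with $\wv{\clockval}{\pval} \models \C$ and $\clockval' = \clockval + d$, which is exactly the defining condition of the right-hand side, and conversely).

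Applying these three identities in sequence to the inner expression $\timelapse{(\reset{(\C \land \guard)}{\resets} \land \invariant(\loc'))} \land \invariant(\loc')$ pushes the valuation $\pval$ inward until it lands on the atomic subexpressions $\C$, $\guard$, $\invariant(\loc')$, yielding precisely the constraint in the right-hand side above. The main (minor) obstacle is purely bookkeeping: making sure the commutation is applied in the right order and that the invariants of the target location, which appear twice, are both valuated. None of the steps involves a nontrivial argument beyond the observation that $\pval$ acts on parameter symbols only, while resets and time-elapsing act on clock symbols only, and that substitution distributes over the logical connectives used to build constraints.
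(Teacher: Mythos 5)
Your proof is correct, but note that the paper itself does not prove this statement at all: it is recalled verbatim as an external result (\cite[Lemma~1]{JLR15}) inside the completeness proof of \cref{theorem:EEF:soundness}, so there is no in-paper argument to compare against. Your direct verification is the standard one and matches what one expects the cited proof to be: unfold $\Succ$ from \cref{def:PTA:symbolic} on both sides and check that the substitution $\valuate{\cdot}{\pval}$ commutes with conjunction, with $\reset{\cdot}{\resets}$ (which touches only clock dimensions), and with $\timelapse{\cdot}$ (whose defining condition in the paper is already stated pointwise per parameter valuation, so the commutation is immediate). The only point worth making explicit, which you gloss over, is the satisfiability side-condition in \cref{def:PTA:symbolic}: the symbolic transition exists only when $\C'$ is satisfiable, and the concrete one only when $\valuate{\C'}{\pval}$ is nonempty, so the stated equality should be read as an equality of (possibly empty) successor zones rather than of transitions; this is harmless but deserves a sentence. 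Otherwise the bookkeeping is exactly as you describe and nothing nontrivial is missing.
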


		\begin{lemma}[{\cite[Corollary~2]{JLR15}}]\label{cor:next}
			For each parameter valuation $\pval$, reachable symbolic state $\symbstate$, and state $\state$, 
			we have $\state \in \valuate{\symbstate}{\pval}$ if and only if there is a run of $\valuate{\A}{\pval}$ from the initial state leading to~$\state$.
		\end{lemma}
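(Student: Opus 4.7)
The statement to prove (\cref{cor:next}) is a classical bridge between the concrete and symbolic semantics: a concrete state $\state$ lies in the valuation $\valuate{\symbstate}{\pval}$ of a reachable symbolic state $\symbstate$ if and only if $\state$ is reachable by a concrete run in $\valuate{\A}{\pval}$. Since this is recalled verbatim from \cite{JLR15}, the plan is to give the standard double induction argument, with \cref{lemma:next} (the commutation $\Succ(\valuate{\symbstate}{\pval},\valuate{\edge}{\pval}) = \valuate{\Succ(\symbstate,\edge)}{\pval}$) doing the essential work at each induction step.

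For the direction ``symbolic $\Rightarrow$ concrete'', I would induct on the length $n$ of a symbolic path $\symbstateinit \symbtrans \cdots \symbtrans \symbstate$ witnessing the reachability of~$\symbstate$. In the base case $n=0$, we have $\symbstate = \symbstateinit$, and $\valuate{\symbstateinit}{\pval}$ is exactly the set of concrete states obtained from $(\locinit,\ClocksZero)$ by a single delay transition in $\valuate{\A}{\pval}$ (by the definition of $\symbstateinit$ as the time-elapse of the zero valuation intersected with the initial invariant), so any $\state \in \valuate{\symbstateinit}{\pval}$ is reachable by a trivial run. For the inductive step, suppose $\symbstate' = \Succ(\symbstate,\edge)$ is reached in $n+1$ symbolic steps and pick $\state' \in \valuate{\symbstate'}{\pval}$. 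By \cref{lemma:next} we have $\valuate{\symbstate'}{\pval} = \Succ(\valuate{\symbstate}{\pval},\valuate{\edge}{\pval})$, so there exists $\state \in \valuate{\symbstate}{\pval}$ with $\state \longuefleche{(d,\edge)} \state'$ in $\valuate{\A}{\pval}$. The induction hypothesis provides a run from the initial state to $\state$, which we extend by $(d,\edge)$ to reach $\state'$.

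For the direction ``concrete $\Rightarrow$ symbolic'', I would induct on the number of discrete steps in a run $\varrun$ of $\valuate{\A}{\pval}$ ending in $\state$. In the base case (no discrete transitions, only a leading delay), $\state \in \valuate{\symbstateinit}{\pval}$ directly, and $\symbstateinit$ is trivially reachable. For the inductive step, consider a run of the form $\sinit \longuefleche{\cdots} \state \longuefleche{(d,\edge)} \state'$; by the induction hypothesis $\state \in \valuate{\symbstate}{\pval}$ for some reachable symbolic state $\symbstate = (\loc,\C)$. Then $\state' \in \Succ(\valuate{\symbstate}{\pval}, \valuate{\edge}{\pval})$ by definition of the concrete semantics, which by \cref{lemma:next} equals $\valuate{\Succ(\symbstate,\edge)}{\pval}$; setting $\symbstate' = \Succ(\symbstate,\edge)$ gives a reachable symbolic state with $\state' \in \valuate{\symbstate'}{\pval}$, as required. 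The satisfiability side-condition in \cref{def:PTA:symbolic} is guaranteed here since $\state'$ itself witnesses non-emptiness.

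The only subtle point — and the place I would be most careful — is the treatment of the initial delay and the interaction between the time-elapse operator and parameter valuation in the base case, since $\symbstateinit$ is defined with a time-elapse while concrete runs may or may not begin with a delay; this is routine but must be handled so that the equivalence is tight in both directions. Apart from that, the proof is a mechanical unfolding of definitions together with two applications of \cref{lemma:next}, and I would not attempt to rederive it in full detail given that it is cited from \cite{JLR15}.
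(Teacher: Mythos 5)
Your proposal is correct: the paper itself gives no proof of this statement---it is recalled verbatim from \cite{JLR15} (Corollary~2 there) as a known bridge between concrete and symbolic semantics---and your double induction, driven in both directions by the commutation property of \cref{lemma:next}, is exactly the standard derivation used in that reference. The subtlety you flag about the initial time-elapse in $\symbstateinit$ versus runs that may or may not begin with a delay is the right one to worry about, and your handling of it is adequate.
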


		Now, the algorithm having terminated, it has explored a finite prefix $\treeprefix$ of $\symtree$.
		Let $\pval$ be a parameter valuation.
		Suppose there exists a run $\varrun$ in $\valuate{\A}{\pval}$ that reaches~$\somelocs$.
		Then $\varrun$ is finite and its last state has a location belonging to~$\somelocs$.
		Let $\edge_1,\ldots, \edge_p$ be the edges taken in $\varrun$ and consider the branch in the tree $\treeprefix$ obtained by following this edge sequence on the labels of the arcs in the tree as long as possible.
		If the whole edge sequence is feasible in~$\treeprefix$, then the tree $\treeprefix$ has depth greater than or equal to the size of the sequence and we can apply \cref{lemma:EFtree} to obtain that $\pval \in \EEF(\A, \symbstateinit, \somelocs,\emptyset)$.
		Otherwise, let $\symbstate = (\loc, \C)$ be the symbolic state labeling the last node of the branch, $\edge_k$ be the first edge in $\edge_1,\ldots,\edge_p$ that is not present in the branch and $(\loc, \clockval)$ be the state of $\varrun$ just before taking $\edge_k$.
		Since $(\loc, \clockval)$ has a successor via $\edge_k$, then $\Succ\big( \valuate{\symbstate}{\pval}, \valuate{\edge_k}{\pval} \big)$ is not empty; then using \cref{lemma:next}, $\valuate{\Succ(\symbstate,\edge_k)}{\pval}$ is not empty;
		therefore, $\Succ(\symbstate,\edge_k)$ %
			is not empty.
		Since the node labeled by~$\symbstate$ has no child in~$\treeprefix$, it follows that either $\loc \in \somelocs$ or there exists another node on the branch that is
		labeled by~$\symbstate'$ such that
		$\Ext{\LargestC}{\Clock}(\symbstate) = \Ext{\LargestC}{\Clock}(\symbstate')$.
		
		In the former case, we can apply \cref{lemma:EFtree} to the prefix of~$\varrun$ ending in $(\loc, \clockval)$ and we obtain that $\pval \in \EEF(\A, \symbstateinit, \somelocs, \emptyset)$.  
		
		In the latter case, we have
		$\valuate{\Ext{\LargestC}{\Clock}(\symbstate)}{\pval} = \valuate{\Ext{\LargestC}{\Clock}(\symbstate')}{\pval}$.
		Using now \cref{lemma:extbisimeq}%
		, there exists a state
		$(\loc, \clockval') \in \symbstate'$ that is bisimilar to
		$(\loc, \clockval)$.
		
		Also, by \cref{cor:next}, $(\loc, \clockval')$ is reachable in $\valuate{\A}{\pval}$ via some run $\varrun_1$ along edges $\edge_1\ldots \edge_m$, with $m < k$.
		Also, since $(\loc, \clockval')$ and $(\loc, \clockval)$ are bisimilar, there exists a run $\varrun_2$ that takes the same edges as the suffix of $\varrun$ starting at $(\loc, \clockval)$.
		Let $\varrun'$ be the run obtained by merging $\varrun_1$ and $\varrun_2$ at $(\loc, \clockval')$.
		Run $\varrun'$ has strictly less discrete actions than $\varrun$ and also reaches~$\somelocs$.
		We can thus repeat the same reasoning as we have just done. 
		We can do this only a finite number of times (because the length of the considered run is strictly decreasing) so at some point we have to be in some of the other cases and we
		obtain the expected result. %
	\end{enumerate}
\end{proof}
}

\LongVersion{\preuveTheoremEEFcorrectness{}}

\subsection{Extending the $\LargestC$-extrapolation to individual bounds}

Our first technical contribution is to extend the extrapolation from~\cite{ALR15} to \emph{individual} clock bounds, instead of a global one, in the line of what has been proposed for non-parametric TAs~\cite{BBLP06}.

\begin{definition}[$\vec{\LargestC}$-extrapolation]
\LongVersion{Let $\Clock = \{ \clock_1, \dots, \clock_\ClockCard \} $ the set of clocks of the PTA.}
Let  $\vec{\LargestC} = \{ \LargestC(\clock_1), \dots, \LargestC(\clock_\ClockCard) \}$ be a set of non-negative integer constants.
	The $\vec{\LargestC}$-extrapolation, denoted by $\Ext{\vec{\LargestC}}{\Clock}$, is the composition (in any order) of all $\Ext{\LargestC(\clock)}{\clock}$ for all $\clock \in \Clock$.
\end{definition}

All we need to do for the results from \cite{ALR15} to hold on the $\vec{M}$-extrapolation is to adapt \cite[Lemmas~1 and~5]{ALR15}.

\begin{lemma}\label{lemma:cylcommu}
    For all polyhedra $\C$, integers $\LargestC(\clock), \LargestC(\clock')\geq 0$ and clock variables $\clock$ and $\clock'$, we have $\Ext{\LargestC(\clock)}{\clock}\big(\Ext{\LargestC(\clock')}{\clock'}(\C)\big)=\Ext{\LargestC(\clock')}{\clock'}\big(\Ext{\LargestC(\clock)}{\clock}(\C)\big)$.
    \label{lemma:extxy}
\end{lemma}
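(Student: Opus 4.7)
The plan is to unfold both compositions according to \cref{definition:x-extrapolation} and verify that the resulting region decompositions coincide. The result is essentially \cite[Lemma~1]{ALR15}, except that the extrapolation constants may differ between the two clocks; crucially, the proof of \cite[Lemma~1]{ALR15} never mixes the two constants, so the same argument adapts.

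First I would unfold $\Ext{\LargestC(\clock)}{\clock}\big(\Ext{\LargestC(\clock')}{\clock'}(\C)\big)$ by substituting $\Ext{\LargestC(\clock')}{\clock'}(\C) = A_1 \cup A_2$, where $A_1 = \C \cap (\clock' \leq \LargestC(\clock'))$ and $A_2 = \cylinder{\clock'}\big(\C \cap (\clock' > \LargestC(\clock'))\big) \cap (\clock' > \LargestC(\clock'))$, and then apply $\Ext{\LargestC(\clock)}{\clock}$ to each part, using that both the intersection and the cylindrify-then-restrict operation distribute over unions. This produces four pieces, one for each combination of ``low'' ($\clock \leq \LargestC(\clock)$) versus ``high'' ($\clock > \LargestC(\clock)$) for $\clock$, and similarly for $\clock'$. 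I would then carry out the symmetric expansion on the right-hand side to obtain four analogous pieces, and match them pairwise.

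The algebraic facts needed are all immediate from the set-theoretic definitions of $\cylinder{\cdot}(\cdot)$ and $\cap$: \emph{(a)} cylindrifications along distinct variables commute, \ie{} $\cylinder{\clock}\big(\cylinder{\clock'}(\C')\big) = \cylinder{\clock'}\big(\cylinder{\clock}(\C')\big)$ for any polyhedron~$\C'$, since both sides existentially quantify over two independent coordinates; \emph{(b)} cylindrification along $\clock$ commutes with intersection by any constraint that does not mention $\clock$, in particular $\clock' \compOp \LargestC(\clock')$; and \emph{(c)} intersection is commutative and associative. The main (and essentially only) obstacle is careful bookkeeping: checking that the ``high--high'' piece, which involves both cylindrifications, coincides on the two sides using fact~\emph{(a)}, while the three remaining pieces (low--low, low--high, high--low) match by repeated application of~\emph{(b)} and~\emph{(c)}. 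Since none of these equalities depends on $\LargestC(\clock)$ and $\LargestC(\clock')$ being equal, the argument extends the single-constant case of \cite[Lemma~1]{ALR15} without any essential modification.
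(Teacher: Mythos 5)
Your proposal is correct and follows essentially the same route as the paper's own proof, which simply cites the two key facts you identify as (a) and (b) — commutation of cylindrifications along distinct variables, and commutation of $\cylinder{\clock}$ with intersection by constraints not mentioning $\clock$ — and leaves the four-piece case analysis implicit. Your version merely spells out the bookkeeping that the paper omits.
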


\newcommand{\prooflemmacylcommu}{
\begin{proof}
	The result comes from the following facts:
	\begin{enumerate}
		\item $\cylinder{\clock}\big(\cylinder{\clock'}(\C)\big) = \cylinder{\clock'}\big(\cylinder{\clock}(\C)\big)$;
		\item for $\clock \neq \clock', \cylinder{\clock}(\C) \cap (\clock' \bowtie \LargestC(\clock')) = \cylinder{\clock}\big(\C \cap (\clock' \bowtie \LargestC(\clock'))\big)$ for ${\bowtie} \in \{ <,\leq,\geq,> \}$.
	\end{enumerate}
\end{proof}
}

\LongVersion{\prooflemmacylcommu{}}

We now extend \cite[Lemma~5]{ALR15} to $\Ext{\vec{\LargestC}}{}$:

\begin{lemma}[$\vec{\LargestC}$ and bisimilarity]\label{lemma:extbisim:Mvect}
    Let~$\A$ be a PTA and $\symbstate$ be a symbolic state of~$\A$.
    Let $\clock$ be a clock, $\LargestC(\clock) \in \setN$\LongVersion{ an integer constant} greater than or equal to $\maxC^{\clock}(\A)$, $\pval$ be a parameter valuation and $(\loc, \clockval)\in \valuate{\Ext{\LargestC(\clock)}{\clock}(\symbstate)}{\pval})$ be a concrete state.
    There exists a state $(\loc, \clockval')\in\valuate{\symbstate}{\pval}$ such that $(\loc, \clockval)$ and $(\loc, \clockval')$ are bisimilar.
\end{lemma}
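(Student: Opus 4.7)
The plan is to mirror the proof of \cite[Lemma~5]{ALR15}, adapting it to the per-clock constant $\LargestC(\clock)$. The key observation is that \cref{lemma:bisim} (recalled from \cite{BBLP06}) is already stated using individual clock bounds, so the bisimulation relation $\equiv_\LargestC$ already accommodates $\vec{\LargestC}$ out of the box; only the witness construction on the symbolic side needs to be redone.

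First, I unfold the definition of $\Ext{\LargestC(\clock)}{\clock}(\symbstate)$ and perform a case split on $\clockval(\clock)$. In the easy case, $\clockval(\clock) \leq \LargestC(\clock)$, so by \cref{definition:x-extrapolation} the valuation $\clockval$ already lies in the first disjunct $\valuate{\symbstate}{\pval} \cap (\clock \leq \LargestC(\clock))$; then picking $\clockval' = \clockval$ makes $(\loc,\clockval)$ and $(\loc,\clockval')$ identical, hence trivially bisimilar.

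In the other case, $\clockval(\clock) > \LargestC(\clock)$, so $\clockval$ belongs to $\cylinder{\clock}\bigl(\valuate{\symbstate}{\pval} \cap (\clock > \LargestC(\clock))\bigr) \cap (\clock > \LargestC(\clock))$. By definition of cylindrification, there exists a witness $\clockval' \in \valuate{\symbstate}{\pval}$ with $\clockval'(\clock) > \LargestC(\clock)$ and $\clockval'(\clock'') = \clockval(\clock'')$ for every other clock $\clock''$. Since $\LargestC(\clock) \geq \maxC^{\clock}(\A)$ by hypothesis, both $\clockval(\clock)$ and $\clockval'(\clock)$ exceed $\maxC^{\clock}(\A)$, while the other clocks coincide; this is exactly the condition for $\clockval \equiv_\LargestC \clockval'$ in \cref{lemma:bisim} (after valuating parameters by $\pval$, so that $\valuate{\A}{\pval}$ is a TA with per-clock constant $\LargestC(\clock) \geq \maxC^{\clock}(\valuate{\A}{\pval})$). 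Thus $(\loc,\clockval)$ and $(\loc,\clockval')$ are bisimilar, as required.

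There is no real obstacle: the work is essentially bookkeeping, and the proof does not differ structurally from that of \cite[Lemma~5]{ALR15}. The only point worth checking is that the bound $\maxC^{\clock}(\A)$ relevant for the per-clock bisimulation is indeed the one used in \cref{def:maximalC}, i.e.\ the maximum over simple clock guards in which $\clock$ occurs with a non-zero coefficient; this is consistent with how $\equiv_\LargestC$ is defined clockwise in \cref{lemma:bisim}, so the argument goes through without further complication.
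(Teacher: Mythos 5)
Your proof is correct and follows essentially the same route as the paper's: a case split on whether the valuation lies in the original zone or in the cylindrified part, extraction of a witness $\clockval'$ agreeing on all clocks other than~$\clock$ and exceeding $\LargestC(\clock)$ on~$\clock$, and then an appeal to \cref{lemma:bisim}, which is indeed already stated with per-clock constants. The only cosmetic difference is that you split on $\clockval(\clock) \leq \LargestC(\clock)$ versus membership of the concrete state in $\valuate{\symbstate}{\pval}$, which are equivalent here by the definition of the extrapolation.
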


\newcommand{\prooflemmaextbisim}{
\begin{proof}
    If $(\loc, \wv{\clockval}{\pval}) \in \symbstate$, then the results holds trivially.
	Otherwise, it means that there exists some clock $\clock$ such that 
    $(\loc, \wv{\clockval}{\pval}) \in \cylinder{\clock}(\symbstate\cap (\clock>\LargestC(\clock)))\cap(\clock>\LargestC(\clock))$. This implies that
		$\valuate{\symbstate (\clock>\LargestC(\clock))}{\pval}\neq \emptyset$ and $\valuate{\clock}{\clockval}>\LargestC(\clock)$.
        Therefore, and using the definition of $\cylinder{\clock}$, there exists $(\loc, \wv{\clockval'}{\pval}) \in \symbstate \cap (\clock>\LargestC(\clock))$ such that for all $\clock'\neq \clock,\valuate{\clock'}{\clockval'}=\valuate{\clock'}{\clockval}$. We also have $\valuate{\clock}{\clockval'}>\LargestC(\clock)$, which means that $\clockval'\equiv_\LargestC \clockval$ and by \cref{lemma:bisim}, we obtain the expected result.
\end{proof}
}

\LongVersion{
	\prooflemmaextbisim{}
}

Given~$\LargestC \in \setN$, given a vector $\vec{\LargestC}$, note that, whenever $\vec{\LargestC}(\clock) \leq \LargestC$ for all $\clock \in \Clock$, then the $\vec{\LargestC}$-extrapolation is necessarily coarser than the $\LargestC$-extrapolation.

Let $\vec{\LargestC}$ be such that, for all~$\clock$, $\vec{\LargestC}(\clock) = \maxC^{\clock}(\A)$.
Let \EEFvect{} denote the modification of \EEF{} where $\Ext{\LargestC}{\Clock}$ is replaced with $\Ext{\vec{\LargestC}}{\Clock}$ (\cref{algo:EEF:recursion} in \cref{algo:EEF}).
That is, instead of computing the $\LargestC$-extrapolation of each symbolic state, we compute its $\vec{\LargestC}$-extrapolation.
\cref{figure:Mext_bounded} illustrates its effect on the state space of \cref{figure:bounded_pta:pta}.

\begin{figure}[t!]
	\centering
	\begin{subfigure}[b]{0.45\textwidth}
		\centering
		\scalebox{.8}{
		\begin{tikzpicture}[>=latex', xscale=1.33, yscale=.7,every node/.style={scale=1}]

		\node[symbstate] at (0,0) (c0) {($\loc_0$, $\clockx = \clocky \leq 1$)};
		\node[symbstate] at (0,-1.5) (c1) {($\loc_0$, $\clockx \leq \clocky \leq \clockx+1 \land \clockx \leq 1$)};
		\node[symbstate] at (2.75,-1.5) (c1') {($\loc_1$, $0 < \paramp \leq 5$)};
		\node[symbstate] at (0,-3) (c2) {($\loc_0$, $\clockx \leq \clocky \leq \clockx+2 \land \clockx \leq 1$)};
		\node[symbstate] at (0,-4.5) (c3) {($\loc_0$, $\clockx \leq \clocky \leq \clockx+3 \land \clockx \leq 1$)};
		\node[symbstate] at (0,-6) (c4) {($\loc_0$, $\clockx \leq \clocky \leq \clockx+4 \land \clockx \leq 1$)};
		\node[symbstate] at (0,-8) (c5) {
			\begin{tabular}{cc}
			& $\clockx \leq \clocky \leq 5 \land \clockx \leq 1$ \\
			$\loc_0$, & $\cup$ \\
			& $5 < \clocky \land 0 < \clockx \leq 1$ \\
			\end{tabular}
		};
		\node[symbstate] at (0,-10.5) (c6) {
			\begin{tabular}{cc}
			& $\clockx \leq \clocky \leq 5 \land \clockx \leq 1$ \\
			$\loc_0$, & $\cup$ \\
			& $5 < \clocky \land \clockx \leq 1$ \\
			\end{tabular}
		};
		\draw[->] (c0) -- (c1);
		\draw[->] (c1.east) -- (c1');
		\draw[->] (c1) -- (c2);
		\draw[->] (c2.east) -- (c1');
		\draw[->] (c2) -- (c3);
		\draw[->] (c3.east) -- (c1');
		\draw[->] (c3) -- (c4);
		\draw[->] (c4.east) -- (c1');
		\draw[->] (c4) -- (c5);
		\draw[->] (c5.east) -- (c1');
		\draw[->] (c5) -- (c6);
		\draw[->] (c6.east) -- (c1'.south);
		\path (c6) edge [max distance=2em, loop left] (c6);
		\end{tikzpicture}
		}
		\caption{Simplified state space of \cref{figure:bounded_pta:pta} with $\LargestC$-extrapolation.}
		\label{figure:Mext_bounded:global}
	\end{subfigure}
	\hspace{10mm}
	\begin{subfigure}[b]{0.45\textwidth}
		\centering
		\scalebox{.8}{
		\begin{tikzpicture}[>=latex', xscale=1.33, yscale=.7,every node/.style={scale=1}]
		\node[symbstate] at (0,0) (c0) {($\loc_0$, $\clockx = \clocky \leq 1$)};
		\node[symbstate] at (0,-2) (c1) {
			\begin{tabular}{cc}
			& $\clockx \leq \clocky \leq 1$ \\
			$\loc_0$, & $\cup$ \\
			& $1 < \clocky \land 0 < \clockx \leq 1$ \\
			\end{tabular}
		};
		\node[symbstate] at (2.5,-2) (c1') {($\loc_1$, $0 < \paramp \leq 5$)};
		\node[symbstate] at (0,-4.5) (c2) {
			\begin{tabular}{cc}
			& $\clockx \leq \clocky \leq 1$ \\
			$\loc_0$, & $\cup$ \\
			& $1 < \clocky \land \clockx \leq 1$ \\
			\end{tabular}
		};
		\draw[->] (c0) -- (c1);
		\draw[->] (c1) -- (c1');
		\draw[->] (c1) -- (c2);
		\draw[->] (c2.east) -- (c1');
		\path (c2) edge [max distance=2em, loop left] (c2);
		\end{tikzpicture}
		}
		\caption{Simplified state space of \cref{figure:bounded_pta:pta} with $\vec{\LargestC}$-extrapolation where $\LargestC(\clockx)=5$ and $\LargestC(\clocky)=1$.
As $\vec{\LargestC}$-extrapolation differentiates the maximal constant of each clock, the extrapolation is applied on $\clocky$ after only one loop.}
		\label{figure:Mext_bounded:vecteur}
	\end{subfigure}
	\caption{Comparison between ${\LargestC}$-extrapolation and $\vec{\LargestC}$-extrapolation.}
	\label{figure:Mext_bounded}
\end{figure}
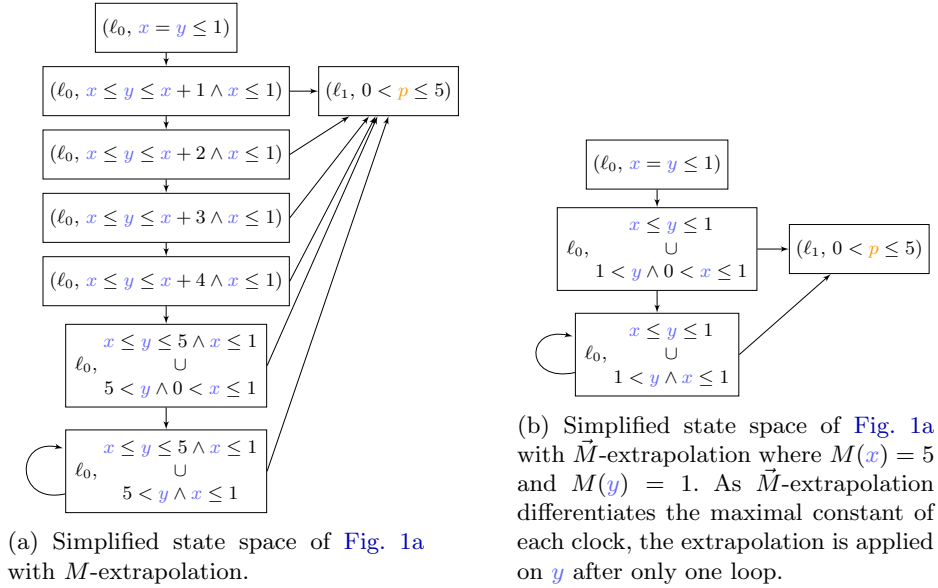

\newcommand{\enonceTheoremEEFvectCorrectness}{
	Let~$\A$ be a PTA with initial symbolic state~$\symbstateinit$, and $\somelocs \subseteq \Loc$ a set of target locations.
	Assume $\EEFvect(\A, \symbstateinit, \somelocs, \emptyset)$ terminates.
	We have:
	\begin{enumerate}
		\item Soundness: If $\pval \in \EEFvect(\A, \symbstateinit, \somelocs, \emptyset)$ then $\somelocs$ is reachable in $\valuate{\A}{\pval}$;
		\item Completeness: For all~$\pval$, if $\somelocs$ is reachable in $\valuate{\A}{\pval}$ then $\pval \in \EEFvect(\A, \symbstateinit, \somelocs, \emptyset)$.
	\end{enumerate}
}

\begin{proposition}\label{prop:EEFvect:correctness}
	\enonceTheoremEEFvectCorrectness{}
\end{proposition}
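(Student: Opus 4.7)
The plan is to mimic the proof of \cref{theorem:EEF:soundness} almost verbatim, substituting the vector-based ingredients (\cref{lemma:cylcommu,lemma:extbisim:Mvect}) for the scalar ones (\cref{lemma:ALR15-Lemma4,lemma:extbisim:M}) wherever they appear. The high-level structure is identical because \EEFvect{} differs from \EEF{} only in the operator applied at \cref{algo:EEF:recursion}, and the correctness argument of \EEF{} factors cleanly through two abstract properties of the extrapolation: (i) commutativity with parameter valuation, so that extrapolating a parametric zone and then valuating yields the same set as valuating and then extrapolating, and (ii) bisimilarity, so that every concrete state in the extrapolation of a symbolic state admits a bisimilar witness in the original symbolic state.

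For soundness, I would reuse \cref{lemma:EFtree} unchanged: that lemma only requires that the algorithm perform a post-order depth-first traversal of some finite prefix of~$\symtree$, and does not depend on whether \emph{scalar} or \emph{vector} extrapolation is applied. Hence $\pval \in \EEFvect(\A, \symbstateinit, \somelocs, \emptyset)$ implies that some state in~$\valuate{\symbstateinit}{\pval}$ reaches a location in $\somelocs$ via some run in~$\valuate{\A}{\pval}$, which is exactly soundness.

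For completeness, I would first establish the vector analog of \cref{lemma:extbisimeq}: if $\valuate{\Ext{\vec{\LargestC}}{\Clock}(\symbstate)}{\pval} = \valuate{\Ext{\vec{\LargestC}}{\Clock}(\symbstate')}{\pval}$, then every $(\loc,\clockval) \in \valuate{\symbstate}{\pval}$ admits a bisimilar $(\loc,\clockval') \in \valuate{\symbstate'}{\pval}$. This follows from (i) the fact that $\Ext{\vec{\LargestC}}{\Clock}$ is a composition of single-clock operators $\Ext{\LargestC(\clock)}{\clock}$ which commute by \cref{lemma:cylcommu} and each commute with parameter valuation in the sense of \cref{lemma:ALR15-Lemma4}, and (ii) iterating \cref{lemma:extbisim:Mvect} once per clock: each application replaces a clock valuation lying above the threshold $\LargestC(\clock) \geq \maxC^\clock(\A)$ by a bisimilar one inside~$\symbstate$, which by \cref{lemma:bisim} preserves bisimilarity across the remaining clocks. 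I would then re-run the completeness argument of \cref{theorem:EEF:soundness} line by line: given a run $\varrun$ in $\valuate{\A}{\pval}$ reaching $\somelocs$, follow its edge sequence in the explored prefix until it is blocked; in the blocked case, use the vector version of \cref{lemma:extbisimeq} together with \cref{lemma:next,cor:next} to exhibit a strictly shorter run reaching $\somelocs$, and conclude by well-founded induction on the length of~$\varrun$.

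The main obstacle—and it is mild—is the vector bisimilarity argument: the $\vec{\LargestC}$-extrapolation thresholds differ from clock to clock, so one must ensure that successively replacing clock values above their respective thresholds does not disturb previously adjusted clocks. This is handled by \cref{lemma:cylcommu} together with the fact that $\equiv_\LargestC$ of \cref{lemma:bisim} is defined clockwise, so bisimilarity composes over the clock-by-clock substitutions. Everything else is bookkeeping inherited from the proof of \cref{theorem:EEF:soundness}.
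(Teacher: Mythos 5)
Your proposal is correct and follows essentially the same route as the paper, whose own proof is a one-liner stating that the result follows from the proof of \cref{theorem:EEF:soundness} by substituting \cref{lemma:extbisim:Mvect} for \cref{lemma:extbisimeq}. Your write-up is simply a more explicit version of that substitution, spelling out the vector analog of \cref{lemma:extbisimeq} via \cref{lemma:cylcommu} and the commutation with parameter valuation, which the paper leaves implicit.
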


\newcommand{\preuveTheoremEEFvectCorrectness}{%
\begin{proof}
	The result follows immediately from the proof of \cref{theorem:EEF:soundness}, by applying \cref{lemma:extbisim:Mvect} instead of \cref{lemma:extbisimeq}.
\end{proof}
}

\LongVersion{\preuveTheoremEEFvectCorrectness{}}

\section{$\vec{\LargestC}$-extrapolation on unbounded PTAs} \label{section:M-Ex_unbounded}

In this section, we extend the $\vec{\LargestC}$-extrapolation to subclasses of (unbounded) PTAs.
This requires to be able to identify for each clock $\clock \in \Clock$ a constant $\LargestC(\clock)$ such that given a symbolic state $\symbstate$ and a parameter valuation $\pval$, for any concrete state in $\pval(\Ext{\LargestC(\clock)}{\clock}(\symbstate))$ there exists a bisimilar state in $\pval(\symbstate)$, \ie{} \cref{lemma:extbisim:Mvect} holds\LongVersion{ true}.
\LongVersion{%

}%
We will consider
\begin{ienumerate}%
	\item L-PTAs and U-PTAs (\cref{ss:L-U}),
	\item bounded PTAs with additional \emph{unbounded} lower-bound or upper-bound parameters (\cref{ss:bPTA+L-U}),
		and
	\item the full class of PTAs to which we apply extrapolation only on bounded parameters (\cref{ss:partial-PTAs}).
\end{ienumerate}

\subsection{$\vec{\LargestC}$-extrapolation on unbounded L-PTAs and U-PTAs}\label{ss:L-U}

\LongVersion{%
\subsubsection{Recalling L-PTAs and U-PTAs}
}

\LongVersion{%
	We will use results from \cite{BlT09}, where the authors propose a constant~$\LargestP$ for unbounded parameters such that any parameter valuation greater than~$\LargestP$ will exhibit similar behaviors in regard of infinite accepting runs.
	Specifically, a (different) constant~$\LargestP$ can be computed on unbounded L-PTAs and U-PTAs\LongVersion{, which are subset of the general~PTAs}.
}

\LongVersion{%
	First, let us recall the definitions of L-PTAs and U-PTAs~\cite{BlT09}.
	An L-PTA (respectively U-PTA) is a PTA where each parameter always appears as a lower- (respectively upper-)bound when compared to a clock.
}

\begin{definition}[L-PTA and U-PTA \cite{BlT09}]
	A PTA $\A$ is an \emph{L-PTA} (resp.\ \emph{U-PTA}) if, for each guard $\clock \compOp \sum_{1 \leq i \leq \ParamCard} \alpha_i \param_i + z$ of~$\SCGuards(\A)$, for all~$i$\ShortVersion{ with $\alpha_i \neq 0$}:
	\begin{itemize}
		\LongVersion{\item $\alpha_i = 0$, or}
		\item $\alpha_i > 0$ and ${\compOp} \in \{\geq, >\}$ (respectively ${\compOp} \in \{<, \leq \}$), or
		\item $\alpha_i < 0$ and ${\compOp} \in \{<, \leq \}$ (respectively ${\compOp} \in \{\geq, >\}$).
	\end{itemize}
\end{definition}

L-PTAs and U-PTAs feature a well-known monotonicity property: enlarging a parameter valuation in a U-PTA (resp.\ decreasing in an L-PTA) can only \emph{add} behaviors, as recalled in the following lemma:

\begin{lemma}[\cite{BlT09}]\label{lemma:monotonicity}
	Given a U-PTA (resp.\ L-PTA)~$\A$,
	given two valuations $\pval_1, \pval_2$ with
	$\pval_1 \leq \pval_2$ (resp.\ $\pval_1 \geq \pval_2$),
	then
	$\Runs(\pval_1(\A)) \subseteq \Runs(\pval_2(\A))$.
\end{lemma}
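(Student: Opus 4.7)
The plan is to prove the U-PTA case explicitly; the L-PTA case then follows by a symmetric argument, swapping the roles of upper/lower comparisons and reversing the direction of the hypothesis on parameter valuations. Fix a U-PTA~$\A$ and parameter valuations $\pval_1 \leq \pval_2$. The entire argument rests on a single syntactic monotonicity observation at the guard level.

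First I would establish the following guard-level fact: for any simple clock guard $\guard \in \SCGuards(\A)$, written $\clock \compOp \sum_i \alpha_i \param_i + z$, and any clock valuation $\clockval$, if $\clockval \models \valuate{\guard}{\pval_1}$ then $\clockval \models \valuate{\guard}{\pval_2}$. By the U-PTA definition, each index~$i$ with $\alpha_i \neq 0$ falls into one of two cases: either $\alpha_i > 0$ and ${\compOp} \in \{<, \leq\}$, or $\alpha_i < 0$ and ${\compOp} \in \{\geq, >\}$. In the first case, $\pval_1(\param_i) \leq \pval_2(\param_i)$ combined with $\alpha_i > 0$ makes the right-hand side weakly larger, which preserves the upper-bound inequality. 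In the second case, the right-hand side becomes weakly smaller, which preserves the lower-bound inequality. This extends immediately to conjunctions, so the same monotonicity holds for arbitrary clock guards and, in particular, for invariants.

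Next I would proceed by induction on the length of a run $\varrun = \state_0, (d_0, \edge_0), \state_1, \dots \in \Runs(\valuate{\A}{\pval_1})$, showing that the very same alternating sequence of concrete states, edges and delays is a run of $\valuate{\A}{\pval_2}$. The base case is immediate: $\state_0 = (\locinit, \ClocksZero)$, and the fact that $\ClocksZero$ satisfies $\valuate{\invariant(\locinit)}{\pval_1}$ implies it satisfies $\valuate{\invariant(\locinit)}{\pval_2}$ by the guard-level fact. For the inductive step $\state_i \longuefleche{(d_i, \edge_i)} \state_{i+1}$, monotonicity of the current location's invariant makes the delay part still legal under $\pval_2$ (all intermediate valuations $\clockval + d'$ with $d' \in [0, d_i]$ remain inside the relaxed invariant), while monotonicity of both the guard of $\edge_i$ and the invariant of its target location makes the discrete part still legal. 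Hence $\varrun \in \Runs(\valuate{\A}{\pval_2})$.

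The main obstacle is essentially non-existent at the conceptual level: once the guard-level observation is in hand, every subsequent step is a direct unfolding of the semantics. The only point requiring care is the bookkeeping of strict versus non-strict inequalities (${\compOp} \in \{<, \leq\}$ versus ${\compOp} \in \{\geq, >\}$) to make sure that satisfaction truly is preserved when we only assume the weak inequality $\pval_1 \leq \pval_2$, which is precisely where the sign convention in the U-PTA definition is crucial.
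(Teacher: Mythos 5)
Your proof is correct. Note that the paper itself gives no proof of this lemma---it is recalled verbatim from \cite{BlT09} as a known monotonicity property---so there is nothing to compare against; your argument (guard-level monotonicity from the sign convention in the U-PTA definition, lifted to invariants and then to whole runs transition by transition) is the standard and sound way to establish it. The only cosmetic point is that for infinite runs ``induction on the length'' should be read as verifying that each individual delay/discrete step remains legal under $\pval_2$, since being a run is a purely local condition on consecutive transitions.
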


For any L-PTA $\A$, as per \cite[Theorem~3]{BlT09}, there exists a constant bound~$\LargestP$, such that for all valuations $\pval_1, \pval_2$ with $\pval_1 \geq \pval_2 \geq \pval_{\LargestP}$ (where $\pval_{\LargestP}$ denotes the parameter valuation assigning $\LargestP$ to each parameter), if $\pval_2(\A)$ provides an infinite accepting run then so does $\pval_1(\A)$.
\LongVersion{%
	Since $\A$ is an L-PTA, $\pval_2(\A)$ includes all the possible executions of $\pval_1(\A)$, which is given by \cref{lemma:monotonicity}.
	That is,
if $\pval_1(\A)$ yields an infinite accepting run, then so does $\pval_2(\A)$.
Therefore, for any valuations $\pval \geq \pval_{\LargestP}$ and $\pval' \geq \pval_{\LargestP}$, $\pval(\A)$ yields an infinite accepting run iff $\pval'(\A)$ yields an infinite accepting run.

}%
A dual result is shown for U-PTAs\LongVersion{ in~\cite[Theorem~6]{BlT09}}.
\LongVersion{%
	For any U-PTA $\A$, there exists a constant bound $\LargestP$ such that for all valuations $\pval_1, \pval_2$ with $\pval_1 \geq \pval_2 \geq \pval_{\LargestP}$, if $\pval_1(\A)$ yields an infinite accepting run then so does $\pval_2(\A)$.
	As $\A$ is a U-PTA, $\pval_1(\A)$ includes all the possible executions of $\pval_2(\A)$, hence if $\pval_2(\A)$ yields an infinite accepting run then so does $\pval_1(\A)$.
	Therefore, for a given valuation $\pval \geq \pval_{\LargestP}$, if $\pval(\A)$ yields an infinite accepting run, then so does $\pval '(\A)$ for any $\pval ' \geq \pval_{\LargestP}$.
}%
Formally:

\begin{lemma}[{\cite[Theorems~3 and~6]{BlT09}}]\label{lemme-BlT09-infinite}
	Given a U-PTA (resp.\ L-PTA)~$\A$ with $\LargestP$ the constant bound defined in~\cite{BlT09},
	given two valuations $\pval_1 \geq \pval_{\LargestP}$ and $\pval_2 \geq \pval_{\LargestP}$,
	there exists an infinite accepting run in $\valuate{\A}{\pval_1}$
	iff
	there exists an infinite accepting run in $\valuate{\A}{\pval_2}$.
\end{lemma}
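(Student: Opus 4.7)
The plan is to prove the equivalence by using $\pval_{\LargestP}$ as a pivot, combining Lemma~\ref{lemma:monotonicity} with the one-directional transfer results underlying \cite[Theorems~3 and~6]{BlT09} that are summarised in the paragraphs just above the statement. By symmetry in the roles of $\pval_1$ and $\pval_2$, it suffices to show one direction: if $\valuate{\A}{\pval_1}$ has an infinite accepting run, then so does $\valuate{\A}{\pval_2}$.

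For the U-PTA case, I would argue in two steps. First, since $\pval_1 \geq \pval_{\LargestP}$, I apply the U-PTA result of \cite{BlT09} (recalled in the paragraph preceding the lemma) to transport the infinite accepting run downward from $\pval_1$ to $\pval_{\LargestP}$, obtaining an infinite accepting run in $\valuate{\A}{\pval_{\LargestP}}$. Second, since $\pval_2 \geq \pval_{\LargestP}$, U-PTA monotonicity (Lemma~\ref{lemma:monotonicity}) gives $\Runs(\valuate{\A}{\pval_{\LargestP}}) \subseteq \Runs(\valuate{\A}{\pval_2})$, so the run lifts back up to $\valuate{\A}{\pval_2}$.

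For the L-PTA case, the argument is dual, with the order of the two steps reversed. Since $\pval_1 \geq \pval_{\LargestP}$, L-PTA monotonicity (Lemma~\ref{lemma:monotonicity}, whose direction is reversed compared to the U-PTA case) yields $\Runs(\valuate{\A}{\pval_1}) \subseteq \Runs(\valuate{\A}{\pval_{\LargestP}})$, giving an infinite accepting run in $\valuate{\A}{\pval_{\LargestP}}$. Then, since $\pval_2 \geq \pval_{\LargestP}$, the L-PTA result of \cite{BlT09} transports this run upward to $\valuate{\A}{\pval_2}$.

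No step should be a real obstacle, since both ingredients are already stated in the paper: the lemma essentially packages the two one-sided transfer results of \cite{BlT09} into a single bidirectional equivalence that we will need for the extrapolation arguments further on. The only point requiring care is to correctly match the direction of monotonicity (which is opposite for L-PTAs and U-PTAs) with the direction in which the \cite{BlT09} bound preserves infinite accepting runs, so that the composition of the two implications is well-formed in each case.
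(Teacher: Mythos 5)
Your proposal is correct and matches the paper's own justification: the paper (in its long version) also pivots through $\pval_{\LargestP}$, composing the one-directional transfer of \cite[Theorems~3 and~6]{BlT09} with the monotonicity of \cref{lemma:monotonicity} in exactly the two orders you describe for the U-PTA and L-PTA cases. No gap; the directions of monotonicity and of the transfer results are matched correctly in both cases.
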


\paragraph{Computation of $\overLargestP$}
Given an L-PTA (respectively U-PTA) $\A$, the value given in~\cite{BlT09} is $\LargestP = k (R + 1) + c + 1$ (respectively $\LargestP = 8 k (R + 1) + c + 1$), where $k$ is the number of parametric clocks of~$\A$, $R$ is the number of clock regions obtained when the parameter valuation is $0$ for all parameters, and $c$ is the greatest non-parametric constant in absolute value among all linear expressions.
More precisely, all linear expression being of the form $\sum_{1 \leq i \leq \ClockCard} \alpha_i \clock_i + \sum_{1 \leq j \leq \ParamCard} \beta_j \param_j + d \compOp 0$, $c$ is the maximum over all $|d|$.
Although $k$ and $c$ are obtained syntactically, $R$ needs to be computed.
As $\LargestP$ acts as a lower bound, using an over-approximation of~$R$ would still guarantee the correctness of \cref{lemme-BlT09-infinite}.
From \cite[Lemma 4.5]{AD94}, the number of clock regions is bounded by $\widehat{R} = 2^{|\Clock|}  |\Clock|!  \prod_{\clock \in \Clock} (2c_\clock + 2)$ with $\Clock$ the set of clocks and $c_\clock$ the greatest constant over~$\clock$ (either as a upper or lower bound)---which can both be obtained syntactically.
We define $\overLargestP$ as the constant defined in~\cite{BlT09} for an L-PTA (resp.\ U-PTA)~$\A$, where we use~$\widehat{R}$ \LongVersion{(the aforementioned over-approximation of the number of clock regions) }instead of\LongVersion{ their actual number}~$R$.

\subsubsection{Formal results}

We first adapt \cref{lemme-BlT09-infinite} to our new constant $\overLargestP$:

\begin{lemma}\label{lemme-BlT09-infinite-adaptation}
	Given a U-PTA (resp.\ L-PTA)~$\A$,
	given two valuations $\pval_1 \geq \pval_{{\overLargestP}}$ and $\pval_2 \geq \pval_{{\overLargestP}}$,
	there exists an infinite accepting run in $\valuate{\A}{\pval_1}$
	iff
	there exists an infinite accepting run in $\valuate{\A}{\pval_2}$.
\end{lemma}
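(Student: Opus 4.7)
The plan is to reduce the statement directly to \cref{lemme-BlT09-infinite} by observing that $\overLargestP$ dominates the original constant $\LargestP$ from~\cite{BlT09}. More precisely, first I would recall that $\LargestP$ is defined via the exact number of clock regions~$R$, whereas $\overLargestP$ is defined identically except that $R$ is replaced by the over-approximation $\widehat{R} = 2^{|\Clock|}\,|\Clock|!\,\prod_{\clock \in \Clock} (2c_\clock + 2)$ from \cite[Lemma~4.5]{AD94}. Since $\widehat{R} \geq R$ and the formulas $k(R+1) + c + 1$ (L-PTA case) and $8k(R+1) + c + 1$ (U-PTA case) are both monotonically increasing in~$R$, we immediately get $\overLargestP \geq \LargestP$.

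From this inequality on constants, I derive the corresponding componentwise inequality on the constant parameter valuations: $\pval_{\overLargestP} \geq \pval_{\LargestP}$, since each parameter is assigned the respective constant. Hence for any valuation $\pval$ with $\pval \geq \pval_{\overLargestP}$, by transitivity of~$\geq$ on valuations, we have $\pval \geq \pval_{\LargestP}$.

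Applying this observation to both $\pval_1$ and $\pval_2$ in the hypothesis, we obtain that $\pval_1 \geq \pval_{\LargestP}$ and $\pval_2 \geq \pval_{\LargestP}$. The conclusion then follows by direct invocation of \cref{lemme-BlT09-infinite}: there exists an infinite accepting run in $\valuate{\A}{\pval_1}$ iff there exists an infinite accepting run in $\valuate{\A}{\pval_2}$.

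I do not expect any real obstacle here, since the proof is essentially a monotonicity argument combined with a reuse of the already-established result. The only slightly delicate point is making sure that both formulas defining $\LargestP$ in~\cite{BlT09} (one for L-PTAs, one for U-PTAs) are indeed monotone in~$R$, which is immediate from their shapes $k(R+1)+c+1$ and $8k(R+1)+c+1$ with $k \geq 0$. No new bisimulation or semantic reasoning is needed, since the underlying behavioral equivalence comes entirely from \cref{lemme-BlT09-infinite}.
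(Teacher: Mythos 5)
Your proof is correct and follows essentially the same route as the paper's: both argue that $R \leq \widehat{R}$ forces $\LargestP \leq \overLargestP$, so any valuation above $\pval_{\overLargestP}$ is also above $\pval_{\LargestP}$, and the conclusion follows from \cref{lemme-BlT09-infinite}. Your version merely spells out the monotonicity of the formulas $k(R+1)+c+1$ and $8k(R+1)+c+1$ in $R$, which the paper leaves implicit.
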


\newcommand{\prooflemmeBlTinfiniteadaptation}{%
\begin{proof}
	From the fact that we use in the computation of ${\overLargestP}$ an over-approximation on the number of clock regions (with $R \leq \widehat{R}$), giving 
	$\LargestP \leq {\overLargestP}$.
\end{proof}
}

\prooflemmeBlTinfiniteadaptation{}

We can now prove the correctness of extrapolation for unbounded L-PTAs and U-PTAs.
\LongVersion{

}%
Let $\widehat{\LargestC} = \{ \LargestC(\clock_1), \dots, \LargestC(\clock_\ClockCard) \}$ such that $\LargestC(\clock_i)$ is the maximal constant of clock~$\clock_i$ when bounding all unbounded parameters with~$\overLargestP$.
Let \EEFhat{} denote the modification of \EEF{} where $\Ext{\LargestC}{\Clock}$ is replaced with $\Ext{\widehat{\LargestC}}{\Clock}$\LongVersion{ (\cref{algo:EEF:recursion} in \cref{algo:EEF})}.
\LongVersion{%
	That is, instead of computing the $\Ext{\LargestC}{}$-extrapolation of each symbolic state, we compute its $\Ext{\widehat{\LargestC}}{}$-extrapolation.
}%

\begin{example}

\cref{figure:unbounded_pta} illustrates the effects of the $\widehat{\LargestC}$-extrapolation on the unbounded U-PTA of \cref{figure:unbounded_pta:pta}.
\cref{figure:unbounded_pta:ss} displays its (simplified) infinite state space.
The valuation of\LongVersion{ parameter}~$\param$ can be any value in~$\setQplus$.
\cref{figure:unbounded_pta:Mext} shows the state space obtained with the $\widehat{\LargestC}$-extrapolation.
Note that the state space is now finite.

\begin{figure} [tb]
	\begin{subfigure}[b]{0.3\textwidth}
		\centering
		\begin{tikzpicture}[PTA, thin, scale=.6]

			\node[location, initial] at(0,0) (l0) {$\loc_0$};
			\node[location] at(3,0) (l1) {$\loc_1$};
			\node [invariant, below] at (0,-0.5) {$\clockx \leq 1$};
			\node [invariant, below] at (3,-0.5) {$\clocky \leq \paramp$};

			\path
				(l0) edge[loop] node [above,align=center] {$\clockx = 1; \clockx \assign 0$} (l0)
				(l0) edge[] node {$ \clocky \geq 1$}  (l1)
			;
		\end{tikzpicture}
		\caption{Unbounded U-PTA}
		\label{figure:unbounded_pta:pta}
	\end{subfigure}
	\hspace{1em}
	\begin{subfigure}[b]{0.3\textwidth}
	\centering
	\small
		\begin{tikzpicture}[PTA, thin]

			\node[location, initial] at(0,0) (l0) {$\loc_0$};
			\node[location] at(3,0) (l1) {$\loc_1$};
			\node [invariant, below] at (0,-0.5) {$\clockx \leq 1$};

			\path
				(l0) edge[loop] node [above] {$\clockx \assign 0$} (l0)
				(l0) edge[] node {$1< \clocky \land \clockx = \paramp$}  (l1)
			;
		\end{tikzpicture}
	\caption{Unbounded PTA}
	\label{figure:partial_ext}
	\end{subfigure}
	\hspace{1em}
	\begin{subfigure}[b]{0.3\textwidth}
	\centering
	\small
		\begin{tikzpicture}[PTA, thin]

			\node[location, initial] at(0,0) (l0) {$\loc_0$};
			\node[location] at(2,0) (l1) {$\loc_1$};

			\node [invariant, below, yshift=.7em] at (0,-0.5) {$\clockx \leq 1 \land \clocky \leq \paramp$};

			\path
				(l0) edge[loop] node [above, align=center] {$\clockx = 1  ; \clockx \assign 0$} node[below]{$\styleact{a}$} (l0)
			;
			\path
				(l0) edge node [above,align=center] {$\styleact{b}$ \\ $\clocky \leq \paramp$} (l1)
			;
		\end{tikzpicture}
		\caption{Unbounded U-PTA}
		\label{figure:PTA:IM}
	\end{subfigure}
	\caption{Three toy PTAs}
\end{figure}
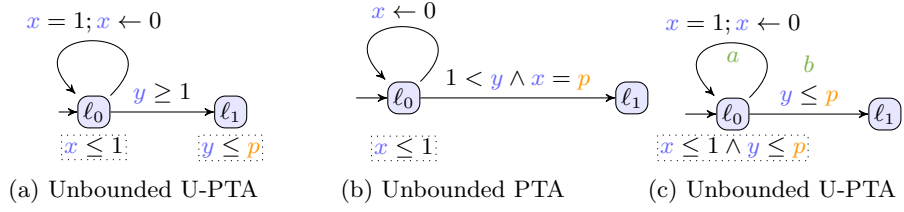

\begin{figure} [t]
	\small
	\begin{subfigure}[t]{0.38\textwidth}
		\centering
		\raisebox{1.55cm}{
		\scalebox{.8}{
		\begin{tikzpicture}[>=latex', xscale=1.3, yscale=.7,every node/.style={scale=1}]
		\node[symbstate] at (0,0) (c0) {($\loc_0$, $\clockx=\clocky \leq 1$)};
		\node[symbstate] at (2.5,0) (c0') {($\loc_1$, $\paramp \geq 1$)};
		\node[symbstate] at (0,-2) (c1) {($\loc_0$, $\clocky=\clockx+1 \land \clockx \leq 1$)};
		\node[symbstate] at (2.5,-2) (c1') {($\loc_1$, $\paramp \geq 1$)};
		\node[infinitesymbstate] at (0,-4) (ci) {($\loc_0$, $\clocky=\clockx+i \land \clockx \leq 1$)};
		\node[infinitesymbstate] at (2.5,-4) (ci') {($\loc_1$, $\paramp \geq i$)};
		\draw[->] (c0) -- (c0');
		\draw[->] (c0) -- (c1);
		\draw[->] (c1) -- (c1');
		\draw[->, blue,dashed] (c1) -- (ci);
		\draw[->] (ci) -- (ci');
		\draw[->, blue,dashed] (ci) -- (0,-5);
		\end{tikzpicture}
		}
		}
		\vspace*{-1.8cm}
		\caption{Simplified state space of \cref{figure:unbounded_pta:pta}.}
		\label{figure:unbounded_pta:ss}
	\end{subfigure}
	\hspace{-2.3cm}
	\begin{subfigure}[t]{0.6\textwidth}
		\centering
		\captionsetup{width=1.55\linewidth}
		\scalebox{.8}{
		\begin{tikzpicture}[>=latex', xscale=1.33, yscale=.66,every node/.style={scale=1}]
		\node[symbstate] at (0,-0.5) (c0) {($\loc_0$, $\clockx=\clocky \leq 1$)};
		\node[symbstate] at (2.5,-0.5) (c0') {($\loc_1$, $\paramp \geq 1$)};
		\node[symbstate] at (0,-2.2) (c1) {($\loc_0$, $\clocky=\clockx+1 \land \clockx \leq 1$)};
		\node[symbstate] at (2.5,-2.2) (c1') {($\loc_1$, $\paramp \geq 1$)};
		\node[symbstate] at (0,-4) (c1033) {($\loc_0$, $\clocky=\clockx+1033 \land \clockx \leq 1$)};
		\node[symbstate] at (2.5,-4) (c1033') {($\loc_1$, $\paramp \geq 1033$)};
		\node[symbstate] at (0,-6) (c1034) {
			\begin{tabular}{cc}
			& $\clocky=1034 \land \clockx=0$ \\
			$\loc_0$, & $\cup$ \\
			& $\clocky > 1034 \land 0 < \clockx \leq 1$ \\
			\end{tabular}
		};
		\node[symbstate] at (2.5,-6) (c1034') {$\loc_1$, $\paramp \geq 1034$};
		\node[symbstate] at (0,-8) (c1035) {$\loc_0$, $\clocky > 1034 \land \clockx \leq 1$};
		\node[symbstate] at (2.5,-8) (c1035') {$\loc_1$, $\paramp > 1034$};
		\draw[->] (c0) -- (c0');
		\draw[->] (c0) -- (c1);
		\draw[->] (c1) -- (c1');
		\draw[->,dashed] (c1) -- (c1033);
		\draw[->] (c1033) -- (c1033');
		\draw[->] (c1033) -- (c1034);
		\draw[->] (c1034) -- (c1034');
		\draw[->] (c1034) -- (c1035);
		\draw[->] (c1035) -- (c1035');
		\path (c1035) edge [max distance=2em, loop left] (c1035);
		\end{tikzpicture}
		}
		\hspace{-6cm}
		\caption{Simplified state space of \cref{figure:unbounded_pta:pta} with the $\widehat{\LargestC}$-extrapolation where $\LargestC(\clockx)=1$ and $\LargestC(\clocky)=1034$, computed using $\overLargestP$. The dashed link represents a succession of 1031 intermediate states where the value of $\clocky$ grows from $\clockx+1$ to $\clockx+1033$. }
		\label{figure:unbounded_pta:Mext}
	\end{subfigure}
	\caption{Example of an unbounded PTA generating an infinite state space.
	}
	\label{figure:unbounded_pta}
\end{figure}
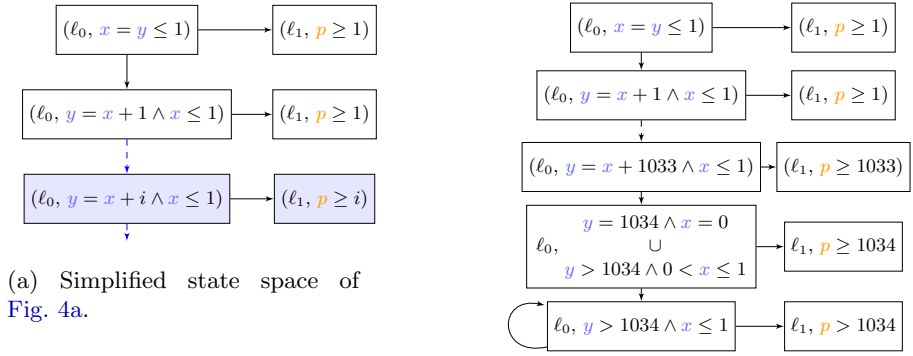

\end{example}

\newcommand{\enonceTheoremEEFhatCorrectness}{
	Let~$\A$ be an L-PTA or U-PTA with initial\LongVersion{ symbolic} state~$\symbstateinit$, and $\somelocs \subseteq \Loc$ a set of target locations.
	Assume $\EEFhat(\A, \symbstateinit, \somelocs, \emptyset)$ terminates.
	We have:
	\begin{enumerate}
		\item Soundness: If $\pval \in \EEFhat(\A, \symbstateinit, \somelocs, \emptyset)$ then $\somelocs$ is reachable in $\valuate{\A}{\pval}$;
		\item Completeness: For all~$\pval$, if $\somelocs$ is reachable in $\valuate{\A}{\pval}$ then $\pval \in \EEFhat(\A, \symbstateinit, \somelocs, \emptyset)$.
	\end{enumerate}
}

\begin{proposition}\label{prop:EEFhat:correctness}
	\enonceTheoremEEFhatCorrectness{}
\end{proposition}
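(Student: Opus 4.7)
The plan is to adapt the proof of \cref{prop:EEFvect:correctness} to the unbounded L-PTA/U-PTA setting. The overall structure remains unchanged: a post-order traversal argument on a finite prefix of the symbolic reachability tree, with soundness obtained by induction on subtree depth and completeness by iteratively shortening a candidate concrete run, using bisimilar states at the ``cut'' nodes. Only the extrapolation-bisimilarity lemma needs to be adapted to cope with $\widehat{\LargestC}$ in the presence of unbounded parameters.

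The key technical step is to establish an analogue of \cref{lemma:extbisim:Mvect} for the $\widehat{\LargestC}$-extrapolation on L-PTAs/U-PTAs: for a symbolic state $\symbstate$, a parameter valuation~$\pval$ and any concrete state $(\loc,\clockval) \in \valuate{\Ext{\widehat{\LargestC}}{\Clock}(\symbstate)}{\pval}$, there exists a bisimilar state in $\valuate{\symbstate}{\pval}$. The argument splits in two cases. When $\pval(\param) \leq \overLargestP$ for every unbounded parameter~$\param$, the definition of $\widehat{\LargestC}$ yields $\widehat{\LargestC}(\clock) \geq \maxC^{\clock}(\valuate{\A}{\pval})$, and the argument of \cref{lemma:extbisim:Mvect} carries over verbatim via \cref{lemma:bisim}. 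Otherwise, I would exploit the L/U-PTA structure: the capped valuation $\pval'$ defined by $\pval'(\param) = \min(\pval(\param),\overLargestP)$ on unbounded parameters falls in the first case, and one transfers bisimilarity from $\pval'$ to $\pval$ using the monotonicity of runs (\cref{lemma:monotonicity}) together with \cref{lemme-BlT09-infinite-adaptation}.

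With this adapted extrapolation-bisimilarity lemma in place, the soundness and completeness arguments of \cref{theorem:EEF:soundness} transfer almost verbatim, replacing each appeal to \cref{lemma:extbisim:Mvect} by its $\widehat{\LargestC}$-analogue; no change is required in the structural inductions themselves.

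The hardest part will be the second case of the bisimilarity lemma. The classical $k$-extrapolation argument via \cref{lemma:bisim} fails there, since $\widehat{\LargestC}(\clock)$ can be strictly smaller than $\maxC^{\clock}(\valuate{\A}{\pval})$ when $\pval(\param) > \overLargestP$, so one must combine the L/U-PTA monotonicity with the stabilization result from~\cite{BlT09}. A subtlety is that the latter was originally stated for infinite accepting runs rather than for reachability, so its use here must be justified separately according to the L-PTA and U-PTA directions of inequality.
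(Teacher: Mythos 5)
Your proposal takes essentially the same route as the paper: the paper also reduces the proposition to a $\widehat{\LargestC}$-analogue of \cref{lemma:extbisim:Mvect}, proved by adapting the bisimulation relation of \cref{lemma:bisim} via \cref{lemme-BlT09-infinite-adaptation} together with the encoding of reachability as an infinite accepting run, and then concludes by rerunning the proof of \cref{theorem:EEF:soundness} with that adapted lemma substituted in. Your explicit case split on whether $\pval$ exceeds $\pval_{\overLargestP}$, with the capping and monotonicity transfer via \cref{lemma:monotonicity}, is just a more detailed packaging of the same stabilization idea.
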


\newcommand{\preuveTheoremEEFhatCorrectness}{%

We first prove the following lemma, which adapts \cref{lemma:bisim} to L-PTAs and U-PTAs.

\begin{lemma}\label{lemma:bisim:Mhat}
	Let~$\A$ be an L-PTA or a U-PTA. 
	Given clock~$\clock$, let $\LargestC(\clock)$ be an integer constant greater than or equal to  the maximal constant $\maxC^{\clock}(\A)$ of clock~$\clock$ when bounding all unbounded parameters with $\overLargestP$.
	For a given parameter valuation $\pval(\A)$ of $\A$, let $\clockval, \clockval'$ be two clock valuations.
	Let $\equiv_\LargestC$ be the relation defined as $\clockval\equiv_\LargestC \clockval'$ iff $\forall \clock\in \Clock$: either $\valuate{\clock}{\clockval}=\valuate{\clock}{\clockval'}$ or ($\valuate{\clock}{\clockval} > \LargestC(\clock)$ and $\valuate{\clock}{\clockval'}>\LargestC(\clock)$).
The relation ${\cal \widehat{R}} = \big\{\big((\loc, \clockval), (\loc, \clockval') \big) \mid \clockval \equiv_\LargestC \clockval' \big\}$ is a bisimulation relation.
\end{lemma}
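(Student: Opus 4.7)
The plan is to adapt the classical proof of \cref{lemma:bisim} from \cite{BBLP06} to the L-PTA / U-PTA context. I fix a parameter valuation $\pval$ and work inside the TA $\valuate{\A}{\pval}$. To establish that $\mathcal{\widehat{R}}$ is a bisimulation, it suffices to verify the two standard zig-zag conditions for every discrete and delay transition, namely that from $(\loc,\clockval)\longueflecheRel{d}(\loc,\clockval+d)$ or $(\loc,\clockval)\longueflecheRel{\edge}(\loc',\clockval_1)$ one can build a matching transition from $(\loc,\clockval')$ ending in an $\equiv_\LargestC$-equivalent valuation, and symmetrically.

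The preliminary, and most important, step is the following claim: for every simple clock guard $\clock \compOp \sum_i \alpha_i \param_i + z$ appearing syntactically in $\A$, the constant $\sum_i \alpha_i \pval(\param_i) + z$ obtained after instantiation by $\pval$ is bounded in absolute value by $\LargestC(\clock)$. When every parameter satisfies $\pval(\param)\leq\overLargestP$, this is immediate from the definition of $\LargestC(\clock)$ as an upper bound on $\maxC^\clock(\A)$ computed with each unbounded parameter bounded by $\overLargestP$. For arbitrary $\pval$, the L-PTA / U-PTA syntactic restriction on the sign of the coefficients $\alpha_i$ and the comparison operator $\compOp$ ensures that the relevant guards are either lower bounds (in an L-PTA) or upper bounds (in a U-PTA) on $\clock$; combined with the monotonicity \cref{lemma:monotonicity} and the equivalence of large valuations given by \cref{lemme-BlT09-infinite-adaptation}, this lets me compare $\valuate{\A}{\pval}$ with the clipped valuation $\pval'$ defined by $\pval'(\param)=\min(\pval(\param),\overLargestP)$, reducing to the bounded case.

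Given this preliminary claim, the discrete case of the zig-zag condition unfolds exactly as in \cite{BBLP06}: the guard $\pval(\guard)$ is a conjunction of atoms $\clock \compOp c$ with $|c|\leq\LargestC(\clock)$, and for each such atom either $\clockval(\clock)=\clockval'(\clock)$ (so the atom is preserved) or both $\clockval(\clock),\clockval'(\clock)>\LargestC(\clock)\geq c$ (so the atom is again preserved). The reset $\reset{\cdot}{\resets}$ preserves $\equiv_\LargestC$ because reset clocks become equal to $0$ while the remaining ones are untouched; the invariant of $\loc'$ is handled exactly like a guard. For the delay case, I would mimic the standard construction: clocks with $\clockval(\clock)=\clockval'(\clock)\leq\LargestC(\clock)$ evolve in step for a common delay, while clocks already above $\LargestC(\clock)$ in both valuations can be advanced independently so that both sides remain above $\LargestC(\clock)$ and respect the invariant.

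The main obstacle is the preliminary bound step for $\pval$ with some $\pval(\param)>\overLargestP$, because then the \emph{actual} maximal constant occurring in $\valuate{\A}{\pval}$ can exceed $\LargestC(\clock)$ and the naive bisimulation argument breaks. Everything else is a routine lift of the TA proof. The resolution relies crucially on the L-PTA / U-PTA hypothesis: only in these subclasses does the monotonicity of \cref{lemma:monotonicity} allow one to replace an oversized $\pval(\param)$ by $\overLargestP$ without affecting the bisimulation class of reachable states, which is precisely what makes $\overLargestP$ a meaningful syntactic substitute in the definition of $\LargestC(\clock)$.
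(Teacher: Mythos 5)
Your overall strategy --- a local zig-zag verification in the style of \cref{lemma:bisim} --- is not the route the paper takes, and the step you yourself flag as ``the main obstacle'' is a genuine gap that your proposed fix does not close. Your preliminary claim (every instantiated guard constant is bounded in absolute value by $\LargestC(\clock)$) is simply false once some $\pval(\param) > \overLargestP$: in a U-PTA a guard $\clock \leq \param$ instantiates to $\clock \leq \pval(\param)$ with $\pval(\param)$ arbitrarily large, and then two valuations with, say, $\valuate{\clock}{\clockval} = \LargestC(\clock)+1$ and $\valuate{\clock}{\clockval'} = \pval(\param)+1$ are $\equiv_\LargestC$-related yet only the first can take the corresponding discrete transition (and no delay can shrink a clock). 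So the relation genuinely fails the local transition-matching test in this regime, and no pointwise argument of the \cite{BBLP06} kind can succeed on $\valuate{\A}{\pval}$ alone. Your appeal to \cref{lemma:monotonicity} and the clipped valuation $\pval'(\param)=\min(\pval(\param),\overLargestP)$ does not repair this: monotonicity only gives an inclusion of run sets between $\valuate{\A}{\pval'}$ and $\valuate{\A}{\pval}$, not a bisimulation between them, and \cref{lemme-BlT09-infinite-adaptation} only asserts that the \emph{existence of an infinite accepting run} is uniform above $\pval_{\overLargestP}$ --- a far coarser property than state-level bisimilarity, from which your per-guard bound cannot be recovered.

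The paper's proof avoids the local analysis entirely. It argues globally: by \cref{lemme-BlT09-infinite-adaptation} the answer to the accepting-run question (hence, via an unguarded self-loop encoding, the reachability question) is the same for all parameter valuations above $\pval_{\overLargestP}$, and from this it deduces that every reachable location is already reachable with all clocks kept below their bounds $\LargestC(\clock_i)$; the extrapolated relation is then harmless for the reachability-synthesis purpose for which the lemma is invoked in \cref{prop:EEFhat:correctness}. If you want to keep your structure, you would need either to restrict the zig-zag verification to parameter valuations $\pval \leq \pval_{\overLargestP}$ (where your preliminary claim does hold) and handle the remaining valuations by the global uniformity argument, or to prove a genuinely weaker matching property (preservation of reachable locations rather than full bisimulation). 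As written, the reduction ``to the bounded case'' is asserted but not established.
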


\begin{proof}
Any valuation $\valuate{\clock}{\clockval} > \LargestC(\clock)$ implies a parameter valuation $\pval$ greater than or equal to~$\pval_{\overLargestP}$.
And we know by \cref{lemme-BlT09-infinite-adaptation} that either for all valuation $\pval \geq \pval_{\overLargestP}$, $\pval (\A)$ accepts an infinite accepting run, or for all valuation $\pval \geq \pval_{\overLargestP}$, $\pval (\A)$ does not accept an infinite accepting run.
As checking infinite accepting run can be used to reachability (for instance, by introducing an unguarded self-loop on each location matching the accepting condition), this implies that any reachable location can be reached with a clock valuation $\clockval$ such that for any $\clock_i$, $\valuate{\clock_i}{\clockval} \leq \LargestC(\clock_i)$.
As a result, relation ${\cal \widehat{R}}$ preserve the bisimilarity of relation ${\cal R}$ from \cref{lemma:bisim}.

\end{proof}

We then prove the following lemma, equivalent to \cref{lemma:extbisim:Mvect}.

\begin{lemma}[$\widehat{\LargestC}$ and bisimilarity]\label{lemma:extbisim:Mhat} 
    Let~$\A$ be an L-PTA or a U-PTA and $\symbstate$ be a symbolic state of~$\A$.

    Let $\clock$ be a clock, $\LargestC(\clock) \in \setN$\LongVersion{ an integer constant} greater than or equal to %
 the maximal constant $\maxC^{\clock}(\A)$ of clock~$\clock$ when bounding all unbounded parameters with~$\overLargestP$, $\pval$ be a parameter valuation and $(\loc, \clockval)\in \valuate{\Ext{\LargestC(\clock)}{\clock}(\symbstate)}{\pval})$ be a concrete state.
There exists a state $(\loc, \clockval')\in\valuate{\symbstate}{\pval}$ such that $(\loc, \clockval)$ and $(\loc, \clockval')$ are bisimilar.
\end{lemma}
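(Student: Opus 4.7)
The plan is to mirror, almost verbatim, the proof of \cref{lemma:extbisim:Mvect}, replacing the appeal to \cref{lemma:bisim} with the newly established \cref{lemma:bisim:Mhat}, which is its L-PTA/U-PTA analogue. The extrapolation operator $\Ext{\LargestC(\clock)}{\clock}$ is defined identically in both settings; the only subtlety lies in the interpretation of the bound $\LargestC(\clock)$, which here is computed after clipping all unbounded parameters at $\overLargestP$.

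First, I would distinguish two cases depending on where the concrete state $(\loc, \clockval)$ sits inside $\valuate{\Ext{\LargestC(\clock)}{\clock}(\symbstate)}{\pval}$. If $\wv{\clockval}{\pval} \models \symbstate$, then the state $(\loc, \clockval')$ with $\clockval' = \clockval$ lies in $\valuate{\symbstate}{\pval}$ and is trivially bisimilar to itself. Otherwise, by \cref{definition:x-extrapolation}, we must have $\wv{\clockval}{\pval} \models \cylinder{\clock}(\symbstate \cap (\clock > \LargestC(\clock))) \cap (\clock > \LargestC(\clock))$. In particular, $\valuate{\symbstate \cap (\clock > \LargestC(\clock))}{\pval}$ is nonempty and $\valuate{\clock}{\clockval} > \LargestC(\clock)$.

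Next, unfolding the definition of $\cylinder{\clock}$, there exists a clock valuation $\clockval'$ with $\wv{\clockval'}{\pval} \models \symbstate$ such that $\valuate{\clock'}{\clockval'} = \valuate{\clock'}{\clockval}$ for every clock $\clock' \neq \clock$, while $\valuate{\clock}{\clockval'} > \LargestC(\clock)$. Consequently both $\valuate{\clock}{\clockval}$ and $\valuate{\clock}{\clockval'}$ exceed $\LargestC(\clock)$, and all other clock values coincide; hence $\clockval \equiv_\LargestC \clockval'$ in the sense of the relation defined in \cref{lemma:bisim:Mhat}. Applying that lemma at the parameter valuation $\pval$ yields that $(\loc, \clockval)$ and $(\loc, \clockval')$ are bisimilar in $\valuate{\A}{\pval}$, which is the desired conclusion.

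The main obstacle is not the combinatorial manipulation of the cylindrification — that part is identical to the bounded case — but rather ensuring that \cref{lemma:bisim:Mhat} genuinely applies at an arbitrary (and in particular possibly unbounded) parameter valuation $\pval$ under which $(\loc, \clockval)$ is a concrete state. The key content is hidden in \cref{lemma:bisim:Mhat}: the monotonicity of L-PTAs/U-PTAs (\cref{lemma:monotonicity}) together with \cref{lemme-BlT09-infinite-adaptation} guarantees that clock values above $\LargestC(\clock)$ cannot be distinguished by any reachable behaviour at~$\pval$, so the extension of $\equiv_\LargestC$ from the bounded parametric case (\cref{lemma:bisim}) to this one remains a valid bisimulation. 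Once this is accepted, the rest of the argument is a direct transcription of the proof of \cref{lemma:extbisim:Mvect}.
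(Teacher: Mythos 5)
Your proof is correct and follows essentially the same route as the paper: the paper's own proof simply says the result ``follows immediately from the proof of \cref{lemma:extbisim:Mvect}, by applying \cref{lemma:bisim:Mhat} instead of \cref{lemma:bisim}'', and your case split on membership in $\symbstate$, the unfolding of the cylindrification, and the final appeal to \cref{lemma:bisim:Mhat} is exactly that argument written out. Your closing remark correctly identifies that all the genuinely new content lives in \cref{lemma:bisim:Mhat} rather than in this transcription step.
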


\begin{proof}
The result follows immediately from the proof of \cref{lemma:extbisim:Mvect}, by applying \cref{lemma:bisim:Mhat} instead of \cref{lemma:bisim}.
\end{proof}

We can proceed with the proof of \cref{prop:EEFhat:correctness}:

\begin{proof}
	The result follows immediately from the proof of \cref{theorem:EEF:soundness}, by applying \cref{lemma:extbisim:Mhat} instead of \cref{lemma:extbisimeq}.
\end{proof}
}

\LongVersion{\preuveTheoremEEFhatCorrectness{}}

\subsection{$\vec{\LargestC}$-extrapolation on PTAs with unbounded lower or upper bound parameters}\label{ss:bPTA+L-U}

\ShortVersion{A subclass of PTAs can be turned into L-PTAs or U-PTAs (only) for the sake of computing $\overLargestP$.}%
\LongVersion{The method described previously can be adapted to a subclass of PTAs that can be turned into L-PTAs or U-PTAs (only) for the sake of computing the constant bound~$\overLargestP$.}
Let us first define this subclass:

\begin{definition}[bPTA+L and bPTA+U]\label{definition:bPTA+L-U}
	Let $\A$ be a PTA.
	$\A$ is a \emph{bounded PTA with unbounded lower-(resp.\ upper-)bound parameters}, or bPTA+L (resp.\ bPTA+U), if for each guard $\clock \compOp \sum_{1 \leq i \leq \ParamCard} \alpha_i \param_i + z$ of~$\SCGuards(\A)$, for all~$i$:
	\begin{itemize}
		\item $\PDomain(\param_i) \in \setQ \times \setQ$ (\ie{} $\param_i$ is bounded), or
		\LongVersion{\item }$\alpha_i = 0$, or
		\item $\alpha_i > 0$ and ${\compOp} \in \{\geq, >\}$ (respectively ${\compOp} \in \{<, \leq \}$), or
		\item $\alpha_i < 0$ and ${\compOp} \in \{<, \leq \}$ (respectively ${\compOp} \in \{\geq, >\}$).
	\end{itemize}
\end{definition}

Let $\A$ be a bPTA+L (resp.~bPTA+U).
We denote by $\valuateLU{\A}$ the L-PTA (resp.\ U-PTA) obtained from~$\A$ by valuating the bounded parameters as follows: we replace each bounded parameter~$\param_i$ within a guard or invariant with its lower bound $\PDomain^-(\param_i)$ if it appears negatively ($\alpha_i < 0$) or with its upper bound $\PDomain^+(\param_i)$ otherwise.
Clearly, if $\A$ is a bPTA+L (resp.~bPTA+U) then $\valuateLU{\A}$ is an L-PTA (resp.~U-PTA). %

We first valuate bounded parameters to turn a bPTA+L (resp.~bPTA+U) $\A$ into an L-PTA (resp.~U-PTA).
This is obtained by transforming~$\A$ such that, in every guard and invariant, any bounded parameter of positive coefficient $\alpha_i$ is replaced with its upper bound and any bounded parameter of negative coefficient $\alpha_i$ with its lower bound.

\begin{definition}[Bounded valuation of a bPTA+L or bPTA+U]\label{definition:bPTA+L-U:transformation}
	Let $\A$ be a bPTA+L (resp.~bPTA+U).
	Let $\valuateLU{\A}$ be the modification of~$\A$ where
		for each guard $\clock \compOp \sum_{1 \leq i \leq \ParamCard} \alpha_i \param_i + z \in \SCGuards(\A)$,
		for each bounded $\param_i \in \Param$,
		\begin{ienumerate}
			\item if $\alpha_i < 0$, $\param_i$ is replaced by $\PDomain^-(\param_i)$,
			\item if $\alpha_i > 0$, $\param_i$ is replaced by $\PDomain^+(\param_i)$, and
			\item $\param_i$ is replaced with~0 otherwise.
		\end{ienumerate}
\end{definition}

\begin{example}
To illustrate \cref{definition:bPTA+L-U} we modify \cref{figure:unbounded_pta:pta} by adding a bounded parameter.
\cref{figure:bounded_val:pta} is a bPTA+U $\A$ with $\styleparam{q}$ bounded between~$1$ and~$2$, and $\paramp$ unbounded.
\cref{figure:bounded_val:bounded_val} is the bounded valuation $\A'$ of $\A$, as defined in \cref{definition:bPTA+L-U:transformation}.
Note that in this example $\A'$ does not describe a behavior that belongs to $\A$, as parameter $\styleparam{q}$ is valuated to $1$ in the guard where it occurs with a negative sign, while it is valuated to $2$ in the guard where it occurs with a positive sign.
It will nevertheless be useful to determine a constant bound for~$\A$.

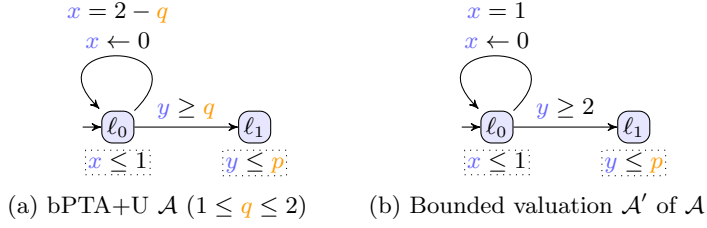
\begin{figure} [tb]
	\centering
	\small
	\begin{subfigure}[b]{0.4\textwidth}
		\centering
		\begin{tikzpicture}[PTA, thin, scale=.6]

			\node[location, initial] at(0,0) (l0) {$\loc_0$};
			\node[location] at(3,0) (l1) {$\loc_1$};
			\node [invariant, below] at (0,-0.5) {$\clockx \leq 1$};
			\node [invariant, below] at (3,-0.5) {$\clocky \leq \paramp$};

			\path
				(l0) edge[loop] node [above,align=center] {$\clockx = 2 - \styleparam{q}$ \\$ \clockx \assign 0$} (l0)
				(l0) edge[] node {$ \clocky \geq \styleparam{q}$}  (l1)
			;
		\end{tikzpicture}
		\caption{bPTA+U $\A$ ($1 \leq \styleparam{q} \leq 2$)}
		\label{figure:bounded_val:pta}
	\end{subfigure}
	\begin{subfigure}[b]{0.4\textwidth}
		\centering
		\begin{tikzpicture}[PTA, thin, scale=.6]

			\node[location, initial] at(0,0) (l0) {$\loc_0$};
			\node[location] at(3,0) (l1) {$\loc_1$};
			\node [invariant, below] at (0,-0.5) {$\clockx \leq 1$};
			\node [invariant, below] at (3,-0.5) {$\clocky \leq \paramp$};

			\path
				(l0) edge[loop] node [above,align=center] {$\clockx = 1$ \\$ \clockx \assign 0$} (l0)
				(l0) edge[] node {$ \clocky \geq 2$}  (l1)
			;
		\end{tikzpicture}
	\caption{Bounded valuation $\A'$ of $\A$}
	\label{figure:bounded_val:bounded_val}
	\end{subfigure}
	\caption{A bPTA+U and its bounded valuation.}
\end{figure}

\end{example}

\subsubsection{Correctness of the transformation}

Trivially, we get that the PTA $\valuateLU{\A}$ is an L-PTA (or U-PTA).

\begin{lemma}\label{lemma:bPTA+L-U=L-U}
	Let $\A$ be a bPTA+L (resp.~bPTA+U).
	Then $\valuateLU{\A}$ is an L-PTA (resp.~U-PTA).
\end{lemma}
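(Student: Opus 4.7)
The plan is to observe that this is essentially a syntactic verification on guards and invariants. I would fix a bPTA+L (the U case is symmetric) $\A$ and look at an arbitrary simple clock guard $\guard$ of $\valuateLU{\A}$. Such a guard arises from a simple clock guard of $\A$ of the form $\clock \compOp \sum_{1 \leq i \leq \ParamCard} \alpha_i \param_i + z$ by applying the substitution from \cref{definition:bPTA+L-U:transformation}.

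First I would partition the index set $\{1,\dots,\ParamCard\}$ into $B$ (indices of bounded parameters) and $U$ (indices of unbounded ones). For $i \in B$, the parameter $\param_i$ is replaced by a rational constant $c_i \in \setQ$ (namely $\PDomain^-(\param_i)$, $\PDomain^+(\param_i)$, or~$0$ depending on the sign of $\alpha_i$). Hence $\guard$ can be rewritten as
\[\clock \compOp \sum_{i \in U} \alpha_i \param_i + z', \quad \text{where } z' = z + \sum_{i \in B} \alpha_i c_i.\]
Note that $z' \in \setQ$; if needed one can rescale all constants in $\valuateLU{\A}$ by the least common multiple of denominators, as already done when passing from $\valuate{\A}{\pval}$ to a TA in the preliminaries, so the integrality requirement of \cref{def:PTA} is not an obstacle.

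Then I would check the L-PTA condition for the rewritten guard: for every $i \in U$ with $\alpha_i \neq 0$, the original coefficient $\alpha_i$ and comparison operator $\compOp$ are unchanged by the substitution, and since $\A$ is a bPTA+L, the sign condition from \cref{definition:bPTA+L-U} applied to $i \in U$ (the unbounded case) gives exactly the sign condition required by the L-PTA definition. The cases $\alpha_i > 0$ with ${\compOp} \in \{\geq,>\}$ and $\alpha_i < 0$ with ${\compOp} \in \{<,\leq\}$ transfer verbatim. The same argument applied to invariants (which are conjunctions of simple clock guards) completes the check.

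I do not anticipate a genuine obstacle here: the only subtlety worth spelling out is that the bounded case in \cref{definition:bPTA+L-U} is not constrained by any sign condition (precisely because such parameters will be eliminated), so no ``bad'' parameter survives to violate the L-PTA shape. The U-PTA case is entirely symmetric, with the roles of lower/upper bound substitutions exchanged.
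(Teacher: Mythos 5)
Your proposal is correct and follows essentially the same route as the paper's own proof: bounded parameters are eliminated by substitution of constants, and the surviving unbounded parameters already satisfy the L-PTA (resp.\ U-PTA) sign conditions by \cref{definition:bPTA+L-U}, so the check is purely syntactic. Your additional remark about the constant term becoming rational and requiring rescaling is a small point of extra care that the paper's proof leaves implicit, but it does not change the argument.
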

\begin{proof}
	Assume $\A$ is a bPTA+L (resp.~bPTA+U).
	When building $\valuateLU{\A}$, any occurrence of a bounded parameter is replaced by its constant bounds.
	In addition, all unbounded parameters from~$\A$ are, by \cref{definition:bPTA+L-U}, lower-bound (resp.\ upper-bound) parameters.
	Therefore, the only remaining parameters in $\valuateLU{\A}$ are lower-bound (resp.\ upper-bound) parameters.
	Therefore, $\valuateLU{\A}$ is an L-PTA (resp.~U-PTA).
\end{proof}
\subsubsection{Method}
Our method is then as follows: given a bPTA+L (resp.~bPTA+U)~$\A$,
\begin{ienumerate}
	\item we construct the L-PTA (resp.\ U-PTA) $\valuateLU{\A}$, and
	\item we then compute the bound $\overLargestP$ on the obtained L-PTA (resp.\ U-PTA) $\valuateLU{\A}$ (using the technique given in \cref{ss:L-U}).
\end{ienumerate}%
Let $\constantBoundLUplus$ denote this result.

Let $\overline{\LargestC} = \{ \LargestC(\clock_1), \dots, \LargestC(\clock_\ClockCard) \}$ such that $\LargestC(\clock_i)$ is the maximal constant of clock~$\clock_i$ when bounding in~$\A$ all unbounded parameters with~$\constantBoundLUplus$.
Let \EEFbar{} denote the modification of \EEF{} where $\Ext{\LargestC}{\Clock}$ is replaced with $\Ext{\overline{\LargestC}}{\Clock}$\LongVersion{ (\cref{algo:EEF:recursion} in \cref{algo:EEF})}.
\LongVersion{%
	That is, instead of computing the $\Ext{\LargestC}{}$-extrapolation of each symbolic state, we compute its $\Ext{\overline{\LargestC}}{}$-extrapolation, where $\overline{\LargestC}$ was obtained using the $\constantBoundLUplus$ computed on the L-PTA (or U-PTA) when valuating the bounded parameters with their bounds.
}%

\LongVersion{
\subsubsection{Correctness}
}
\newcommand{\enonceTheoremEEFbarCorrectness}{
	Let~$\A$ be a bPTA+L or bPTA+U with initial\LongVersion{ symbolic} state~$\symbstateinit$, and $\somelocs \subseteq \Loc$ a set of target locations.
	Assume $\EEFbar(\A, \symbstateinit, \somelocs, \emptyset)$ terminates.
	We have:
	\begin{enumerate}
		\item Soundness: If $\pval \in \EEFbar(\A, \symbstateinit, \somelocs, \emptyset)$ then $\somelocs$ is reachable in $\valuate{\A}{\pval}$;
		\item Completeness: For all~$\pval$, if $\somelocs$ is reachable in $\valuate{\A}{\pval}$ then $\pval \in \EEFbar(\A, \symbstateinit, \somelocs, \emptyset)$.
	\end{enumerate}
}

\begin{proposition}\label{prop:EEFbar:correctness}
	\enonceTheoremEEFbarCorrectness{}
\end{proposition}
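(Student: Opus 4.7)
The plan is to mirror the proof of \cref{prop:EEFhat:correctness}: first establish an analog of the bisimulation \cref{lemma:bisim:Mhat} tailored to bPTA+L/bPTA+U and the vector $\overline{\LargestC}$, then lift it to a symbolic-state extrapolation-bisimilarity lemma analogous to \cref{lemma:extbisim:Mhat}, and finally invoke the skeleton of the proof of \cref{theorem:EEF:soundness}, substituting this new lemma in place of \cref{lemma:extbisimeq}. Soundness and completeness will then follow in exactly the same way as for \EEFhat{} and \EEFvect{}.

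The key step is to show that $\constantBoundLUplus$ is a valid ``L/U saturation bound'' not only for $\valuateLU{\A}$ but also for $\A$ itself, uniformly in the valuation of the bounded parameters. I would fix an arbitrary valuation~$\pval_b$ of the bounded parameters of $\A$; by \cref{definition:bPTA+L-U}, the resulting structure $\valuate{\A}{\pval_b}$ is an L-PTA (resp.\ U-PTA) over the remaining unbounded parameters. By \cref{definition:bPTA+L-U:transformation}, $\valuateLU{\A}$ replaces each bounded parameter with the extremal value that maximizes its contribution to every guard and invariant (upper bound if $\alpha_i > 0$, lower bound if $\alpha_i < 0$); hence the greatest non-parametric constant~$c$ and each per-clock greatest constant~$c_\clock$ appearing in $\valuate{\A}{\pval_b}$ are pointwise bounded by their counterparts in $\valuateLU{\A}$. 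Since the over-approximation $\widehat{R}$ and the formulas $k(R+1)+c+1$ (resp.\ $8k(R+1)+c+1$) recalled in \cref{ss:L-U} are monotone in $c$, $c_\clock$ and $R$, we obtain $\overLargestP(\valuate{\A}{\pval_b}) \leq \constantBoundLUplus$. Therefore \cref{lemme-BlT09-infinite-adaptation} applies to the L-PTA (resp.\ U-PTA) $\valuate{\A}{\pval_b}$ with the uniform bound~$\constantBoundLUplus$, for every admissible~$\pval_b$.

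From this observation, the bisimulation argument of \cref{lemma:bisim:Mhat} transfers to $\A$: for any parameter valuation $\pval$ of $\A$, two clock valuations $\clockval, \clockval'$ related by $\equiv_{\overline{\LargestC}}$ induce bisimilar states $(\loc, \clockval), (\loc, \clockval')$, the decisive observation being that $\clockval(\clock) > \overline{\LargestC}(\clock)$ forces the unbounded part of $\pval$ to exceed $\pval_{\constantBoundLUplus}$, so the bisimulation is established within the partial PTA $\valuate{\A}{\pval_b}$. Lifting this property to symbolic states via the cylindrification-commutation argument of \cref{lemma:cylcommu} (as in the proof of \cref{lemma:extbisim:Mvect}) yields the extrapolation-bisimilarity statement that drives the soundness/completeness argument of \cref{theorem:EEF:soundness}.

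The main obstacle is the first step: rigorously justifying the monotonicity claim that, for every simple clock guard $\clock \compOp \sum_i \alpha_i \param_i + z$, valuating each bounded $\param_i$ as prescribed by \cref{definition:bPTA+L-U:transformation} yields a constant at least as large as under any admissible $\pval_b$. This is precisely where the sign-based case distinction of \cref{definition:bPTA+L-U:transformation} pays off, ensuring that $\valuateLU{\A}$ dominates every concretization of the bounded parameters of~$\A$, and, in turn, that propagation of the bound $\constantBoundLUplus$ from $\valuateLU{\A}$ to~$\A$ is legitimate.
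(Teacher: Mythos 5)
Your proposal is correct and follows essentially the same route as the paper: the paper isolates your key step as a separate lemma stating that $\valuateLU{\A}$ maximizes the constant part of every guard over all admissible valuations of the bounded parameters (hence $\constantBoundLUplus$ dominates the bound of any bounded-parameter instantiation $\A'$, which is an L-PTA or U-PTA), and then concludes by invoking \cref{prop:EEFhat:correctness} with the larger constant. Your explicit appeal to the monotonicity of the formulas $k(R+1)+c+1$ and $8k(R+1)+c+1$ in $c$, $c_\clock$ and $\widehat{R}$ merely spells out a step the paper leaves implicit.
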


\begin{lemma}\label{lemma:unnamed:A}
The bounded valuation $\valuateLU{\A}$ of a PTA~$\A$
guarantees for each constraint in the model to give the greatest  possible constant bound for all valuations in the set of bounded parameters of~$\A$.
\end{lemma}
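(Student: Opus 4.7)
The plan is to argue termwise on an arbitrary simple clock guard of $\SCGuards(\A)$. Fix an arbitrary $\guard : \clock \compOp \sum_{1 \leq i \leq \ParamCard} \alpha_i \param_i + z \in \SCGuards(\A)$ (the case of invariants being symmetric). Split the sum into contributions of bounded parameters and contributions of unbounded parameters; only the former are affected by the construction of \cref{definition:bPTA+L-U:transformation}. The task then reduces to showing that, after substitution of the bounded parameters, the right-hand side of $\guard$ equals the supremum (in fact maximum) of $\sum_{\param_i \text{ bounded}} \alpha_i \param_i$ over all admissible valuations of the bounded parameters.

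Next I would observe that for each bounded $\param_i$, the linear function $\param_i \mapsto \alpha_i \param_i$ restricted to the closed interval $[\PDomain^-(\param_i), \PDomain^+(\param_i)]$ attains its maximum at an endpoint: at $\PDomain^+(\param_i)$ if $\alpha_i > 0$, at $\PDomain^-(\param_i)$ if $\alpha_i < 0$, and trivially at any point (including $0$) if $\alpha_i = 0$. This is exactly the substitution prescribed by \cref{definition:bPTA+L-U:transformation}. Hence each bounded term is individually maximized by the valuation used to build $\valuateLU{\A}$.

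Because the bounded parameters are independent (they range over a product of intervals) and appear additively in the right-hand side, the termwise maxima combine into the global maximum of $\sum_{\param_i \text{ bounded}} \alpha_i \param_i$ over the parameter rectangle. Therefore the constant bound delivered by $\valuateLU{\A}$ for $\guard$ coincides with $\maxCg(\guard)$ taken with respect to the bounded parameters (the unbounded ones still contributing their original linear term). As $\guard$ was arbitrary, the claim holds for every constraint of the model.

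The step most at risk of going wrong is the sign-bookkeeping on negative coefficients together with possibly negative values of $\PDomain^-(\param_i)$; one has to be careful that maximizing $\alpha_i \param_i$ with $\alpha_i < 0$ really does select $\PDomain^-(\param_i)$ regardless of its sign, which is exactly what \cref{definition:bPTA+L-U:transformation} encodes. Beyond this minor check, the proof is a direct optimization argument and should not meet any substantial obstacle.
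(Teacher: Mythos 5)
Your proof is correct and follows essentially the same route as the paper's: the paper's argument is a one-sentence version of exactly your observation that setting each positively-signed bounded parameter to its upper bound and each negatively-signed one to its lower bound maximizes the constant part of every guard. Your write-up merely makes explicit the termwise-maximization and independence-of-parameters steps that the paper leaves implicit.
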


\newcommand{\prooflemmaunnamedA}{
\begin{proof}
In any given guard, as each upper bounded parameter of positive sign is set to its upper bound and each lower bounded parameter of negative sign is set to its lower bound, there can be no other valuation of bounded parameters such that any guard or invariant displays a greater constant part.
\end{proof}
}

\prooflemmaunnamedA{}

Note that  $\valuateLU{\A}$ might not even be in the 
set of PTA obtained when setting values for bounded parameters%
, as it is possible that a given parameter is replaced by its lower bound in some guard, and by its upper bound in some other.
It guarantees, however, that the value of the constant bound for any of the PTA obtained by valuating bounded parameters is no greater than $\constantBoundLUplus$.

We can proceed with the proof of \cref{prop:EEFbar:correctness}:

\begin{proof}
Let $\A'$ be any bounded valuation%
of~$\A$.
By definition%
, $\A'$ is either an L-PTA or a U-PTA.
From \cref{lemma:unnamed:A}, we know that $\constantBoundLUplus$ is greater than the constant bound of~$\A'$.
By \cref{prop:EEFhat:correctness}, we know that the extrapolation of~$\A'$ is sound and complete when defining $\LargestC(\clock)$ as the maximal constant of clock~$\clock$ when bounding all unbounded parameters with~$\overLargestP$.
As $\constantBoundLUplus > \overLargestP$%
, the extrapolation is still sound and complete for any bounded valuation%
of~$\A$.

\end{proof}
\subsection{Partial $\vec{\LargestC}$-extrapolation on general PTAs}\label{ss:partial-PTAs}

Finally, it is possible to perform a \emph{partial} extrapolation on any PTA~$\A$, by extrapolating only the clocks that are only compared to the set of bounded parameters $\Param_{bound}$ of~$\A$.
That is, for a given guard or invariant $\guard$ of the form $\clock \compOp \sum_{1 \leq i \leq \ParamCard} \alpha_i \param_i + z$, the maximum value $\maxCg(\guard) = \sum_{1 \leq i \leq \ParamCard} \alpha_i \gamma_i + z$ where
\begin{ienumerate}
	\item $\gamma_i = \PDomain^-(\param_i)$ if $\alpha_i < 0$,
	\item $\gamma_i = \PDomain^+(\param_i)$ if $\alpha_i > 0$, and
	\item $\gamma_i = 0$ otherwise.
\end{ienumerate}
Note that $\gamma_i$ may be $\infty$ or $-\infty$ if $\param_i$ is not an unbounded parameter.
As a result, the maximal constant of any clock $\clock_i \in \Clock$ compared to unbounded parameter is equal to~$\infty$.
Therefore, $\LargestC(\clock_i) \in \vec{\LargestC} = \infty$---which amounts to never applying extrapolation on $\clock_i$.
\ShortVersion{A (simple) formal result is given in \cite{AA22report}.}
\LongVersion{

	Let $\Clock_b$ denote the set of clocks compared to no unbounded parameter (\ie{} compared in guards and invariants only to constants or bounded parameters).

	Let $\vec{\LargestC_b} = \{ \LargestC(\clock_1), \dots, \LargestC(\clock_\ClockCard) \}$ such that $\LargestC(\clock_i)$ is the maximal constant of clock~$\clock_i$ (\ie{} $\infty$ if $\clock_i \notin \Clock_b$).
	Let $\Ext{\LargestC}{\Clock_b}$ denote the composition (in any order) of all $\Ext{\LargestC(\clock)}{\clock}$, for all $\clock \in \Clock_b$.
	Let \EEFp{} (``\stylealgo{p}'' stands for ``partial'') denote the modification of \EEF{} where $\Ext{\LargestC}{\Clock}$ is replaced with $\Ext{\LargestC}{\Clock_b}$ (\cref{algo:EEF:recursion} in \cref{algo:EEF}).

	\begin{proposition}\label{prop:EEFp:correctness}
		Let~$\A$ be a PTA with initial symbolic state~$\symbstateinit$, and $\somelocs \subseteq \Loc$ a set of target locations.
		Assume $\EEFp(\A, \symbstateinit, \somelocs, \emptyset)$ terminates.
		We have:
		\begin{enumerate}
			\item Soundness: If $\pval \in \EEFp(\A, \symbstateinit, \somelocs, \emptyset)$ then $\somelocs$ is reachable in $\valuate{\A}{\pval}$;
			\item Completeness: For all~$\pval$, if $\somelocs$ is reachable in $\valuate{\A}{\pval}$ then $\pval \in \EEFp(\A, \symbstateinit, \somelocs, \emptyset)$.
		\end{enumerate}
	\end{proposition}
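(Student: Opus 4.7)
The plan is to mimic the structure of the correctness proofs already given for \EEF{}, \EEFvect{} and \EEFhat{}, but to justify that the partial extrapolation $\Ext{\LargestC}{\Clock_b}$ is still a bisimulation-preserving abstraction. The crucial observation is that $\Clock_b$ contains exactly those clocks that are only ever compared (in the guards and invariants of~$\A$) to constants and \emph{bounded} parameters. For such a clock~$\clock$, every simple clock guard $\guard \in \SCGuards^\clock(\A)$ has the form $\clock \compOp \sum_i \alpha_i \param_i + z$ where each $\param_i$ with $\alpha_i \neq 0$ is bounded, hence $\maxCg(\guard)$ is a well-defined finite integer and $\maxC^{\clock}(\A) < \infty$. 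For $\clock \notin \Clock_b$, we set $\LargestC(\clock) = \infty$, in which case $\Ext{\infty}{\clock}(\C) = \C$ for every polyhedron $\C$ (since $\C \cap (\clock > \infty) = \emptyset$ and $\C \cap (\clock \leq \infty) = \C$), so the partial extrapolation acts as the identity on those clocks.

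With this observation, I would first establish the analogue of \cref{lemma:extbisim:Mvect} for the partial extrapolation: given a symbolic state $\symbstate$ of a PTA~$\A$, a parameter valuation~$\pval$, and any concrete state $(\loc,\clockval) \in \valuate{\Ext{\LargestC}{\Clock_b}(\symbstate)}{\pval}$, there exists $(\loc,\clockval') \in \valuate{\symbstate}{\pval}$ bisimilar to $(\loc,\clockval)$. The proof decomposes as follows. Since $\Ext{\LargestC}{\Clock_b}$ is the composition of $\Ext{\LargestC(\clock)}{\clock}$ for $\clock \in \Clock_b$ only, one can iterate the bisimilarity argument clock by clock exactly as in \cref{lemma:extbisim:Mvect}, using the fact that $\LargestC(\clock) \geq \maxC^{\clock}(\A)$ for each $\clock \in \Clock_b$. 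For clocks outside $\Clock_b$ there is nothing to prove because the extrapolation leaves the zone untouched. Commutativity of the component operations (the analogue of \cref{lemma:cylcommu}) still holds, so the order of composition is irrelevant.

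Once this bisimulation lemma is in place, the proof of \cref{prop:EEFp:correctness} is a verbatim copy of the proof of \cref{theorem:EEF:soundness}: soundness follows from the auxiliary statement on prefixes of the symbolic reachability tree (the statement of which we reuse, substituting $\Ext{\LargestC}{\Clock_b}$ for $\Ext{\LargestC}{\Clock}$), and completeness follows the same induction on run length, using \cref{lemma:next}, \cref{cor:next}, and the partial-extrapolation bisimulation lemma in place of \cref{lemma:extbisimeq}. The counterpart of \cref{lemma:ALR15-Lemma4}, namely $\valuate{\Ext{\LargestC(\clock)}{\clock}(\C)}{\pval} = \Ext{\LargestC(\clock)}{\clock}(\valuate{\C}{\pval})$, continues to hold clock-wise (trivially for $\clock \notin \Clock_b$, and by the original statement for $\clock \in \Clock_b$).

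I do not expect a real obstacle: the only subtlety is formally handling the ``$\LargestC(\clock) = \infty$'' case, but this is purely notational since the corresponding extrapolation operator degenerates to the identity. The argument is therefore essentially a restriction of the proof of \cref{prop:EEFvect:correctness} to the subset $\Clock_b$ of clocks for which a finite maximal constant is meaningful on an arbitrary (possibly unbounded) PTA.
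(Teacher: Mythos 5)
Your proposal is correct and follows essentially the same route as the paper, which simply observes that the proof is the same as for \cref{prop:EEFvect:correctness}: the partial extrapolation is the $\vec{\LargestC}$-extrapolation restricted to the clocks of $\Clock_b$ (where $\maxC^{\clock}(\A)$ is finite) and the identity elsewhere, so \cref{lemma:extbisim:Mvect} applies clock-wise and the rest of the argument of \cref{theorem:EEF:soundness} carries over unchanged. Your explicit handling of the degenerate $\LargestC(\clock)=\infty$ case is a useful elaboration of what the paper leaves implicit, but it is not a different argument.
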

	\begin{proof}
	The proof is the same as for \cref{prop:EEFvect:correctness}.
	\end{proof}

}
\begin{example}
	In \cref{figure:partial_ext}\LongVersion{ (with $\paramp$ being unbounded), which is a variation of \cref{figure:bounded_pta:pta} where $\paramp$ is now equal to $\clockx$ in the transition to $\loc_1$}, $\clockx$ is compared to\LongVersion{ the unbounded parameter}~$\paramp$ which is neither a lower bound nor an upper bound parameter.
	Therefore, this PTA is not in any of the previous classes on which it is possible to compute a constant bound.
	However, we can apply a partial extrapolation, \ie{} the extrapolation is only applied on~$\clocky$, for which there exists a maximal constant $\maxC^{\clocky}(\A) < \infty$.
	The analysis using \imitator{} returns quickly (in $< 0.1\,s$)
	the expected result $\paramp \in [0,1]$, while it cannot be solved with a standard exploration (\ie{} the algorithm would not terminate).
\end{example}

\LongVersion{%
	Of course, we have even less guarantee of termination in the case where only some clocks are extrapolated, but this can still help termination when comparing to the case without any extrapolation.
}

\section{Beyond reachability in bPTA+L and bPTA+U}\label{section:beyond}

We saw in \cref{section:M-Ex_unbounded} that it was possible to apply extrapolation on \LongVersion{unbounded PTAs, thanks to a result from \cite{BlT09}, notably }unbounded L-PTAs and U-PTAs with additional bounded parameters.
However, we only proved correctness \LongVersion{of this method }for reachability properties.
In this section, we study liveness\LongVersion{ and trace preservation properties}.

\LongVersion{\subsection{Liveness}\label{ss:liveness}}

In the context of unbounded parameters, the $\widehat{\LargestC}$-extrapolation cannot be used directly to check liveness properties, as it might produce false positives.
The U-PTA in \cref{figure:PTA:IM} exemplifies why the parametric extrapolation is not correct for cycle synthesis on unbounded PTAs. 
With this automaton, the state space is infinite with $\clocky$ growing without bound: after $i$ loops, we have $\clocky = \clockx + i \leq \paramp$.
The expected result of a cycle synthesis is $\KFalse$ (no valuation yields a cycle), but an exploration of the state space would not terminate.
If we try applying the $\widehat{\LargestC}$-extrapolation, we obtain $\LargestC(x)=1$ and $\LargestC(y)=522$ as greatest constants, computed using~$\overLargestP$ (\cref{ss:L-U}).
After $522$ loops, the valuation of $\clocky$ can be greater than $\LargestC(y)$, and we obtain a self-looping state where $\clocky > 522$ and $\paramp > 523$.
As a result, the $\widehat{\LargestC}$-extrapolation will synthesize a cycle for $\paramp > 523$, while there should be none. 
This behavior is due to the invariant $\clocky \leq \paramp$ being removed by the cylindrification of clock $\clocky$.
Note that this is not possible with bounded parameters (or general TAs) because any invariant $\clocky \leq t$, with $t$ a given constant, would necessarily contradict the constraint $\clocky > \LargestC$.
Indeed, $\LargestC$ being by definition the greatest constant of clock $\clocky$, $\LargestC \geq t$ and thus $\clocky > \LargestC \cap \clocky \leq t = \emptyset$.

\LongVersion{%
	Observe that the model in \cref{figure:PTA:IM} is a U-PTA.
	From~\cite[Theorem~6]{BlT09}, we know that there exists a maximal constant (similar to our~$\overLargestP$ computed in \cref{ss:L-U}) such that there exists no accepting cycle for any parameter valuation whenever the TA obtained from the U-PTA by valuating its parameters with~$\overLargestP$ yields no accepting cycle.
	This is not a contradiction with our example: in our method, we do not only use~$\overLargestP$ to valuate parameters, but we also apply extrapolation, which involves cylindrification (\cref{definition:x-extrapolation}).
	This is the cylindrification operator which is responsible for the incorrectness of the extrapolation.
}

A solution to fix that issue is to ensure the invariant is not ignored, by bounding~$\paramp$ by the constant~$\overLargestP$ ($522$ in this case).
In general, bounding all parameters by~$\overLargestP$ ensures no false positive are present, 
but might include false negative in the form of upper bounds (those we introduced to bound the parameters).
However, we know from~\cite[Theorems~3 and~6]{BlT09} that in an L-PTA or a U-PTA, if there is an infinite accepting run for a parameter valuation $\pval$ with $\pval(\paramp) \geq \overLargestP$, then this run exists for all valuations $\pval'$ with $\pval(\paramp) \geq \overLargestP$.
Therefore, in a U-PTA, the upper bound on~$\paramp$ can be removed on any results that contains ``$\paramp = \overLargestP$''.
This method can be applied on the classes of models on which we have defined a extrapolation using the constant bound $\overLargestP$ (\ie{} bPTA+L and{} bPTA+U).

In the case of our example from \cref{figure:PTA:IM}, this means constraining the model with $\paramp \leq 522$.
As a result, the $\widehat{\LargestC}$-extrapolation will synthesize no cycles, which is correct.
Now, imagine a model with the same constant bound over parameter $\overLargestP = 522$, but such that the expected result is $400 < \paramp$.
The $\widehat{\LargestC}$-extrapolation on the constrained model will synthesize $400 < \paramp \leq 522$---which contains $\paramp = 522$.
We can then remove the upper bound on $\paramp$ and obtain the correct result\LongVersion{, \ie{} $400 < \paramp$}.

\section{Experiments} \label{section:experiments}

We implemented 
\LongVersion{all aforementioned extrapolations}\ShortVersion{the aforementioned extrapolation}
 in \imitator{}~\cite{Andre21}; all operations on parametric zones are computed by polyhedra\LongVersion{l} operations, using PPL~\cite{BHZ08}.
We consider the full class of PTAs, over (potentially unbounded) rational-valued parameters.
We applied the extrapolation on the bPTA+L/bPTA+U subclass from \cref{ss:bPTA+L-U} when it was possible, and the partial $\vec{\LargestC}$-extrapolation from \cref{ss:partial-PTAs} otherwise (\ie{} extrapolation is applied to each clock whenever possible), %
to a library of standard PTA benchmarks~\cite{AMP21}.
\ShortVersion{We used}\LongVersion{Experiments were performed using} an Intel Core i5-4690K with \LongVersion{a clock rate of }4\,GHz.\footnote{%
	Source, benchmarks, raw results and full table are available \LongVersion{on the long-term archiving platform Zenodo }at \href{https://doi.org/10.5281/zenodo.5824264}{\nolinkurl{doi.org/10.5281/zenodo.5824264}}.
	We used a fork of \imitator{} 3.1 ``Cheese Artichoke'' extended with extrapolation functions (exact version: \href{https://github.com/imitator-model-checker/imitator/releases/tag/v3.1.0\%2Bextrapolation}{\texttt{v3.1.0+extrapolation}}).
}

We tabulate our results in 
\cref{table:experiments}%
.
The first and main outcome is the two lines for ``all models'' (in bold): on the entire benchmark set (119~models and 177~properties%
), the average execution time is 954\,s without extrapolation, and 824\,s with;
in addition, the normalized average (always taking 1 for the slowest of both algorithms and rescaling the second one accordingly) is 0.89 without and 0.91 with.
Both metrics are complementary, as the average favors models with large verification times, while normalized average gives the same weight to all models, including those of very small verification times.
The outcome is that the extrapolation decreases the average time by 14\,\%, and increases the normalized average time by 1.5\,\%, which remains near-to-negligible.
\LongVersion{%
	On the larger models ($> 5$\,s%
	), extrapolation allows for a very similar decrease of~14\,\% in average, and even a small decrease of 0.6\,\% for the normalized average time.
}

We only tabulate in 
\cref{table:experiments} %
results with the most significant difference, \ie{} with a gap of more than 1\,s with a ratio $\frac{\textit{min}}{\textit{max}} > 2$ (and only one property per model).
Put it differently, other models show little difference between both versions.
``\propEF{}'' %
denotes reachability %
synthesis%
;
``\propCycle{}'' %
denotes the synthesis of valuations leading to at least one infinite run%
.

Recall that, even on the most restrictive syntactic subclass of PTAs we considered (L-PTAs and U-PTAs), synthesis over rational-valued parameters is intractable, and therefore our algorithms (including with extrapolation) come with no guarantee of termination.
On the entire benchmarks set, 39~properties (over 33~models) do not terminate without extrapolation; this figure reduces to 33~properties (over~29 models) when applying extrapolation.
(No analysis terminating without extrapolation would lead to non-termination when adding extrapolation.)

On the models where there is a significant difference between with and without extrapolation, tabulated in \cref{table:experiments},
the extrapolation is sometimes significantly faster%
, sometimes significantly slower%
.
Most importantly, extrapolation allows termination of some so far unsolvable models%
.
The slower cases are due to the fact that our implementation in \imitator{} needs to keep each symbolic state \emph{convex}---this is required by the internal polyhedral structure.
Therefore, when a clock is extrapolated, this increases the number of states in the state space (a given extrapolated symbolic state can be potentially split into up to $2^{|\Clock|}$ new symbolic states via a single outgoing transition).
\begin{table}[tb]
\centering
\caption{%
	Execution times\LongVersion{ (in seconds)} for our experiments.
	\TO{} denotes an execution unfinished after 3,600 seconds.
	(We \LongVersion{therefore }use this value for means computation.)
	Normalized mean is the ratio to the worst execution times.
	Cells color represents the difference in performance for a given row: the lighter the better.
}
{\scriptsize
\setlength{\tabcolsep}{2pt} %
\begin{tabular}{ |l|l|r|r|r| }
\hline
\rowHeader{}
Model & Property & No extrapolation (s) & $\overline{\LargestC}$-extrapolation (s) \\
\hline
\stylebench{FischerPS08-4} & \propAGnot{} & \cellcolor{orange}10.6  & \cellcolor{orange!45}4.8 \\
\hline
\stylebench{FMTV\_2} & \propEF{} & \cellcolor{orange!33}0.7  & \cellcolor{orange}2.3 \\
\hline
\stylebench{fischerPAT3} & \propAGnot{} & \cellcolor{orange}1.9 & \cellcolor{orange!46}0.8 \\
\hline
\stylebench{SLAF14\_5} & \propAGnot{} & \cellcolor{orange!18}12.6  & \cellcolor{orange}74.4 \\
\hline
\stylebench{spsmall} & \propAGnot{} & \cellcolor{orange!2}0.4  & \cellcolor{orange}19.3 \\
\hline
\stylebench{SSLAF13\_test2} & \propAGnot{} & \cellcolor{orange}2869.8 & \cellcolor{orange!48}1399.1 \\
\hline
\stylebench{synthRplus} & \propEF{} & \cellTO{} & \cellcolor{orange!0}0.2 \\ %
\hline
\hline
\stylebench{Cycle1} & \propCycle{} & \cellTO{} &  \cellcolor{orange!0}0.001 \\
\hline
\stylebench{infinite-5} & \propCycle{} & \cellTO{} & \cellcolor{orange!0}0.006 \\
\hline
\stylebench{infinite-5\_6} & \propCycle{} & \cellTO{} & \cellcolor{orange!0}0.004 \\
\hline
\stylebench{exU\_noloop} & \propAccCycle{} & \cellcolor{orange!14}1.1 &  \cellcolor{orange}7.7 \\
\hline
\hline
\multicolumn{2}{|l|}{Mean (models from \cref{table:experiments} only)} & \cellcolor{orange}1572.5 & \cellcolor{orange!9}137.1 \\
\hline
\multicolumn{2}{|l|}{Normalized mean (models from \cref{table:experiments} only)} & \cellcolor{orange}0.697 & \cellcolor{orange!70}0.490 \\
\hline
\multicolumn{2}{|l|}{Mean (all models)} & \bfseries{}\cellcolor{orange}954.4 & \bfseries{}\cellcolor{orange!86}823.8 \\
\hline
\multicolumn{2}{|l|}{Normalized mean (all models)} & \bfseries{}\cellcolor{orange}0.891 & \bfseries{}\cellcolor{orange}0.905 \\
\hline
\end{tabular}
}

\label{table:experiments}
\end{table}

\LongVersion{%
	These experiments highlight the main drawback of our implementation, that is, extrapolated symbolic states have to be split into convex sub-states, sometimes ended up doing more computation in the process than without any extrapolation.
	\LongVersion{%
		(We will discuss it in the conclusion.)
	}%
	Despite that drawback, the extrapolation can still significantly decrease computation time.
	Furthermore, a main benefit of our extrapolation is that it can lead to a better termination, allowing to turn infinite state spaces into finite ones; this allows us to solve previously unsolvable benchmarks (\stylebench{synthRplus}, \stylebench{Cycle1}, \stylebench{infinite-5}, \stylebench{infinite-5\_6}).

}

All in all, our experiments suggest that, despite a few models (tabulated in \cref{table:experiments}) where the presence or absence of extrapolation has a significant difference of execution time, adding extrapolation remains overall harmless, with even an average decrease of 14\,\% in the execution time.
Most importantly, it allows to solve so far unsolvable benchmarks---which we consider as the main outcome.
This suggests to use extrapolation by default for \LongVersion{parameter }synthesis in PTAs\LongVersion{ using \imitator{}}.

\section{Conclusion and perspectives}\label{section:conclusion}

\LongVersion{
\subsection{Conclusion}
}

\LongVersion{In this paper, we}\ShortVersion{We} proposed several definitions of zone extrapolation for parametric \LongVersion{timed automata}\ShortVersion{TAs}.
\LongVersion{%
	We notably improve the parametric $\LargestC$-extrapolation from~\cite{ALR15} by allowing each clock to have its own bound and combining it with results from \cite{BlT09} in order to address unbounded subclasses of PTAs.
}%
We proposed a first implementation (in \imitator{}), and showed that, while extrapolation is harmless for most models, it can also decrease the computation time of larger models and, most importantly, can lead to termination (\LongVersion{with }exact synthesis) of previously unsolvable benchmarks.
Considering the difficulty of parameter synthesis for timed models, we consider it a non-trivial and promising step.

\LongVersion{
\subsection{Future works}
We now discuss future works.
}

A \LongVersion{main }limitation of our implementation \LongVersion{in \imitator{} }(discussed in \cref{section:experiments}) is that it only handles \emph{convex} \LongVersion{parametric }zones.
Using the non-convex polyhedral structures offered by PPL~\cite{BHZ08} may dramatically reduce the number of symbolic states.
However, they are much more costly than their convex counterparts---this should be experimentally compared.

Another perspective \LongVersion{on implementation }concerns the computation of the constant bounds $\overLargestP$, for which one needs to compute the number $R$ of clock regions.
Our current implementation uses its over-approximation~$\widehat{R}$.
Computing the actual number of clock regions before applying the extrapolation may considerably reduce the analysis time for larger models.

\LongVersion{%
	The main limitation of parametric extrapolation is that termination of synthesis for PTAs cannot be guaranteed, even for bounded PTAs.
	Although the motivation behind extrapolation is to replace infinite sequences by cycles, this is not possible for parameters converging towards a constant.
	A perspective would be to exhibit a subclass of PTAs for which it is possible to extrapolate on parameters themselves the constant towards which they converge.
}

Finally, we plan to go beyond this work by adapting the $\LargestCL\LargestCU$-extrapolation from~\cite{BBLP06} to PTAs, a theoretically coarser abstraction for which implementation is not trivial.
Algorithms from~\cite{HSW16} may prove useful to this purpose.

\section*{Acknowledgements}
The authors would like to thank
the reviewers for their comments,
and
Dylan Marinho for his help in providing the models and automation tools that were used for the benchmarking presented in this paper.

\ifdefined\VersionLong
	\newcommand{\CCIS}{Communications in Computer and Information Science}
	\newcommand{\ENTCS}{Electronic Notes in Theoretical Computer Science}
	\newcommand{\FAC}{Formal Aspects of Computing}
	\newcommand{\FundInf}{Fundamenta Informaticae}
	\newcommand{\FMSD}{Formal Methods in System Design}
	\newcommand{\IJFCS}{International Journal of Foundations of Computer Science}
	\newcommand{\IJSSE}{International Journal of Secure Software Engineering}
	\newcommand{\IPL}{Information Processing Letters}
	\newcommand{\JAIR}{Journal of Artificial Intelligence Research}
	\newcommand{\JLAP}{Journal of Logic and Algebraic Programming}
	\newcommand{\JLAMP}{Journal of Logical and Algebraic Methods in Programming} %
	\newcommand{\JLC}{Journal of Logic and Computation}
	\newcommand{\LMCS}{Logical Methods in Computer Science}
	\newcommand{\LNCS}{Lecture Notes in Computer Science}
	\newcommand{\RESS}{Reliability Engineering \& System Safety}
	\newcommand{\STTT}{International Journal on Software Tools for Technology Transfer}
	\newcommand{\TCS}{Theoretical Computer Science}
	\newcommand{\ToPNoC}{Transactions on Petri Nets and Other Models of Concurrency}
	\newcommand{\TSE}{{IEEE} Transactions on Software Engineering}
\else
	\newcommand{\CCIS}{CCIS}
	\newcommand{\ENTCS}{ENTCS}
	\newcommand{\FAC}{FAC}
	\newcommand{\FundInf}{FI}
	\newcommand{\FMSD}{FMSD}
	\newcommand{\IJFCS}{IJFCS}
	\newcommand{\IJSSE}{IJSSE}
	\newcommand{\IPL}{IPL}
	\newcommand{\JAIR}{JAIR}
	\newcommand{\JLAP}{JLAP}
	\newcommand{\JLAMP}{JLAMP}
	\newcommand{\JLC}{JLC}
	\newcommand{\LMCS}{LMCS}
	\newcommand{\LNCS}{LNCS}
	\newcommand{\RESS}{RESS}
	\newcommand{\STTT}{STTT}
	\newcommand{\TCS}{TCS}
	\newcommand{\ToPNoC}{ToPNoC}
	\newcommand{\TSE}{TSE}
\fi

\ifdefined\VersionLong
	\renewcommand*{\bibfont}{\small}
	\printbibliography[title={References}]
\else
	\bibliographystyle{splncs04} %
	\bibliography{extrapolation}
\fi

\end{document}